\keywords{Linear Temporal Logic, Operator-Precedence Languages,
    Model Checking, First-Order Completeness, Visibly Pushdown Languages,
    Input-Driven Languages}
\let\doteqrel\doteq
\renewcommand*{\doteq}{\mathbin{\doteqrel}}
\newcommand*{\lnext}{\ocircle}
\newcommand*{\lanext}{\ocircle_{\chi}}
\newcommand*{\lgnext}[1]{\ocircle^{#1}}
\newcommand*{\lganext}[1]{\chi_{F}^{#1}}
\newcommand*{\lback}{\circleddash}
\newcommand*{\laback}{\circleddash_{\chi}}
\newcommand*{\lgback}[1]{\circleddash^{#1}}
\newcommand*{\lgaback}[1]{\chi_{P}^{#1}}
\newcommand*{\lthrnext}{\mathit{CallThr}}
\newcommand*{\lguntil}[4]{#3 \mathbin{\mathcal{U}^{#1}_{#2}} #4}
\newcommand*{\luntil}[3]{#2 \mathbin{\mathcal{U}^{#1}} #3}
\newcommand*{\lluntil}[2]{\luntil{}{#1}{#2}}
\newcommand*{\lgsince}[4]{#3 \mathbin{\mathcal{S}^{#1}_{#2}} #4}
\newcommand*{\lsince}[3]{#2 \mathbin{\mathcal{S}^{#1}} #3}
\newcommand*{\llsince}[2]{\lsince{}{#1}{#2}}
\newcommand*{\lhyuntil}[2]{\luntil{\uparrow}{#1}{#2}}
\newcommand*{\lhysince}[2]{\lsince{\downarrow}{#1}{#2}}
\newcommand*{\lhtuntil}[2]{\luntil{\downarrow}{#1}{#2}}
\newcommand*{\lhtsince}[2]{\lsince{\uparrow}{#1}{#2}}
\newcommand*{\lcallsince}{\mathit{Scall}}
\newcommand*{\lfuntil}[3]{#2 \mathbin{\mathcal{U}(#1)} #3}
\newcommand*{\lglob}[1]{\square^{#1}}
\newcommand*{\llglob}{\square}
\newcommand*{\leven}[1]{\Diamond^{#1}}
\newcommand*{\ldnext}{\lnext^d}
\newcommand*{\lunext}{\lnext^u}
\newcommand*{\ldback}{\lback^d}
\newcommand*{\luback}{\lback^u}
\newcommand*{\lcnext}[1]{\chi_F^{#1}}
\newcommand*{\lcdnext}{\lcnext{d}}
\newcommand*{\lcunext}{\lcnext{u}}
\newcommand*{\lcback}[1]{\chi_P^{#1}}
\newcommand*{\lcdback}{\lcback{d}}
\newcommand*{\lcuback}{\lcback{u}}
\newcommand*{\lcduntil}[2]{{#1} \mathbin{\mathcal{U}_\chi^d} {#2}}
\newcommand*{\lcdsince}[2]{{#1} \mathbin{\mathcal{S}_\chi^d} {#2}}
\newcommand*{\lcuuntil}[2]{{#1} \mathbin{\mathcal{U}_\chi^u} {#2}}
\newcommand*{\lcusince}[2]{{#1} \mathbin{\mathcal{S}_\chi^u} {#2}}
\newcommand*{\lhnext}[1]{\lnext_H^{#1}}
\newcommand*{\lhback}[1]{\lback_H^{#1}}
\newcommand*{\lhdnext}{\lhnext{d}}
\newcommand*{\lhdback}{\lhback{d}}
\newcommand*{\lhduntil}[2]{{#1} \mathbin{\mathcal{U}_H^d} {#2}}
\newcommand*{\lhdsince}[2]{{#1} \mathbin{\mathcal{S}_H^d} {#2}}
\newcommand*{\lhunext}{\lhnext{u}}
\newcommand*{\lhuback}{\lhback{u}}
\newcommand*{\lhuuntil}[2]{{#1} \mathbin{\mathcal{U}_H^u} {#2}}
\newcommand*{\lhusince}[2]{{#1} \mathbin{\mathcal{S}_H^u} {#2}}
\newcommand*{\chain}{\chi}
\newcommand*{\powset}[1]{{\mathcal{P}(#1)}}
\newcommand*{\lsucc}{\operatorname{succ}}
\newcommand*{\prf}{\pi}
\newcommand*{\pr}{\mathrel{\prf}}
\DeclareMathOperator{\Theight}{h_\lthrow}
\newcommand*{\clos}[1]{\operatorname{Cl}({#1})}
\newcommand*{\ra}{\operatorname{Rc}}
\newcommand*{\invtau}{\tau^{-1}_{AP}}
\newcommand*{\first}{\operatorname{first}}
\newcommand*{\last}{\operatorname{last}}
\newcommand*{\rchild}{R_\Downarrow}
\newcommand*{\rsibl}{R_\Rightarrow}
\renewcommand*{\rparen}{R_\Uparrow}
\newcommand*{\rlsibl}{R_\Leftarrow}
\newcommand*{\uotrees}{\mathcal{T}}
\newcommand*{\xuntil}{$\mathcal{X}_\mathit{until}$}
\newcommand*{\xsnext}{\mathop{\ocircle_{\mathord{\Rightarrow}}}}
\newcommand*{\xsback}{\mathop{\ocircle_{\mathord{\Leftarrow}}}}
\newcommand*{\xsuntil}[2]{{\mathop{\Rightarrow}\left({#1}, {#2}\right)}}
\newcommand*{\xssince}[2]{{\mathop{\Leftarrow}\left({#1}, {#2}\right)}}
\newcommand*{\xcnext}{\mathop{\ocircle_{\mathord{\Downarrow}}}}
\newcommand*{\xcback}{\mathop{\ocircle_{\mathord{\Uparrow}}}}
\newcommand*{\xcuntil}[2]{{\mathop{\Downarrow}\left({#1}, {#2}\right)}}
\newcommand*{\xcsince}[2]{{\mathop{\Uparrow}\left({#1}, {#2}\right)}}
\newcommand*{\xsuntilo}{\mathop{\Rightarrow}}
\newcommand*{\xssinceo}{\mathop{\Leftarrow}}
\newcommand*{\xcuntilo}{\mathop{\Downarrow}}
\newcommand*{\xcsinceo}{\mathop{\Uparrow}}
\newcommand*{\vra}{\overrightarrow{v}}
\newcommand*{\vla}{\overleftarrow{v}}
\newcommand*{\lcall}{\mathbf{call}}
\newcommand*{\lret}{\mathbf{ret}}
\newcommand*{\lhandle}{\mathbf{han}}
\newcommand*{\lthrow}{\mathbf{exc}}
\newcommand{\mrk}[1]{{#1}'}
\newcommand{\oldstack}[3]{%
{\ifthenelse{\equal{#1}{1}}{%
\mrk{#2}
}%
{#2}}_{#3}%
}
\newcommand{\stack}[3]{%
[%
{\ifthenelse{\equal{#1}{1}}{%
\mrk{#2}
}%
{#2}}
{\ifthenelse{\equal{#1}{0}}{\ }{} }
{#3}%
]%
}
\newcommand{\tstack}[2]{%
[#1,\ #2]%
}
\newcommand{\tconfig}[3]{\langle #1, \ #2, \ #3 \rangle}
\newcommand{\transition}[1]{\stackrel {{#1}} \vdash}
\newcommand{\va}[1]{\stackrel{#1}{\longrightarrow}}
\newcommand{\vshift}[1]{\stackrel{#1}{\dashrightarrow}}
\newcommand{\ourpath}[1]{\stackrel{#1}{\leadsto}}
\newcommand{\flush}[1]{\stackrel{#1}{\Longrightarrow}}
\newcommand{\ochain}[3]{{}^{#1}[ #2 ]{}^{#3}}
\newcommand{\symb}[1]{\mathop{smb}(#1)}
\newcommand{\state}[1]{\mathop{st}(#1)}
\begin{document}

\title[A First-Order Complete Temporal Logic for Structured CFLs]%
{A First-Order Complete Temporal Logic for Structured Context-Free Languages}

\author[M. Chiari]{Michele Chiari\lmcsorcid{0000-0001-7742-9233}\rsuper{a}} 
\author[D. Mandrioli]{Dino Mandrioli\lmcsorcid{0000-0002-0945-5947}\rsuper{a}} 
\author[M. Pradella]{Matteo Pradella\lmcsorcid{0000-0003-3039-1084}\rsuper{a,b}} 

\address{\lsuper{a}DEIB, Politecnico di Milano, Italy}
\email{\{michele.chiari, dino.mandrioli, matteo.pradella\}@polimi.it}
\address{\lsuper{b}IEIIT, Consiglio Nazionale delle Ricerche}

\begin{abstract}
  The problem of model checking procedural programs has fostered much research
  towards the definition of temporal logics for reasoning on context-free
  structures. The most notable of such results are temporal logics on Nested
  Words, such as CaRet and NWTL\@. Recently, the logic OPTL was introduced, based
  on the class of Operator Precedence Languages (OPLs), more powerful than Nested
  Words. We define the new OPL-based logic POTL and prove its FO-completeness.
   POTL improves on
  NWTL by enabling the formulation of requirements involving
  pre/post-conditions, stack inspection, and others in the presence of
  exception-like constructs. It improves on OPTL too, which instead we show not to be FO-complete;
  it also allows to
  express more easily stack inspection and function-local properties.
 In a companion paper we report a model checking procedure for POTL and experimental results based on a prototype tool developed therefor. For completeness a short summary of this complementary result is provided in this paper too.
\end{abstract}

\maketitle

\section{Introduction}%
\label{sec:intro}

Linear Temporal Logic (LTL) is one of the most successful languages
for the specification and verification of system requirements.  Being defined on a
linearly ordered algebraic structure, it is ideal
to express safety and liveness properties on a linear flow of
events.  In particular, LTL is equivalent to First-Order Logic
(FOL) on this structure~\cite{Kamp68} and captures the first-order definable
fragment of ($\omega$-)regular languages.
Its satisfiability and model checking are PSPACE-complete
with respect to formula length but polynomial in model size.

The need for model checking software programs lead to several attempts
at widening the expressive power of specifications beyond LTL\@.
Indeed, procedural programs are often modeled with operational formalisms
such as Boolean programs~\cite{BallR00}, Pushdown Systems, and Recursive State
Machines~\cite{AlurBEGRY05}.  Reachability and LTL model checking have
been extensively studied for such systems~\cite{BouajjaniEM97,EsparzaHRS00,AlurBEGRY05}.
However, they present behaviors typical of pushdown automata.
Thus, many properties thereof cannot be expressed by a formalism
whose expressive power is limited to regular languages.
Some early attempts at extending the expressive power of specifications
beyond regular languages in model checking include
equipping LTL with Presburger arithmetic~\cite{BouajjaniH96},
using directly a class of pushdown automata for the specification~\cite{KupfermanPV02b},
and a variant of Propositional Dynamic Logic that can express a restricted class
of Context-Free Languages (CFLs)~\cite{HarelKT02}.

A natural way of generalizing LTL is defining temporal logics on more complex algebraic structures.
The introduction of logics based on \emph{Nested Words}~\cite{jacm/AlurM09},
such as CaRet~\cite{AlurEM04}, has been one of the first attempts in this direction.
Nested Words consist of a discrete, linear order with the addition of a one-to-one
nesting relation. They are equivalent to Visibly Pushdown Languages (VPLs)~\cite{AluMad04},
a.k.a.\ Input-Driven Languages~\cite{DBLP:conf/icalp/Mehlhorn80},
a strict subclass of deterministic CFLs.
The parenthetical nature of VPLs results in the nesting relation on words,
which naturally models the one-to-one correspondence between
function calls and returns in procedural programs.
CaRet is the first temporal logic that extends LTL through explicit modalities
that refer to the underlying model's context-free structure.
Thus, CaRet can easily express Hoare-style pre/post-conditions, stack inspection, and more.

The applications of Nested Words in model checking lead to the
question of which theoretical properties of LTL could be extended to them.
One of the most prominent is First-Order (FO) completeness.
CaRet was devised with the specification of procedural programs in
mind, but its stance with respect to FOL remains unknown, although it
is conjectured not to be FO-complete~\cite{lmcs/AlurABEIL08}.
FO-completeness on Nested Words is thoroughly studied in~\cite{lmcs/AlurABEIL08},
and is achieved with Nested Words Temporal Logic (NWTL) and other logics.
NWTL satisfiability and model checking are EXPTIME-complete
and retain polinomiality on system size.

One natural further question is whether it is possible to define
temporal logics that capture a fragment of CFLs wider than VPLs.
Several practical applications motivate this question: the nesting
relation of Nested Words is one-to-one~\cite{MP18}, preventing its ability to model
behaviors in which a single event is related to multiple ones.
Such behaviors occur, e.g., in widespread programming constructs
such as exceptions, interrupts and continuations.

OPTL~\cite{ChiariMP20a} is a recent temporal logic that can express specifications concerning,
e.g., whether a function is terminated by an exception, or throws one, and pre/post-conditions.
OPTL is based on Operator Precedence Languages (OPLs),
which are a subclass of deterministic CFLs initially introduced
with the purpose of efficient parsing~\cite{Floyd1963},
a field in which they continue to offer useful applications~\cite{GruneJacobs:08,BarenghiEtAl2015}.
They strictly contain VPLs~\cite{CrespiMandrioli12} and retain all complexity, closure
and decidability properties that make regular languages and VPLs well-suited for temporal logics:
they are closed under Boolean operations, concatenation, Kleene~*,
and language emptiness and inclusion are decidable~\cite{CrespiMandrioli12}.
They have also been characterized through Monadic Second-Order Logic (MSO)~\cite{LonatiEtAl2015}.
This greater expressive power results in OPTL being based on a structure called
\emph{OP word}, made of a linear order plus a nesting relation that can be
one-to-many and many-to-one, and considerably generalizes that of Nested Words.
Indeed, OPTL is strictly more expressive than NWTL, but retains the
same satisfiability and model checking complexity~\cite{ChiariMP20a}.
However, the relationship between OPTL and FOL remained unknown.

In this paper, we close this gap thanks to the novel temporal logic POTL
(Precedence-Oriented Temporal Logic).
POTL too is based on OPLs and is devised to easily navigate a word's underlying syntax tree (ST).
It is also based on OP words, enabling reasoning
on constructs such as exceptions, but has a better ability to
express properties on the context-free structure of words.
It features new until and since modalities based on \emph{summary} paths,
which can navigate a word's ST up or down while skipping
subtrees, and \emph{hierarchical} paths, which move between word
positions where the nesting relation is many-to-one or one-to-many.

The gap in expressive power between OPTL and POTL is both practical and theoretical.
In fact, in POTL it is easier to express stack inspection properties
in the presence of uncaught exceptions, as well as function-frame local properties.
We show that one of such properties is not expressible at all in OPTL,
but it is in POTL, and hence POTL is strictly more expressive than OPTL\@.

Moreover, we prove that POTL is equivalent to FOL on both OP finite and $\omega$-words.
We obtain the result on finite words by translating an FO-complete
logic for finite trees~\cite{Marx2004} to POTL\@.
We then extend the proof to $\omega$-words by employing a composition argument on trees
proved with Ehrenfeucht-Fra\"{\i}ss\'e games.
This last proof is rather involved, as it needs to distinguish
between two classes of trees that arise from OP $\omega$-words.

As a corollary, we show that FOL has the three-variable property on OP words.
I.e., every FO formula on OP words is equivalent to another
FOL formula in which only three distinct variables appear.
This property is related to FO-completeness in temporal
logics~\cite{Gabbay81} and holds both on simple Dedekind-complete
linear orders and Nested Words~\cite{lmcs/AlurABEIL08}.

Recently, we also showed~\cite{MPC20,MPC20-arXiv} that FO-definable OPLs coincide with aperiodic or noncounting ones,
according to a definition of aperiodic structured CFLs which naturally extends the classic one for regular languages~\cite{McNaughtonPapert71}%
\footnote{Noticeably, this equivalence does not hold for tree-languages and their FO-logic~\cite{DBLP:conf/caap/Thomas84,DBLP:journals/ita/Heuter91}.}.
Whereas various regular languages of practical usage are counting ---e.g., many hardware devices are just counters modulo some integer--- we are not aware of programming languages or other CFLs of practical interest, such as structured data description languages, that exhibit counting properties:
e.g., no language imposes to write programs with an even number of nested loops. Thus, the breadth of coverage in terms of practical applications of model-checking algorithms for an FO-complete logic is even larger for the class of CFLs than that of classic model checkers for regular languages.

POTL's expressiveness gains do not come at the cost of higher
model checking complexity, which remains EXPTIME-complete, as we show in~\cite{ChiariMP21}.

The paper is organized as follows: Section~\ref{sec:opl} provides some background on OPLs;
Section~\ref{sec:potl} presents the syntax and semantics of POTL,
with some qualitative demonstration of its expressive power and comparison with similar logics,
and the proof of the expressive \emph{in}completeness of OPTL\@;
Section~\ref{sec:fo-completeness} proves the equivalence of POTL to FOL on finite words;
Section~\ref{sec:omega-fo-completeness} extends this result to $\omega$-words;
Section~\ref{sec:mc} summarizes the results of the companion paper~\cite{ChiariMP21} where a model checker for OPLs based on POTL is provided, its complexity evaluated, and experimental results are also reported;
Section~\ref{sec:conclusion} concludes and proposes future research lines.
Appendices~\ref{sec:chain-prop-proofs} and~\ref{sec:expansion-proofs} contain proofs
that did not fit into the main text.

We assume familiarity with classic topics of theoretical computer science, specifically context-free grammars and languages, pushdown automata, parsing, syntax tree (ST) (cf.~\cite{Harrison78}), $\omega$-languages, i.e., languages with infinite words, and temporal logic ---LTL in particular---~\cite{Emerson90}.

\section{Operator Precedence Languages}%
\label{sec:opl}

Operator Precedence Languages (OPLs) were originally defined through their generating grammars~\cite{Floyd1963}: \emph{operator precedence grammars} (OPGs) are a special class of context-free grammars (CFGs) in \emph{operator normal form} ---i.e., grammars in which right-hand sides (rhs) of production rules contain no consecutive non-terminals\footnote{Every CFG can be effectively transformed into an equivalent one in operator form~\cite{Harrison78}.}---. As a consequence, the syntax trees generated by such grammars never exhibit two consecutive internal nodes.

The distinguishing feature of OPGs is that they define three \emph{precedence relations} (PR) between pairs of input symbols which drive the deterministic parsing and therefore the construction of a unique ST, if any, associated with an input string. For this reason we consider OPLs a kind of input-driven languages, but larger then the original VPLs.
The three PRs are denoted by the symbols $\lessdot, \doteq, \gtrdot$ and are respectively named
\emph{yields precedence, equal in precedence}, and \emph{takes precedence}.
They graphically resemble the traditional arithmetic relations but do not share their typical ordering and equivalence properties;  we kept them for ``historical reasons'', but we recommend the reader not to be confused by the similarity.

Intuitively, given two input characters $a, b$ belonging to a grammar's \emph{terminal alphabet} separated by at most one non-terminal,
$a \lessdot b$ iff in some grammar derivation $b$ is the first terminal character of a grammar's rhs following $a$ whether the grammar's rule contains a non-terminal character before $b$ or not
(for this reasons we also say that non-terminal characters are ``transparent'' in OPL parsing);
 $a \doteq b$ iff $a$ and $b$ occur consecutively in some rhs, possibly separated by one non-terminal;
 $a \gtrdot b$ iff $a$ is the last terminal in a rhs ---whether followed or not by a non-terminal---, and $b$ follows that rhs in some derivation.
The following example provides a first intuition of how a set of \emph{unique} PRs drives the parsing of a string of terminal characters in a deterministic way; subsequently the above concepts are formalized.

\begin{exa}
Consider the alphabet of terminal symbols $\Sigma = \{ \lcall, \lret, \lhandle, \lthrow   \}$: as the chosen identifiers suggest, $\lcall$ represents the fact that a procedure call occurs, $\lret$ represents the fact that a procedure terminates normally and returns to its caller, $\lthrow$  that an exception is raised and $\lhandle$ that an exception handler is installed.
We want to implement a policy such that an exception aborts all the pending calls up to the point where an appropriate handler is found in the stack, \emph{if any}; after that, execution is resumed normally.
Calls and returns, as well as possible pairing of handlers and exceptions are managed according to the usual LIFO policy. The alphabet symbols are written in boldface for reasons that will be explained later but are irrelevant for this example.

The above policy is implemented by the PRs described in Figure~\ref{fig:opm-mcall} which displays the PRs through a square matrix, called \emph{operator precedence matrix (OPM)}, where the element of row $i$ and column $j$ is the PR between the symbol labeling row $i$ and that of column $j$.
We also add the special symbol $\#$ which is used as a string delimiter and state the convention that all symbols of $\Sigma$ yield precedence to, and take precedence over it.

\begin{figure}[tb]
\[
\begin{array}{r | c c c c} 
         & \lcall   & \lret   & \lhandle & \lthrow \\
\hline 
\lcall   & \lessdot & \doteq  & \lessdot & \gtrdot \\
\lret    & \gtrdot  & \gtrdot & \gtrdot  & \gtrdot \\
\lhandle & \lessdot & \gtrdot & \lessdot & \doteq \\
\lthrow  & \gtrdot  & \gtrdot & \gtrdot  & \gtrdot \\
\end{array}
\]
\caption{The OPM $M_\lcall$.}%
\label{fig:opm-mcall}
\end{figure}

Let us now see how the OPM of Figure~\ref{fig:opm-mcall}, named $M_\lcall$, drives the construction of a unique ST associated to a string on the alphabet $\Sigma$ through a typical bottom-up parsing algorithm.
We will see that the shape of the obtained ST depends only on the OPM and not on the particular grammar exhibiting the OPM\@.
Consider the sample word
$w_\mathit{ex} =
 \lcall \ \allowbreak
 \lhandle \ \allowbreak
 \lcall \ \allowbreak
 \lcall \ \allowbreak
 \lcall \ \allowbreak
 \lthrow \ \allowbreak
 \lcall \
 \lret \
 \lcall \
 \lret \
 \lret$.
First, add the delimiter $\#$ at its boundaries and write all precedence relations between consecutive characters,
according to $M_{\lcall}$. The result is row $0$ of Figure~\ref{fig:parsing}.

Then, select all innermost patterns of the form
$a \lessdot c_1 \doteq \dots \doteq c_\ell \gtrdot b$.
In row $0$ of Figure~\ref{fig:parsing} the only such pattern is the underscored $\lcall$ enclosed within the pair ($\mathord{\lessdot}, \mathord{\gtrdot}$).
This means that the ST we are going to build, if it exists, must contain an internal node with the terminal character $\lcall$ as its only child.
We mark this fact by replacing the pattern $\lessdot \underline{\lcall} \gtrdot$ with a dummy non-terminal character, say $N$ ---i.e., we \emph{reduce} $\underline{\lcall}$ to $N$---.
The result is row $1$ of Figure~\ref{fig:parsing}.

Next, we apply the same labeling to row $1$ by simply ignoring the presence of the dummy symbol $N$ and we find a new candidate for reduction, namely the pattern $\lessdot \underline{\lcall} \ N \gtrdot$.
Notice that there is no doubt on building the candidate rhs as $\lessdot \underline{\lcall} \ N \gtrdot$:
if we reduced just the $\underline{\lcall}$ and replaced it by a new $N$, we would produce two adjacent internal nodes, which is impossible since the ST must be generated by a grammar in operator normal form.

By skipping the obvious reduction of row $2$, we come to row $3$. This time the terminal characters to be reduced, again, underscored, are two, with an $\doteq$ and an $N$ in between.
This means that they embrace a subtree of the whole ST whose root is the node represented by the dummy symbol $N$.
By executing the new reduction leading from row $3$ to $4$ we produce a new $N$ immediately to the left of a $\lcall$ which is matched by an equal in precedence $\lret$.
Then, the procedure is repeated until the final row $7$ is obtained, where, by convention we state the $\doteq$ relation between the two delimiters.

Given that each reduction applied in Figure~\ref{fig:parsing} corresponds to a derivation step of a grammar and to the expansion of an internal node of the corresponding ST, it is immediate to realize that the ST of $w_\mathit{ex}$ is the one depicted in Figure~\ref{fig:sttree-example}, where the terminal symbols have been numbered according to their occurrence ---including the conventional numbering of the delimiters--- for future convenience, and labeling internal nodes has been omitted as useless.

\begin{figure}
\[
\begin{array}{l | l} 
0 & \# \lessdot \lcall \lessdot \lhandle \lessdot \lcall \lessdot \lcall \lessdot \underline{\lcall} \gtrdot \lthrow \gtrdot \lcall \doteq \lret \gtrdot \lcall \doteq \lret \gtrdot \lret \gtrdot \# \\
1 & \# \lessdot \lcall \lessdot \lhandle \lessdot \lcall \lessdot \underline{\lcall} \ N \gtrdot \lthrow \gtrdot \lcall \doteq \lret \gtrdot \lcall \doteq \lret \gtrdot \lret \gtrdot \# \\
2 & \# \lessdot \lcall \lessdot \lhandle \lessdot \underline{\lcall} \ N \gtrdot \lthrow \gtrdot \lcall \doteq \lret \gtrdot \lcall \doteq \lret \gtrdot \lret \gtrdot \# \\
3 & \# \lessdot \lcall \lessdot \underline{\lhandle} \doteq N \ \underline{\lthrow} \gtrdot \lcall \doteq \lret \gtrdot \lcall \doteq \lret \gtrdot \lret \gtrdot \# \\
4 & \# \lessdot \lcall \lessdot N \ \underline{\lcall} \doteq  \underline{\lret} \gtrdot \lcall \doteq \lret \gtrdot \lret \gtrdot \# \\
5 & \# \lessdot \lcall \lessdot N \ \underline{\lcall} \doteq  \underline{\lret} \gtrdot \lret \gtrdot \# \\
6 & \# \lessdot \underline{\lcall} \doteq N \ \underline{\lret} \gtrdot \# \\
7 & \# \doteq N \# \\
\end{array}
\]
\caption{The sequence of bottom-up reductions during the parsing of $w_\mathit{ex}$.}%
\label{fig:parsing}
\end{figure}

\begin{figure}
\includegraphics{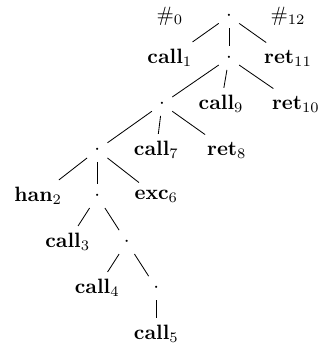}
\caption{The ST corresponding to word $w_\mathit{ex}$.
  Dots represent non-terminals.}%
\label{fig:sttree-example}
\end{figure}
\end{exa}

\begin{rems}
The tree of Figure~\ref{fig:sttree-example} emphasizes the main difference between various types of parenthesis-like languages, such as VPLs, and OPLs: whereas in the former ones every open parenthesis is consumed by the only corresponding closed one\footnote{To be precise, VPLs allow for unmatched closed parentheses but only at the beginning of a string and unmatched open ones at the end.}, in our example a $\lcall$ can be matched by the appropriate $\lret$ but can also be ``aborted'' by an $\lthrow$ which in turn aborts all pending $\lcall$s until its corresponding $\lhandle$ ---\emph{if any}--- is found.
\end{rems}

Thus, an OPM defines a \emph{universe} of strings on the given alphabet that can be parsed according to it and assigns a unique ST ---with unlabeled internal nodes--- to each one of them.
Such a universe is the whole $\Sigma^*$ iff \emph{the OPM is complete}, i.e.\ it has no empty cells, including those of the implicit row and column referring to the delimiters.
In the early literature about OPLs, e.g.,~\cite{Floyd1963,Crespi-ReghizziMM1978} OPGs sharing a given OPM were used to define restricted languages w.r.t.\ the universe defined by the OPM and their algebraic properties have been investigated.
Later on the same operation has been defined by using different formalisms such as pushdown automata, monadic second order logic, and suitable extensions of regular expressions. In this paper we refer to the use automata and temporal logic, which are typical of model checking.
As a side remark we mention that, in general, it may happen that in the same string there are several patterns ready to be reduced;
this could enable the implementation of parallel parsing algorithms (see e.g.,~\cite{BarenghiEtAl2015}) which however is not an issue of interest in this paper.

We now state the basics of OPLs needed for this paper in a formal way.
Let $\Sigma$ be a finite alphabet, and $\varepsilon$ the empty string.
We use the special symbol $\# \not\in \Sigma$ to mark the beginning and
the end of any string.

\begin{defi}\label{def:opm}
  An \textit{operator precedence matrix} (OPM) $M$ over $\Sigma$ is a partial function
  $(\Sigma \cup \{\#\})^2 \to \{\lessdot, \allowbreak \doteq, \allowbreak \gtrdot\}$,
  that, for each ordered pair $(a,b)$, defines the \emph{precedence relation} $M(a,b)$
  holding between $a$ and $b$. If the function is total we say that $M$ is \emph{complete}.
  We call the pair $(\Sigma, M)$ an \emph{operator precedence alphabet}.
  By convention, the initial $\#$ can only yield precedence to other symbols, and other
  symbols can only take precedence on the ending $\#$.

  If $M(a,b) = \prf$, where $\prf \in \{\lessdot, \doteq, \gtrdot\}$,
  we write $a \pr b$.  For $u,v \in (\Sigma \cup \{\#\})^+$ we write $u \pr v$ if
  $u = xa$ and $v = by$ with $a \pr b$.
\end{defi}

The next concept of \emph{chain} makes the connection between OP relations and
ST structure explicit, through brackets.

\begin{defi}%
\label{def:chain}
A \emph{simple chain}
$
\ochain {c_0} {c_1 c_2 \dots c_\ell} {c_{\ell+1}}
$,
with $\ell \geq 1$,
is a string $c_0 c_1 c_2 \dots c_\ell c_{\ell+1}$,
such that:
$c_0, \allowbreak c_{\ell+1} \in \Sigma \cup \{\#\}$,
$c_i \in \Sigma$ for every $i = 1,2, \dots \ell$,
and $c_0 \lessdot c_1 \doteq c_2 \dots c_{\ell-1} \doteq c_\ell
\gtrdot c_{\ell+1}$.

A \emph{composed chain} is a string
$c_0 s_0 c_1 s_1 c_2  \dots c_\ell s_\ell c_{\ell+1}$,
where
$\ochain {c_0}{c_1 c_2 \dots c_\ell}{c_{\ell+1}}$ is a simple chain, and
$s_i \in \Sigma^*$ is either the empty string
or is such that $\ochain {c_i} {s_i} {c_{i+1}}$ is a chain (simple or composed),
for every $i = 0,1, \dots, \ell$ ($\ell \geq 1$).
Such a composed chain will be written as
$\ochain {c_0} {s_0 c_1 s_1 c_2 \dots c_\ell s_\ell} {c_{\ell+1}}$.

In a chain, simple or composed, $c_0$ (resp.\ $c_{\ell+1}$) is called its \emph{left} (resp.\ \emph{right}) \emph{context};
all terminals between them are called its \emph{body}.

A finite word $w$ over $\Sigma$ is \emph{compatible} with an OPM $M$ iff
for each pair of letters $c, d$, consecutive in $w$, $M(c, d)$ is defined and,
for each substring $x$ of $\# w \#$ which is a chain of the form $^a[y]^b$,
$M(a, b)$ is defined.
For a given operator precedence alphabet $(\Sigma, M)$ the set of all words  compatible with $M$ is called the \emph{universe} of the operator precedence alphabet.
\end{defi}

The chain below is the chain defined by the OPM $M_\lcall$ of
Figure~\ref{fig:opm-mcall} for the word $w_\mathit{ex}$. It shows the natural isomorphism between STs with unlabeled internal nodes (see Figure~\ref{fig:sttree-example}) and chains.
\[
\# [ \lcall [ [ [ \lhandle
[ \lcall [ \lcall [ \lcall ] ] ]
\lthrow ] \lcall \; \lret ] \lcall \; \lret ] \lret ] \#
\]

Note that, in composed chains, consecutive inner chains are always separated by an input symbol: this is due to the fact that OPL strings are generated by grammars in operator normal form.

Next we introduce operator precedence automata as pushdown machines suitable to carve specific OPLs within the universe of an operator precedence alphabet.

\begin{defi}\label{def:OPA}
An \emph{operator precedence automaton (OPA)} is a tuple
$\mathcal A = (\Sigma, \allowbreak M, \allowbreak Q, \allowbreak I, \allowbreak F, \allowbreak \delta) $ where:
$(\Sigma, M)$ is an operator precedence alphabet,
$Q$ is a finite set of states (disjoint from $\Sigma$),
$I \subseteq Q$ is the set of initial states,
$F \subseteq Q$ is the set of final states,
$\delta \subseteq Q \times (\Sigma \cup Q) \times Q$ is the transition relation,
which is the union of the three disjoint relations
$\delta_{\mathit{shift}}\subseteq Q \times \Sigma \times Q$,
$\delta_{\mathit{push}}\subseteq Q \times \Sigma \times Q$,
and
$\delta_{\mathit{pop}}\subseteq Q \times Q \times Q$.
An OPA is deterministic iff $I$ is a singleton,
and all three components of $\delta$ are ---possibly partial--- functions.
\end{defi}

To define the semantics of OPA, we need some new notations.
Letters $p, q, p_i, \allowbreak q_i, \dots$ denote states in $Q$.
We use
$q_0 \va{a}{q_1}$ for $(q_0, a, q_1) \in \delta_{\mathit{push}}$,
$q_0 \vshift{a}{q_1}$ for $(q_0, a, q_1) \in \delta_{\mathit{shift}}$,
$q_0 \flush{q_2}{q_1}$  for $(q_0, q_2, q_1) \in  \delta_{\mathit{pop}}$,
and ${q_0} \ourpath{w} {q_1}$, if the automaton can read $w \in \Sigma^*$ going from $q_0$ to $q_1$.
Let $\Gamma$ be $\Sigma \times Q$ and $\Gamma' = \Gamma \cup \{\bot\}$
be the \textit{stack alphabet};
we denote symbols in $\Gamma$ as $\tstack aq$.
We set $\symb {\tstack aq} = a$, $\symb {\bot}=\#$, and
$\state {\tstack aq} = q$.
For a stack content $\gamma = \gamma_n \dots \gamma_1 \bot$,
with $\gamma_i \in \Gamma$, $n \geq 0$,
we set $\symb \gamma = \symb{\gamma_n}$ if $n \geq 1$, and $\symb \gamma = \#$ if $n = 0$.

A \emph{configuration} of an OPA is a triple $c = \tconfig w q \gamma$,
where $w \in \Sigma^*\#$, $q \in Q$, and $\gamma \in \Gamma^*\bot$.
A \emph{computation} or \emph{run} is a finite sequence
$c_0 \transition{} c_1 \transition{} \dots \transition{} c_n$
of \emph{moves} or \emph{transitions}
$c_i \transition{} c_{i+1}$.
There are three kinds of moves, depending on the PR between the symbol
on top of the stack and the next input symbol:

\begin{description}
\item[push move] if $\symb \gamma \lessdot a$ then
$\tconfig {ax} p \gamma \transition{} \tconfig {x} q {\tstack a p \gamma}$,
with $(p,a, q) \in \delta_{\mathit{push}}$;

\item[shift move] if $a \doteq b$ then
$\tconfig {bx} q {\tstack a p \gamma} \transition{} \tconfig x r {\tstack b p \gamma}$,
with $(q,b,r) \in \delta_{\mathit{shift}}$;

\item[pop move] if $a \gtrdot b$ then
$\tconfig {bx} q {\tstack a p \gamma} \transition{} \tconfig {bx} r \gamma$,
with $(q, p, r) \in \delta_{\mathit{pop}}$.
\end{description}

Shift and pop moves are not performed when the stack contains only $\bot$.
Push moves put a new element on top of the stack consisting of the input symbol together with the current state of the OPA\@.
Shift moves update the top element of the stack by \textit{changing its input symbol only}.
Pop moves remove the element on top of the stack,
and update the state of the OPA according to $\delta_{\mathit{pop}}$ on the basis of the current state and the state in the removed stack symbol.
They do not consume the input symbol, which is used only as a look-ahead to establish the $\gtrdot$ relation.
The OPA accepts the language
\(
L(\mathcal A) = \left\{ x \in \Sigma^* \mid  \tconfig {x\#} {q_I} {\bot} \vdash ^*
\tconfig {\#} {q_F}{\bot} , \allowbreak q_I \in I, \allowbreak q_F \in F \right\}.
\)

\begin{defi}
Let $\mathcal A$ be an OPA\@.
We call a \emph{support} for the simple chain
$\ochain {c_0} {c_1 c_2 \dots c_\ell} {c_{\ell+1}}$
any path in $\mathcal A$ of the form
$q_0
\va{c_1}{q_1}
\vshift{}{}
\dots
\vshift{}q_{\ell-1}
\vshift{c_{\ell}}{q_\ell}
\flush{q_0} {q_{\ell+1}}$.
The label of the last (and only) pop is exactly $q_0$, i.e.\ the first state of the path;
this pop is executed because of relation $c_\ell \gtrdot c_{\ell+1}$.

We call a \emph{support for the composed chain}
$\ochain {c_0} {s_0 c_1 s_1 c_2 \dots c_\ell s_\ell} {c_{\ell+1}}$
any path in $\mathcal A$ of the form
\(
q_0
\ourpath{s_0}{q'_0}
\va{c_1}{q_1}
\ourpath{s_1}{q'_1}
\vshift{c_2}{}
\dots
\vshift{c_\ell} {q_\ell}
\ourpath{s_\ell}{q'_\ell}
\flush{q'_0}{q_{\ell+1}}
\)
where, for every $i = 0, 1, \dots, \ell$:
if $s_i \neq \varepsilon$, then $q_i \ourpath{s_i}{q'_i} $
is a support for the chain $\ochain {c_i} {s_i} {c_{i+1}}$, else $q'_i = q_i$.
\end{defi}

Chains fully determine the parsing structure of any
OPA over $(\Sigma, M)$. If the OPA performs the computation
$
\langle sb, q_i, [a, q_j] \gamma \rangle \vdash^*
\langle b,  q_k, \gamma \rangle
$,
then $\ochain asb$
is necessarily a chain over $(\Sigma, \allowbreak M)$, and there exists a support
like the one above with $s = s_0 c_1 \dots c_\ell s_\ell$ and $q_{\ell+1} = q_k$.
This corresponds to the parsing of the string $s_0 c_1 \dots c_\ell s_{\ell}$ within the
context $a, b$, which contains all
information needed to build the subtree whose frontier is that string.

Consider the OPA
$\mathcal{A}(\Sigma, M) = (\Sigma, M, \{q\}, \{q\}, \{q\}, \delta_{max})$
where $\delta_{max}(q,q) = q$, and $\delta_{max}(q,c) = q$, $\forall c \in \Sigma$.
We call it the \emph{OP Max-Automaton} over $(\Sigma, M)$.
For a max-automaton, each chain has a support;
thus, a max-automaton accepts exactly the universe of the operator precedence alphabet.
If $M$ is complete, the language accepted by $\mathcal A(\Sigma, M)$ is $\Sigma^*$.
With reference to the OPM $M_{\lcall}$ of Figure~\ref{fig:opm-mcall}, the string
$
\lret \
\lcall \
\lhandle
$ is accepted by the max-automaton with structure defined by the chain
$
\#[[\lret]
\lcall
[\lhandle]]
\#.
$

In conclusion, given an OP alphabet, the OPM $M$ assigns a unique structure
to any compatible string in $\Sigma^*$;
unlike VPLs, such a structure is not visible in the string,
and must be built by means of a non-trivial parsing algorithm.
An OPA defined on the OP alphabet selects an appropriate subset within the
universe of the OP alphabet.
OPAs form a Boolean algebra whose universal element is the max-automaton.
The language classes recognized by deterministic and non-deterministic OPAs coincide.
For a more complete description of the OPL family and of its relations with other CFLs
we refer the reader to~\cite{MP18}.

\begin{figure}[tb]
\centering
\begin{tabular}{p{0.1\textwidth} p{0.2\textwidth} p{0.2\textwidth} p{0.2\textwidth} p{0.1\textwidth}}
&
\begin{minipage}[t]{0.2\textwidth}
\footnotesize
\begin{verbatim}
    pA() {
A0:   try {
A1:     pB();
A2:   } catch {
A3:     pErr();
A4:     pErr();
      }
Ar: }
\end{verbatim}
\end{minipage}
&
\begin{minipage}[t]{0.2\textwidth}
\footnotesize
\begin{verbatim}
    pB() {
B0:   pC();
Br: }
\end{verbatim}
\end{minipage}
&
\begin{minipage}[t]{0.2\textwidth}
\footnotesize
\begin{verbatim}
    pC() {
C0:   if (*) {
C1:     throw;
C2:   } else {
C3:     pC();
      }
Cr: }
\end{verbatim}
\end{minipage}
&
\\
\multicolumn{5}{c}{%
\begin{tikzpicture}
  [node distance=40pt, every state/.style={minimum size=0pt, inner sep=2pt}, >=latex, font=\footnotesize]
\node[state, initial by arrow, initial text=] (m0) {M0};
\node[state] (a0) [right of=m0] {A0};
\node[state] (a1) [right of=a0] {A1};
\node[state] (b0) [right of=a1] {B0};
\node[state] (c0) [right of=b0] {C0};
\node[state] (a2) [below of=c0] {A2};
\node[state] (a3) [left of=a2] {A3};
\node[state] (er) [left of=a3] {Er};
\node[state] (ar) [left of=er] {Ar};
\node[state] (arp) [left of=ar] {Ar'};
\node[state] (mr) [accepting, left of=arp] {Mr};
\node[state] (a4) [below of=er] {A4};
\path[->] (m0) edge[above] node[label=below:$\mathrm{p}_A$] {$\lcall$} (a0)
          (a0) edge[above] node[label=below:\texttt{try}] {$\lhandle$} (a1)
          (a1) edge[above] node[label=below:$\mathrm{p}_B$] {$\lcall$} (b0)
          (b0) edge[above] node[label=below:$\mathrm{p}_C$] {$\lcall$} (c0)
          (c0) edge[out=30, in=0, loop, right] node {$\lcall \, \mathrm{p}_C$} (c0)
          (c0) edge[out=-20, in=-50, loop, double, right] node {A1, B0, C0} (c0)
          (c0) edge[dashed, left] node {$\lthrow$} (a2)
          (a2) edge[double, above] node {A0} (a3)
          (a3) edge[above] node[label=below:$\mathrm{p}_{\mathit{Err}}$] {$\lcall$} (er)
          (er) edge[loop above, dashed, right] node {$\lret \, \mathrm{p}_{\mathit{Err}}$} (er)
          (er) edge[bend left, double, right] node {A3} (a4)
          (a4) edge[bend left, left] node[label=below:$\mathrm{p}_{\mathit{Err}}$] {$\lcall$} (er)
          (er) edge[double, above] node {A4} (ar)
          (ar) edge[dashed, above] node[label=below:$\mathrm{p}_{A}$] {$\lret$} (arp)
          (arp) edge[double, above] node {M0} (mr);
\end{tikzpicture}
}
\end{tabular}
  \caption{Example procedural program (top) and the derived OPA (bottom).
    Push, shift, pop moves are shown by, resp., solid, dashed and double arrows.}%
  \label{fig:example-prog}
\end{figure}

\begin{exa}%
\label{exa:opa-example}
For readers not familiar with OPLs, we show how OPAs can naturally model
programming languages such as Java and C++.
Given a set $AP$ of
atomic propositions describing events and states of the program, we
use $(\powset{AP}, M_{AP})$ as the OP alphabet.  For
convenience, we consider a partitioning of $AP$ into a set of normal
propositional labels (in round font), and \emph{structural labels} (in bold).
Structural labels define the OP structure of the word:
$M_{AP}$ is only defined for subsets of $AP$ containing exactly
one structural label, so that given two structural labels $\mathbf{l}_1, \mathbf{l}_2$, for any $a,
a', b, b' \in \powset{AP}$ s.t.\ $\mathbf{l}_1 \in a, a'$ and
$\mathbf{l}_2 \in b, b'$ we have $M_{AP}(a,b) =
M_{AP}(a',b')$.  In this way, it is possible to define an OPM on
the entire $\powset{AP}$ by only giving the relations between structural labels, as
we did for $M_\lcall$.
Figure~\ref{fig:example-prog} shows how to model a procedural program with
 an OPA\@.  The OPA simulates the program's behavior with respect to the stack, by expressing its
execution traces with four event kinds: $\lcall$ (resp.\ $\lret$)
marks a procedure call (resp.\ return), $\lhandle$ the
installation of an exception handler by a \texttt{try} statement, and
$\lthrow$ an exception being raised. OPM $M_\lcall$ defines the
context-free structure of the word, which is strictly linked with the
programming language semantics: the $\lessdot$ PR causes
nesting (e.g., $\lcall$s can be nested into other $\lcall$s), and the
$\doteq$ PR implies a one-to-one relation, e.g.\ between a $\lcall$
and the $\lret$ of the same function, and a $\lhandle$ and the
$\lthrow$ it catches. Each OPA state represents a line in
the source code. First, procedure $\mathrm{p}_A$ is called by the
program loader (M0), and $[\{\lcall, \mathrm{p}_A\}, \text{M0}]$ is pushed onto the stack, to track the
program state before the $\lcall$. Then, the \texttt{try} statement at
line \texttt{A0} of $\mathrm{p}_A$ installs a handler. All subsequent
calls to $\mathrm{p}_B$ and $\mathrm{p}_C$ push new stack symbols on
top of the one pushed with $\lhandle$. $\mathrm{p}_C$ may only call
itself recursively, or throw an exception, but never return
normally. This is reflected by $\lthrow$ being the only transition
leading from state C0 to the accepting state Mr, and $\mathrm{p}_B$
and $\mathrm{p}_C$ having no way to a normal $\lret$. The OPA has a
look-ahead of one input symbol, so when it encounters $\lthrow$, it
must pop all symbols in the stack, corresponding to active function
frames, until it finds the one with $\lhandle$ in it, which cannot be
popped because $\lhandle \doteq \lthrow$. Notice that such behavior
cannot be modeled by Visibly Pushdown Automata or Nested Word
Automata, because they need to read an input symbol for each pop
move. Thus, $\lhandle$ protects the parent function from the
exception. Since the state contained in $\lhandle$'s stack symbol is
A0, the execution resumes in the \texttt{catch} clause of
$\mathrm{p}_A$. $\mathrm{p}_A$ then calls twice the library error-handling
function $\mathrm{p}_{\mathit{Err}}$, which ends regularly both times, and returns.
The string of Figure~\ref{fig:opm-mcall} is accepted by this OPA\@.

In this example, we only model the stack behavior for simplicity, but
other statements, such as assignments, and other behaviors, such as
continuations, could be modeled by a different choice of the OPM, and
other aspects of the program's state by appropriate
abstractions~\cite{JhalaPR18}.
\end{exa}

\subsection{Operator Precedence
  \texorpdfstring{$\omega$-Languages}{Omega-Languages}}%
\label{sec:background-omega}

All definitions regarding OPLs are extended to infinite words in
the usual way, but with a few distinctions~\cite{LonatiEtAl2015}.  Given an
alphabet $(\Sigma, M)$, an $\omega$-word $w \in \Sigma^\omega$ is
compatible with $M$ if every prefix of $w$ is compatible with $M$.
OP $\omega$-words are not terminated by the delimiter $\#$.
An $\omega$-word may contain never-ending chains of the form $c_0
\lessdot c_1 \doteq c_2 \doteq \cdots$, where the $\lessdot$ relation
between $c_0$ and $c_1$ is never closed by a corresponding
$\gtrdot$. Such chains are called \emph{open chains} and may be simple
or composed. A composed open chain may contain both open and closed
subchains. Of course, a closed chain cannot contain an open one.
A terminal symbol $a \in \Sigma$ is \emph{pending} if it
is part of the body of an open chain and of no closed chains.

OPA classes accepting the whole class of $\omega$OPLs can be defined by augmenting
Definition~\ref{def:OPA} with B\"uchi or Muller acceptance conditions.  In this paper, we only consider the former one.
The semantics of configurations, moves and infinite
runs are defined as for finite OPAs.  For the acceptance condition,
let $\rho$ be a run on an $\omega$-word $w$. Define
\(
\operatorname{Inf}(\rho) = \{q \in Q \mid \text{there exist infinitely
many positions $i$ s.t.\ } \langle \beta_i, q, x_i \rangle \in \rho \}
\)
as the set of states that occur infinitely often in $\rho$.
$\rho$ is successful iff there exists a state $q_f \in F$ such that
$q_f \in \operatorname{Inf}(\rho)$. An $\omega$OPBA $\mathcal{A}$ accepts
$w \in \Sigma^\omega$ iff there is a successful run of $\mathcal{A}$ on $w$.
The $\omega$-language recognized by $\mathcal{A}$ is
$L(\mathcal{A}) = \{w \in \Sigma^\omega \mid \mathcal{A} \text{ accepts } w \}$.
Unlike OPAs, $\omega$OPBAs do not require the stack to be empty for word acceptance:
when reading an open chain, the stack symbol pushed when the first character
of the body of its underlying simple chain is read remains into the stack forever;
it is at most updated by shift moves.

The most important closure properties of OPLs are preserved by
$\omega$OPLs, which form a Boolean algebra and are closed under
concatenation of an OPL with an $\omega$OPL~\cite{LonatiEtAl2015}.
The equivalence between deterministic and nondeterministic automata is
lost in the infinite case, which is unsurprising, since it
also happens for regular $\omega$-languages and $\omega$VPLs.

A more complete treatment of OPLs properties and parsing algorithms can be found in~\cite{MP18, GruneJacobs:08}; $\omega$OPLs are described in some depth in~\cite{LonatiEtAl2015}.

\subsection{OPLs vs other structured language families}%
\label{sec:structured-langs}

The nice closure properties of OPLs come from the fact that a word's syntactic structure
is determined locally, once an OPM is given.
Other language classes enjoy similar properties.
The simplest (and earliest) ones are \emph{Parenthesis Languages}~\cite{McNaughton67}.
In Parenthesis Languages, two terminals disjoint from the input alphabet
are used as open and closed parentheses, and they surround all grammar rule rhs.
Thus, the syntactic structure is directly encoded in words
(just like we did in Figure~\ref{fig:opm-mcall} with chains).

Visibly Pushdown Languages (VPLs)~\cite{AluMad04},
first introduced as Input-Driven Languages~\cite{DBLP:conf/icalp/Mehlhorn80},
extend parenthesis languages; they also lead to applications in model checking.
In VPLs, the input alphabet $\Sigma$ is partitioned into three disjoint sets $\Sigma_c$, $\Sigma_i$,
and $\Sigma_r$, called respectively the \emph{call}, \emph{internal}, and \emph{return} alphabets.
\emph{Visibly Pushdown Automata} (VPA), the automata class recognizing VPLs,
always perform a push move when reading a call symbol, a pop move when reading a return symbol,
and, when reading an internal symbol, they perform a move that only updates the current state,
leaving the stack untouched.
Thus, a string's syntactic structure is fully determined by the alphabet partition,
and is clearly \emph{visible}: a symbol in $\Sigma_c$ is an open parenthesis,
and one in $\Sigma_r$ is a closed parenthesis.
The matching between such symbols is unambiguous.
Once the alphabet partition is fixed, VPLs form a Boolean algebra,
which enabled their success in model checking.
VPAs are \emph{real-time}, as they read exactly one input symbol with each move.
This limitation distinguishes them from OPAs, whose pop moves are so-called $\varepsilon$-moves.
OPLs strictly contain VPLs~\cite{CrespiMandrioli12}.

\emph{Nested Words}~\cite{jacm/AlurM09}, an algebraic characterization of VPLs,
were introduced to foster their logic and data-theoretic applications.
They consist of a linear sequence of positions, plus a \emph{matching relation}
encoding the pairing of call and return symbols.
As a result, the matching relation is a strictly one-to-one nesting relation
that never crosses itself (each ``open parenthesis'' is matched to a closed one, with a minor exception).
The class of Regular Languages of Nested Words is recognized by Nested Words Automata,
and is equivalent to VPLs.

The original motivation for VPLs was model checking procedural programs:
the matching between call and return symbols easily models the behavior of a program's stack
during function calls and returns, while internal symbols model other statements.
However, many programming languages manage the stack in more complex ways.
E.g., when an exception is raised, the stack may be unwound,
as multiple procedures are terminated by the exception.
In Example~\ref{exa:opa-example}, this is easily modeled
by an OPM where the $\lcall$ symbol takes precedence from $\lthrow$.
In contrast, a VPA would need to read a different symbol for each pop move,
so a single $\lthrow$ would not suffice.

An early attempt at overcoming such limitations was made with Colored Nested Words~\cite{AlurF16},
in which multiple calls can be matched with a return of a different color,
allowing many-to-one relations.
Colored Nested Words are subsumed by OPLs and, as we show in Section~\ref{sec:potl},
the nesting relation of OPLs can be also one-to-many, besides many-to-one.

\section{Precedence-Oriented Temporal Logic}%
\label{sec:potl}

\begin{figure}[tb]
\begin{tikzpicture}
  [edge/.style={}]
\matrix (m) [matrix of math nodes, column sep=-4, row sep=-4]
{
  \#
  & \lessdot & \lcall
  & \lessdot & \lhandle
  & \lessdot & \lcall
  & \lessdot & \lcall
  & \lessdot & \lcall
  & \gtrdot & \lthrow
  & \gtrdot & \lcall
  & \doteq & \lret
  & \gtrdot & \lcall
  & \doteq & \lret
  & \gtrdot & \lret
  & \gtrdot & \# \\
  & & \mathrm{p}_A
  & &
  & & \mathrm{p}_B
  & & \mathrm{p}_C
  & & \mathrm{p}_C
  & &
  & & \mathrm{p}_{\mathit{Err}}
  & & \mathrm{p}_{\mathit{Err}}
  & & \mathrm{p}_{\mathit{Err}}
  & & \mathrm{p}_{\mathit{Err}}
  & & \mathrm{p}_A
  & & \\
  0
  & & 1
  & & 2
  & & 3
  & & 4
  & & 5
  & & 6
  & & 7
  & & 8
  & & 9
  & & 10
  & & 11
  & & 12 \\
};
\draw[edge] (m-1-1) to [out=20, in=160] (m-1-25);
\draw[edge] (m-1-3) to [out=20, in=160] (m-1-23);
\draw[edge] (m-1-3) to [out=20, in=160] (m-1-15);
\draw[edge] (m-1-3) to [out=20, in=160] (m-1-19);
\draw[edge] (m-1-5) to [out=20, in=160] (m-1-13);
\draw[edge] (m-1-7) to [out=20, in=160] (m-1-13);
\draw[edge] (m-1-9) to [out=20, in=160] (m-1-13);
\end{tikzpicture}
\caption{The example string of Figure~\ref{fig:opm-mcall} as an OP word.
  Chains are highlighted by arrows joining their contexts;
  structural labels are in bold,
  and other atomic propositions are shown below them.
  $\mathrm{p}_l$ means a $\lcall$ or a $\lret$ is related to procedure $\mathrm{p}_l$.
  First, procedure $\mathrm{p}_A$ is called (pos.~1),
  and it installs an exception handler in pos.~2.
  Then, three nested procedures are called,
  and the innermost one ($\mathrm{p}_C$) throws an exception,
  which is caught by the handler.
  Two more functions are called and, finally, $\mathrm{p}_A$ returns.}%
\label{fig:potl-example-word}
\end{figure}

POTL is a linear-time temporal logic, which extends the classical LTL\@.
We recall that the semantics of LTL~\cite{Pnueli77} is defined on a Dedekind-complete set of word positions $U$
equipped with a total ordering and monadic relations, called \emph{atomic propositions} (AP).
In this paper, we consider a discrete timeline, hence
$U = \{0, 1, \dots, n\}$, with $n \in \mathbb{N}$, or $U = \mathbb{N}$.
Each LTL formula $\varphi$ is evaluated in a word position:
we write $(w, i) \models \varphi$ to state that $\varphi$ holds in position $i$ of word $w$.
Besides operators from propositional logic, LTL features modalities that enable
movement between positions; e.g.,
the \emph{Next} modality states that a formula holds in the subsequent position of the current one:
$(w, i) \models \lnext \varphi$ iff $(w, i+1) \models \varphi$;
the \emph{Until} modality states that there exists a \emph{linear path},
made of consecutive positions and starting from the current one, such that a formula $\psi$ holds in the last position of such path,
and another formula $\varphi$ holds in all previous positions.
Formally, $(w, i) \models \lluntil{\varphi}{\psi}$ iff there exists $j \geq i$
s.t.\ $(w, j) \models \psi$, and for all $j'$, with $i \leq j' < j$,
we have $(w, j') \models \varphi$.

The linear order, however, is not sufficient to express properties of more complex structures than the linear ones, typically the tree-shaped ones, which are the natural domain of context-free languages. The history of logic formalisms suitable to deal with CFLs somewhat parallels the path that led from regular languages to tree-languages~\cite{Tha67} or their equivalent counterpart in terms of strings, i.e.\ parenthesis languages~\cite{McNaughton67}.

A first logic mechanism aimed at ``walking through the structure of a context-free sentence'' was proposed in~\cite{Lautemann94} and consists in a \emph{matching condition} that relates the two extreme terminals of the rhs of a context-free grammar in so-called \emph{double Greibach normal form}, i.e.\ a grammar whose production rhs exhibit a terminal character at both ends: in a sense such terminal characters play the role of explicit parentheses.~\cite{Lautemann94} provides a logic language for general CFLs based on such a relation which however fails to extend the decidability properties of logics for regular languages due to lack of closure properties of CFLs. The matching condition was then resumed in~\cite{jacm/AlurM09} to define their MSO logic for VPLs and subsequently the temporal logics  CaRet~\cite{AlurEM04} and NWTL~\cite{lmcs/AlurABEIL08}.

OPLs are structured but not ``visibly structured'' as they lack explicit parentheses (see Section~\ref{sec:opl}). Nevertheless, a more sophisticated notion of matching relation has been introduced in~\cite{LonatiEtAl2015} for OPLs by exploiting the fact that OPLs remain input-driven thanks to the OPM\@. We name the new matching condition \emph{chain relation} and define it here below.
We fix a finite set of atomic propositions $AP$, and an OPM $M_{AP}$ on $\powset{AP}$.

A \emph{word structure}
---also called \emph{OP word} for short--- is the tuple
$\langle U, <, M_{AP}, P \rangle$, where $U$, $<$, and $M_{AP}$ are as above, and
$P \colon AP \to \powset{U}$ is a function associating each atomic proposition
with the set of positions where it holds, with $0, (n+1) \in P(\#)$.
For the time being, we consider just finite string languages; the necessary extensions needed to deal with $\omega$-languages will be introduced in Section~\ref{sec:optl-omega}.

\begin{defi}[Chain relation]
The \emph{chain relation} $\chain(i, j)$ holds between two positions $i,j \in U$ iff $i < j-1$,
and $i$ and $j$ are resp.\ the left and right contexts of the same chain
(cf.\ Definition~\ref{def:chain}), according to $M_{AP}$ and the labeling induced by $P$.
\end{defi}

In the following, given two positions $i, j$ and a PR $\prf$,
we write $i \pr j$ to say $a \pr b$,
where $a = \{\mathrm{p} \mid i \in P(\mathrm{p})\}$, and
$b = \{\mathrm{p} \mid j \in P(\mathrm{p})\}$.
For notational convenience, we partition $AP$ into structural labels, written in bold face,
which define a word's structure, and normal labels, in round face,
defining predicates holding in a position.
Thus, an OPM $M$ can be defined on structural labels only,
and $M_{AP}$ is obtained by inverse homomorphism of $M$ on subsets of $AP$ containing exactly one of them.

The chain relation augments the linear structure of a word with the tree-like structure of OPLs.
Figure~\ref{fig:potl-example-word} shows the word of Figure~\ref{fig:sttree-example} as an OP word and emphasizes the distinguishing feature of the relation, i.e.\ that, for composed chains, it may not be one-to-one, but also one-to-many or many-to-one.
Notice the correspondence between internal nodes in the ST and pairs of positions
in the $\chain$ relation.

In the ST, we say that the right context $j$ of a chain is at the \emph{same level}
as the left one $i$ when $i \doteq j$ (e.g., in Figure~\ref{fig:sttree-example}, pos.\ 1 and 11),
at a \emph{lower level} when $i \lessdot j$ (e.g., pos.\ 1 with 7, and 9),
at a \emph{higher level} if $i \gtrdot j$ (e.g., pos.\ 3 and 4 with 6).

Furthermore, given $i,j \in U$, relation $\chain$ has the following properties:
\begin{enumerate}
\item\label{item:never-cross}
  It never crosses itself: if $\chain(i,j)$ and $\chain(h,k)$, for any $h,k \in U$,
  then we have $i < h < j \implies k \leq j$
  and $i < k < j \implies i \leq h$.
\item\label{item:border-prop}
  If $\chain(i,j)$, then $i \lessdot i+1$ and $j-1 \gtrdot j$.
\item\label{item:downward-prop}
  Consider all positions (if any) $i_1 < i_2 < \dots < i_n$ s.t.\ $\chain(i_p, j)$
  for all $1 \leq p \leq n$.
  We have $i_1 \lessdot j$ or $i_1 \doteq j$ and, if $n > 1$,
  $i_q \gtrdot j$ for all $2 \leq q \leq n$.
\item\label{item:upward-prop}
  Consider all positions (if any) $j_1 < j_2 < \dots < j_n$ s.t.\ $\chain(i, j_p)$
  for all $1 \leq p \leq n$.
  We have $i \gtrdot j_n$ or $i \doteq j_n$ and, if $n > 1$,
  $i \lessdot j_q$ for all $1 \leq q \leq n-1$.
\end{enumerate}
Property~\ref{item:upward-prop} says that when the chain relation is one-to-many,
the contexts of the outermost chain are in the $\doteq$ or $\gtrdot$ relation,
while the inner ones are in the $\lessdot$ relation.
Property~\ref{item:downward-prop} says that contexts of outermost
many-to-one chains are in the $\doteq$ or $\lessdot$ relation,
and the inner ones are in the $\gtrdot$ relation.
Such properties are proved in Appendix~\ref{sec:chain-prop-proofs}
for readers unfamiliar with OPLs.

The $\chi$ relation is the core of the MSO logic characterization for OPLs given in~\cite{LonatiEtAl2015}
where it is also shown that the greater generality of OPLs and corresponding MSO logic, though requiring more technical proofs,
produces results in terms of closure properties, decidability and complexity of the constructions that are the same as the corresponding ones for VPLs. Similarly, in this paper we are going to show that the temporal logic POTL replicates the FO-completeness result of NWTL despite the greater complexity of the $\chi$ relation.%
\footnote{In~\cite{ChiariMP21} we produce model checking algorithms with the same order of complexity as those for NWTL.}

While LTL's linear paths only follow the ordering relation $<$,
paths in POTL may follow the $\chain$ relation too. As a result, a POTL path through a string can simulate paths through the corresponding ST\@.

We envisage two basic types of path. The first one is that of  \emph{summary paths}.
By following the chain relation, summary paths may skip chain bodies,
which correspond to the fringe of a subtree in the ST\@.
We distinguish between \emph{downward} and \emph{upward} summary paths (resp.\ DSP and USP).
Both kinds can follow both the $<$ and the $\chain$ relations; DSPs can enter a chain body but cannot exit it so that they can move only downward in a ST or remain at the same level; conversely, USPs cannot enter one but can move upward by exiting the current one.
In other words, if a position $k$ is part of a DSP, and there are two positions $i$ and $j$,
with $i < k < j$ and $\chain(i,j)$ holds, the next position in the DSP cannot be $\geq j$.
E.g., two of the DSPs starting from pos.~1 in Figure~\ref{fig:potl-example-word}
are 1-2-3, which enters chain $\chain(2, 6)$, and 1-2-6, which skips its body. 
USPs are symmetric, and some examples thereof are paths 3-6-7 and 4-6-7. 

Since the $\chain$ relation can be many-to-one or one-to-many, it makes sense to write formulas that consider only left contexts of chains that share their right context, or vice versa.
Thus, the paths of our second type, named \emph{hierarchical paths}, are made of such positions, but excluding outermost chains.
E.g., in Figure~\ref{fig:potl-example-word}, positions 2, 3 and 4 are all in the $\chain$ relation with 6, so 3-4 is a hierarchical path ($\chain(2, 6)$ is the outermost chain). 
Symmetrically, 7-9 is another hierarchical path. 
The reason for excluding the outermost chain is that, with most OPMs, such positions have a different semantic role than internal ones.
E.g., positions 3 and 4 are both calls terminated by the same exception, while 2 is the handler.
Positions 7 and 9 are both calls issued by the same function (the one called in position 1), while 11 is its return.
This is a consequence of properties~\ref{item:downward-prop} and~\ref{item:upward-prop} above.

In the next subsection, we describe in a complete and formal way POTL for finite string OPLs while in the subsequent subsection we briefly describe the necessary changes to deal with $\omega$-languages.

\subsection{POTL Syntax and Semantics}%
\label{sec:potl-syntax-semantics}

Given a finite set of atomic propositions $AP$, let $\mathrm{a} \in AP$, and $t \in \{d, u\}$.
The syntax of POTL is the following:
\begin{align*}
\varphi ::= &\; \mathrm{a}
\mid \neg \varphi
\mid \varphi \lor \varphi
\mid \lnext^t \varphi
\mid \lback^t \varphi
\mid \lcnext{t} \varphi
\mid \lcback{t} \varphi
\mid \lguntil{t}{\chi}{\varphi}{\varphi}
\mid \lgsince{t}{\chi}{\varphi}{\varphi} \\
&\mid \lhnext{t} \varphi
\mid \lhback{t} \varphi
\mid \lguntil{t}{H}{\varphi}{\varphi}
\mid \lgsince{t}{H}{\varphi}{\varphi}
\end{align*}

The truth of POTL formulas is defined w.r.t.\ a single word position.
Let $w$ be an OP word, and $\mathrm{a} \in AP$.
Then, for any position $i \in U$ of $w$,
we have $(w, i) \models \mathrm{a}$ iff $i \in P(\mathrm{a})$.
Operators such as $\land$ and $\neg$ have the usual semantics from propositional logic.
Next, while giving the formal semantics of POTL operators, we illustrate it by showing
how it can be used to express properties on program execution traces,
such as the one of Figure~\ref{fig:potl-example-word}.

\paragraph{\textbf{Next/back operators}}
The \emph{downward} next and back operators $\ldnext$ and $\ldback$
are like their LTL counterparts, except they are true only if the next
(resp.\ current) position is at a lower or equal ST level than the current (resp.\ preceding) one.
The \emph{upward} next and back, $\lunext$ and $\luback$, are symmetric.
Formally, $(w,i) \models \ldnext \varphi$ iff $(w,i+1) \models \varphi$
and $i \lessdot (i+1)$ or $i \doteq (i+1)$,
and $(w,i) \models \ldback \varphi$ iff $(w,i-1) \models \varphi$,
and $(i-1) \lessdot i$ or $(i-1) \doteq i$.
Substitute $\lessdot$ with $\gtrdot$ to obtain the semantics for $\lunext$ and $\luback$.

E.g., we can write $\ldnext \lcall$ to say that the next position is
an inner call (it holds in pos.\ 2, 3, 4 of
Figure~\ref{fig:potl-example-word}), $\ldback \lcall$ to say that the
previous position is a $\lcall$, and the
current is the first of the body of a function (pos.\ 2,
4, 5), or the $\lret$ of an empty one (pos.\ 8, 10), and $\luback
\lcall$ to say that the current position terminates an empty function
frame (holds in 6, 8, 10).  In pos.\ 2 $\ldnext \mathrm{p}_B$ holds, but
$\lunext \mathrm{p}_B$ does not.

\paragraph{\textbf{Chain Next/Back}}
The \emph{chain} next and back operators $\lcnext{t}$ and $\lcback{t}$
evaluate their argument resp.\ on future and past positions
in the chain relation with the current one.
The \emph{downward} (resp.\ \emph{upward}) variant only considers chains
whose right context goes down (resp.\ up) or remains at the same level in the ST\@.
Formally, $(w,i) \models \lcdnext \varphi$
iff there exists a position $j > i$ such that $\chain(i,j)$,
$i \lessdot j$ or $i \doteq j$, and $(w,j) \models \varphi$.
$(w,i) \models \lcdback \varphi$ iff there exists a position $j < i$
such that $\chain(j,i)$, $j \lessdot i$ or $j \doteq i$, and $(w,j) \models \varphi$.
Replace $\lessdot$ with $\gtrdot$ for the upward versions.

E.g., in pos.\ 1 of Figure~\ref{fig:potl-example-word}, $\lcdnext \mathrm{p}_{\mathit{Err}}$ holds
because $\chain(1,7)$ and $\chain(1,9)$,
meaning that $\mathrm{p}_A$ calls $\mathrm{p}_{\mathit{Err}}$ at least once.
Also, $\lcunext \lthrow$ is true in
$\lcall$ positions whose procedure is terminated by an exception thrown by
an inner procedure (e.g.\ pos.\ 3 and 4).
$\lcuback \lcall$ is true in $\lthrow$ statements that terminate at least one procedure
other than the one raising it, such as the one in pos.\ 6.
$\lcdnext \lret$ and $\lcunext \lret$ hold in $\lcall$s
to non-empty procedures that terminate normally, and not due to an uncaught exception
(e.g., pos.\ 1).

\paragraph{\textbf{(Summary) Until/Since operators}}
POTL has two kinds of until and since operators.
They express properties on paths, which are sequences of positions
obtained by iterating the different kinds of next or back operators.
In general, a \emph{path} of length $n \in \mathbb{N}$ between
$i, j \in U$ is a sequence of positions $i = i_1 < i_2 < \dots < i_n = j$.
The \emph{until} operator on a set of paths $\Gamma$ is defined as follows:
for any word $w$ and position $i \in U$,
and for any two POTL formulas $\varphi$ and $\psi$,
$(w, i) \models \lfuntil{\Gamma}{\varphi}{\psi}$
iff there exist a position $j \in U$, $j \geq i$,
and a path $i_1 < i_2 < \dots < i_n$ between $i$ and $j$ in
$\Gamma$
such that $(w, i_k) \models \varphi$ for any $1 \leq k < n$, and $(w, i_n) \models \psi$.
\emph{Since} operators are defined symmetrically.
Note that, depending on $\Gamma$, a path from $i$ to $j$ may not exist.
We define until/since operators by associating them with different sets of paths.

The \emph{summary} until $\lguntil{t}{\chi}{\psi}{\theta}$
(resp.\ since $\lgsince{t}{\chi}{\psi}{\theta}$) operator is obtained by inductively applying
the $\lnext^t$ and $\lcnext{t}$ (resp.\ $\lback^t$ and $\lcback{t}$) operators.
It holds in a position in which either $\theta$ holds,
or $\psi$ holds together with $\lnext^t (\lguntil{t}{\chi}{\psi}{\theta})$
(resp.\ $\lback^t (\lgsince{t}{\chi}{\psi}{\theta})$)
or $\lcnext{t} (\lguntil{t}{\chi}{\psi}{\theta})$
(resp.\ $\lcback{t} (\lgsince{t}{\chi}{\psi}{\theta})$).
It is an until operator on paths that can move not only between consecutive positions,
but also between contexts of a chain, skipping its body.
With the OPM of Figure~\ref{fig:opm-mcall}, this means skipping function bodies.
The downward variants can move between positions at the same level in the ST
(i.e., in the same simple chain body), or down in the nested chain structure.
The upward ones remain at the same level, or move to higher levels of the ST\@.

Formula $\lcuuntil{\top}{\lthrow}$ is true in positions contained in the frame
of a function that is terminated by an exception.
It is true in pos.\ 3 of Figure~\ref{fig:potl-example-word} because of path 3-6, 
and false in pos.\ 1, because no upward path can enter the chain whose contexts are pos.\ 1 and 11.
Formula $\lcduntil{\top}{\lthrow}$ is true in call positions whose function frame
contains $\lthrow$s, but that are not directly terminated by one of them,
such as the one in pos.\ 1 (with path 1-2-6) 

We formally define \emph{Downward Summary Paths} (DSP) as follows.
Given an OP word $w$, and two positions $i \leq j$ in $w$,
the DSP between $i$ and $j$, if it exists,
is a sequence of positions $i = i_1 < i_2 < \dots < i_n = j$ such that, for each $1 \leq p < n$,
\[
i_{p+1} =
\begin{cases}
  k & \text{if $k = \max\{ h \mid h \leq j \land \chain(i_p,h) \land (i_p \lessdot h \lor i_p \doteq h)\}$ exists;} \\
  i_p + 1 & \text{otherwise, if $i_p \lessdot (i_p + 1)$ or $i_p \doteq (i_p + 1)$.}
\end{cases}
\]
The Downward Summary (DS) until and since operators $\lcduntil{}{}$
and $\lcdsince{}{}$ use as $\Gamma$ the set of DSPs starting in the position in which
they are evaluated.
The definition for the upward counterparts is, again,
obtained by substituting $\gtrdot$ for $\lessdot$.
In Figure~\ref{fig:potl-example-word},
$\lcduntil{\lcall}{(\lret \land \mathrm{p}_{\mathit{Err}})}$
holds in pos.~1 because of path 1-7-8 and 1-9-10, 
$\lcusince{(\lcall \lor \lthrow)}{\mathrm{p}_B}$ in pos.~7 because of path 3-6-7, 
and $\lcuuntil{(\lcall \lor \lthrow)}{\lret}$ in 3
because of path 3-6-7-8. 

\paragraph{\textbf{Hierarchical operators.}}
A single position may be the left or right context of multiple chains.
The operators seen so far cannot keep this fact into account,
since they ``forget'' about a left context when they jump to the right one.
Thus, we introduce the \emph{hierarchical} next and back operators.
The \emph{upward} hierarchical next (resp.\ back),
$\lhunext \psi$ (resp.\ $\lhuback \psi$), is true iff the current position $j$
is the right context of a chain whose left context is $i$,
and $\psi$ holds in the next (resp.\ previous) pos.\ $j'$ that is a right context of $i$,
with $i \lessdot j, j'$.
So, $\lhunext \mathrm{p}_{\mathit{Err}}$ holds in pos.\ 7 of Figure~\ref{fig:potl-example-word}
because $\mathrm{p}_{\mathit{Err}}$ holds in 9,
and $\lhuback \mathrm{p}_{\mathit{Err}}$ in 9 because $\mathrm{p}_{\mathit{Err}}$ holds in 7.
In the ST, $\lhunext$ goes \emph{up} between $\lcall$s to $\mathrm{p}_{\mathit{Err}}$,
while $\lhuback$ goes down.
Their \emph{downward} counterparts behave symmetrically,
and consider multiple inner chains sharing their right context.
They are formally defined as:
\begin{itemize}
\item $(w,i) \models \lhunext \varphi$ iff
  there exist a position $h < i$ s.t.\ $\chain(h,i)$ and $h \lessdot i$
  and a position $j = \min\{ k \mid i < k \land \chain(h,k) \land h \lessdot k \}$
  and $(w,j) \models \varphi$;
\item $(w,i) \models \lhuback \varphi$ iff
  there exist a position $h < i$ s.t.\ $\chain(h,i)$ and $h \lessdot i$
  and a position $j = \max\{ k \mid k < i \land \chain(h,k) \land h \lessdot k \}$
  and $(w,j) \models \varphi$;
\item $(w,i) \models \lhdnext \varphi$ iff
  there exist a position $h > i$ s.t.\ $\chain(i,h)$ and $i \gtrdot h$
  and a position $j = \min\{ k \mid i < k \land \chain(k,h) \land k \gtrdot h \}$
  and $(w,j) \models \varphi$;
\item $(w,i) \models \lhdback \varphi$ iff
  there exist a position $h > i$ s.t.\ $\chain(i,h)$ and $i \gtrdot h$
  and a position $j = \max\{ k \mid k < i \land \chain(k,h) \land k \gtrdot h \}$
  and $(w,j) \models \varphi$.
\end{itemize}
In the ST of Figure~\ref{fig:sttree-example}, $\lhdnext$ and $\lhdback$ go \emph{down} and up among
$\lcall$s terminated by the same $\lthrow$.
For example, in pos.~3 $\lhdnext \mathrm{p}_C$ holds,
because both pos.~3 and 4 are in the chain relation with 6.
Similarly, in pos.~4 $\lhdback \mathrm{p}_B$ holds.
Note that these operators do not consider leftmost/rightmost contexts,
so $\lhunext \lret$ is false in pos.\ 9, as $\lcall \doteq \lret$,
and pos.\ 11 is the rightmost context of pos.\ 1.

The hierarchical until and since operators are defined by iterating these next and back operators.
The upward hierarchical path (UHP) between $i$ and $j$ is a sequence of positions
$i = i_1 < i_2 < \dots < i_n = j$ such that there exists a position $h < i$ such that
for each $1 \leq p \leq n$ we have $\chain(h,i_p)$ and $h \lessdot i_p$,
and for each $1 \leq q < n$ there exists no position $k$
such that $i_q < k < i_{q+1}$ and $\chain(h,k)$.
The until and since operators based on the set of UHPs
starting in the position in which they are evaluated are denoted
as $\lhuuntil{}{}$ and $\lhusince{}{}$.
E.g., $\lhuuntil{\lcall}{\mathrm{p}_{\mathit{Err}}}$ holds in pos.~7
because of the singleton path 7 and path 7-9, 
and $\lhusince{\lcall}{\mathrm{p}_{\mathit{Err}}}$ in pos.~9
because of paths 9 and 7-9. 

The downward hierarchical path (DHP) between $i$ and $j$ is a sequence of positions
$i = i_1 < i_2 < \dots < i_n = j$ such that there exists a position $h > j$ such that
for each $1 \leq p \leq n$ we have $\chain(i_p,h)$ and $i_p \gtrdot h$,
and for each $1 \leq q < n$ there exists no position $k$
such that $i_q < k < i_{q+1}$ and $\chain(k,h)$.
The until and since operators based on the set of DHPs
starting in the position in which they are evaluated are denoted
as $\lhduntil{}{}$ and $\lhdsince{}{}$.
In Figure~\ref{fig:potl-example-word}, $\lhduntil{\lcall}{\mathrm{p}_C}$ holds in pos.~3,
and $\lhdsince{\lcall}{\mathrm{p}_B}$ in pos.~4, both because of path 3-4. 

\paragraph{\textbf{Equivalences}}
The POTL until and since operators enjoy expansion laws similar to those of LTL\@.
Here we give those for two until operators,
those for their since and downward counterparts being symmetric.
All such laws are proved in Appendix~\ref{sec:expansion-proofs}.
\begin{align*}
  \lguntil{t}{\chi}{\varphi}{\psi} &\equiv
    \psi \lor \Big(\varphi \land \big(\lnext^t (\lguntil{t}{\chi}{\varphi}{\psi})
      \lor \lcnext{t} (\lguntil{t}{\chi}{\varphi}{\psi})\big)\Big) \\
  \lhuuntil{\varphi}{\psi} &\equiv
    (\psi \land \lcdback \top \land \neg \lcuback \top) \lor
     \big(\varphi \land \lhunext (\lhuuntil{\varphi}{\psi})\big)
\end{align*}

As in LTL, it is worth defining some derived operators.
For $t \in \{d, u\}$, we define the downward/upward summary \emph{eventually} as
$\leven{t} \varphi := \lguntil{t}{\chi}{\top}{\varphi}$,
and the downward/upward summary \emph{globally} as
$\lglob{t} \varphi := \neg \leven{t} (\neg \varphi)$.
$\leven{u} \varphi$ and $\lglob{u} \varphi$ respectively say that $\varphi$
holds in one or all positions in the path from the current position to the root of the ST\@.
Their downward counterparts are more interesting: they consider all positions
in the current rhs and its subtrees, starting from the current position.
$\leven{d} \varphi$ says that $\varphi$ holds in at least one of such positions,
and $\lglob{d} \varphi$ in all of them.
E.g., if $\lglob{d} (\neg \mathrm{p}_A)$ holds in a $\lcall$,
it means that $\mathrm{p}_A$ never holds in its whole function body,
which is the subtree rooted next to the $\lcall$.

We anticipate that preventing downward paths from crossing the boundaries of the current subtrees and conversely imposing upward ones to exit it without entering any inner one adds, rather than limiting, generality w.r.t.\ paths that can cross both such boundaries.

\subsection{POTL on \texorpdfstring{$\omega$}{Omega}-Words}%
\label{sec:optl-omega}

Since applications in model checking usually require temporal logics on
infinite words, we now extend POTL to $\omega$-words.

To define OP $\omega$-words, it suffices to replace the finite set of positions $U$
with the set of natural numbers $\mathbb{N}$ in the definition of OP words.
Then, the formal semantics of all POTL operators remains the same as
in Section~\ref{sec:potl-syntax-semantics}.
The only difference in the intuitive meaning of operators occurs in $\omega$-words
with open chains.
In fact, chain next operators ($\lcdnext$ and $\lcunext$)
do not hold on the left contexts of open chains,
as the $\chain$ relation is undefined on them.
The same can be said for downward hierarchical operators,
when evaluated on left contexts of open chains.

Also, property~\ref{item:upward-prop} of the $\chain$ relation does not hold
if a position $i$ is the left context of an open chain.
In this case, there may be positions $j_1 < j_2 < \dots < j_n$ s.t.\ $\chain(i, j_p)$
and $i \lessdot j_p$ for all $1 \leq p \leq n$, but no position $k$ s.t.\ $\chain(i,k)$
and $i \gtrdot k$ or $i \doteq k$.

\subsection{Motivating Examples}%
\label{sec:motivating-examples}

POTL can express many useful requirements of procedural programs. To emphasize the potential practical applications in automatic verification, we supply a few examples of typical program properties expressed as POTL formulas.

Let $\llglob \psi$ be the LTL \emph{globally} operator,
which can be expressed in POTL as in Section~\ref{sec:ltl-comp}.
POTL can express Hoare-style pre/postconditions
with formulas such as $\llglob (\lcall \land \rho \implies \lcdnext (\lret \land \theta))$,
where $\rho$ is the precondition, and $\theta$ is the postcondition.

Unlike NWTL, POTL can easily express properties related to exception handling
and interrupt management.
E.g., the shortcut
$\lthrnext(\psi) := \lunext (\lthrow \land \psi) \lor \lcunext (\lthrow \land \psi)$,
evaluated in a $\lcall$, states that the procedure currently started
is terminated by an $\lthrow$ in which $\psi$ holds.
So, $\llglob (\lcall \land \rho \land \lthrnext(\top) \implies \lthrnext(\theta))$
means that if precondition $\rho$ holds when a procedure is called,
then postcondition $\theta$ must hold if that procedure is terminated by an exception.
In object oriented programming languages,
if $\rho \equiv \theta$ is a class invariant asserting that a class instance's state is valid,
this formula expresses \emph{weak (or basic) exception safety}~\cite{Abrahams00},
and \emph{strong exception safety} if $\rho$ and $\theta$
express particular states of the class instance.
The \emph{no-throw guarantee} can be stated with
$\llglob (\lcall \land \mathrm{p}_A \implies \neg \lthrnext(\top))$,
meaning procedure $\mathrm{p}_A$ is never interrupted by an exception.

\emph{Stack inspection}~\cite{EsparzaKS03,JensenLT99},
i.e.\ properties regarding the sequence of procedures
active in the program's stack at a certain point of its execution,
is an important class of requirements that can be expressed with shortcut
$\lcallsince(\varphi, \psi) := \lcdsince{(\lcall \implies \varphi)}{(\lcall \land \psi)}$,
which subsumes the \emph{call since} of CaRet, and works with exceptions too.
E.g.,
\(\llglob \big((\lcall \land \mathrm{p}_B \land
                \lcallsince(\top, \mathrm{p}_A))
               \implies \lthrnext(\top)
          \big)
\)
means that whenever $\mathrm{p}_B$ is executed and
at least one instance of $\mathrm{p}_A$ is on the stack,
$\mathrm{p}_B$ is terminated by an exception.
The OPA of Figure~\ref{fig:example-prog} satisfies this formula,
because $\mathrm{p}_B$ is always called by $\mathrm{p}_A$,
and $\mathrm{p}_C$ always throws.
If the OPA was an $\omega$OPBA, it would not satisfy such formula
because of computations where $\mathrm{p}_C$ does not terminate.

\subsection{Comparison with the state of the art}

\subsubsection{Linear Temporal Logic (LTL)}%
\label{sec:ltl-comp}

The main limitation of LTL is that the algebraic structure it is defined on
only contains a linear order on word positions.
Thus, it fails to model systems that require an additional binary relation,
such as the $\chain$ relation of POTL\@.
LTL is in fact expressively equivalent to the first-order fragment of regular languages,
and it cannot represent context-free languages, as POTL does.
LTL model checking on pushdown formalisms has been investigated extensively~\cite{EsparzaHRS00,AlurBEGRY05}.

On the other hand, POTL can express all LTL operators, so that POTL is strictly more expressive than LTL\@.
Any LTL Next formula $\lnext \varphi$ is in fact equivalent to the POTL formula $\ldnext \varphi' \lor \lunext \varphi'$, where $\varphi'$ is the translation of $\varphi$ into POTL, and the LTL Back can be translated symmetrically.

The \emph{Globally} operator can be translated as
$\llglob \psi := \neg \leven{u} (\leven{d} \neg \psi)$.
This formula contains an upward summary eventually followed by a downward one,
and it can be explained by thinking to a word's ST\@.
The upward eventually evaluates its argument on all positions from the current one to the root.
Its argument considers paths from each one of such positions to all the leaves of the
subtrees rooted at their right, in the same rhs.
Together, the two eventually operators consider paths from the current position to
all subsequent positions in the word.
Thus, with the initial negation this formula means that $\neg \psi$ never holds
in such positions, which is the meaning of the LTL Globally operator.

The translation for LTL Until and Since is much more involved.
We need to define some shortcuts, that will be used again in Section~\ref{sec:cxpath-translation}.
For any $a \subseteq AP$,
$\sigma_a := \bigwedge_{\mathrm{p} \in a} \mathrm{p} \land \bigwedge_{\mathrm{q} \not\in a} \neg \mathrm{q}$
holds in a position $i$ iff $a$ is the set of atomic propositions holding in $i$.
For any POTL formula $\gamma$, let
$\lganext{\lessdot} \gamma := \bigvee_{a,b \subseteq AP, \, a \lessdot b} (\sigma_a \land \lcdnext (\sigma_b \land \gamma))$
be the restriction of $\lcdnext \gamma$ to chains with contexts in the $\lessdot$ PR;\@
$\lgaback{\gtrdot} \gamma$ is analogous.

The translation for $\lluntil{\varphi}{\psi}$ follows, that for LTL Since being symmetric:
\[
  \psi' \lor
  \lcuuntil
    {\big(\varphi' \land \alpha(\varphi')\big)}
    {\big(\psi' \lor
       \lcduntil
         {(\varphi' \land \beta(\varphi'))}
         {(\psi' \land \beta(\varphi'))}\big)}
\]
where $\varphi'$ and $\psi'$ are the translations of $\varphi$ and $\psi$ into POTL, and
\begin{align*}
\alpha(\varphi') &:=
  \lcunext \top \implies
  \neg \big(\ldnext (\lcduntil{\top}{\neg \varphi'})
        \lor \lganext{\lessdot} (\lcduntil{\top}{\neg \varphi'})\big) \\
\beta(\varphi') &:=
  \lcdback \top \implies
    \neg \big(\luback (\lcusince{\top}{\neg \varphi'})
          \lor \lgaback{\gtrdot} (\lcusince{\top}{\neg \varphi'})\big)
\end{align*}

The main formula is the concatenation of a US until and a DS until,
and it can be explained similarly to the translation for LTL Globally.
Let $i$ be the word position in which the formula is evaluated, and $j$ the last one of the linear path, in which $\psi'$ holds.
The outermost US until is witnessed by a path from $i$ to a position $i'$ which, in the ST, is part of the rhs which is the closest common ancestor of $i$ and $j$.
In all positions $i < k < i'$ in this path, formula $\alpha(\varphi)$ holds.
It means $\varphi'$ holds in all positions contained in the subtree rooted at the non-terminal to the immediate right of $k$ (i.e., in the body of the chain whose left context is $k$).

Then, the DS until is witnessed by a downward path from $i'$ to $j$.
Here $\beta(\varphi)$ has a role symmetric to $\alpha(\varphi)$.
It forces $\varphi'$ to hold in all subtrees rooted at the non-terminal to the left of positions in the DS path.

Thus, formulas $\alpha(\varphi)$ and $\beta(\varphi)$ make sure $\varphi'$ holds in all chain bodies skipped by the summary paths, and $\psi'$ holds in $j$.

\subsubsection{Logics on Nested Words}%
\label{sec:nw-comp}

The first temporal logics with explicit context-free aware modalities
were based on nested words (cf.\ Section~\ref{sec:structured-langs}).

CaRet~\cite{AlurEM04} was the first temporal logic on nested words to be introduced,
and it focuses on expressing properties on procedural programs,
which explains its choice of modalities.
The \emph{Abstract} Next and Until operators are defined on paths of positions
in the frame of the same function, skipping frames of nested calls.
The \emph{Caller} Next and Until are actually past modalities,
and they operate on paths made of the calls of function frames
containing the current position.
LTL Next and Until are also present.
The Caller operators enable upward movement in the ST of a nested word,
and abstract operators enable movement in the same rhs.
However, no CaRet operator allows pure downward movement in the ST,
which is needed to express properties limited to a single subtree.
While the LTL until can go downward, it can also go beyond the rightmost leaf of a subtree,
thus effectively jumping upwards.

This seems to be the main expressive limitation of CaRet,
which is conjectured not to be FO-complete~\cite{lmcs/AlurABEIL08}.
In fact, FO-complete temporal logics were introduced in~\cite{lmcs/AlurABEIL08}
by adding various kinds of \emph{Within} modalities to CaRet.
Such operators limit their operands to span only positions within the same
call-return pair, and hence the same subtree of the ST, at the cost of an exponential
jump in the complexity of model checking.

Another approach to FO-completeness is that of NWTL~\cite{lmcs/AlurABEIL08},
which is based on \emph{Summary} Until and Since operators.
Summary paths are made of either consecutive positions, or matched call-return pairs.
Thus, they can skip function bodies, and enter or exit them.
Summary-up and down paths, and the respective operators, can be obtained
from summary paths, enabling exclusive upward or downward movement in the ST\@.
In particular, summary-down operators may express properties limited to a single subtree.

In Corollary~\ref{cor:optl-in-potl}, we show that
CaRet $\subseteq$ NWTL $\subset$ POTL\@.

\subsubsection{Logics on OPLs}
The only way to overcome the limitations of nested words is to base a temporal logic
on a more general algebraic structure.
OPTL~\cite{ChiariMP20a} was introduced with this aim,
but it shares some of the limitations that CaRet has on nested words.
It features all LTL past and future operators,
plus the \emph{Matching} Next ($\lanext$) and Back ($\laback$) operators,
resp.\ equivalent to POTL $\lcunext$ and $\lcuback$, OP \emph{Summary} Until and Since,
and \emph{Hierarchical} Until and Since.
POTL has several advantages over OPTL in its ease of expressing requirements.

Given a set of PRs $\Pi$, an OPTL Summary Until $\luntil{\Pi}{}{}$
evaluated on a position $i$ considers paths $i = i_1 < i_2 < \dots < i_n = j$,
with $i \leq j$, such that for any $1 \leq p < n$,
\[
i_{p+1} =
\begin{cases}
  h & \text{if there exists $h$ s.t.\ $\chain(i_p, h)$ and $(i_p \doteq h$ or $i_p \gtrdot h)$, and $h \leq j$;}\\
  i_p + 1 & \text{if $i_p \pr i_{p+1}$ for some $\pi \in \Pi$, otherwise.}
\end{cases}
\]
The Summary Since $\lsince{\Pi}{}{}$ is symmetric: positions in the $\chain$ relation must be in the $\lessdot$ or $\doteq$ PRs.
Since the PRs checked on chain contexts are fixed, the user can control whether such paths go up or down in the ST only partially.
OPTL's $\luntil{\doteq \gtrdot}{\varphi}{\psi}$ is equivalent to POTL's $\lcuuntil{\varphi}{\psi}$,
and $\lsince{\lessdot \doteq}{\varphi}{\psi}$ to $\lcdsince{\varphi}{\psi}$: both operators only go upwards in the ST\@.
However, there is no OPTL operator equivalent to POTL's $\lcduntil{}{}$ or $\lcusince{}{}$, which go downward.
This makes it difficult to express function-local properties limited to a single subtree in OPTL\@.

\begin{figure}[tb]
\centering
\includegraphics{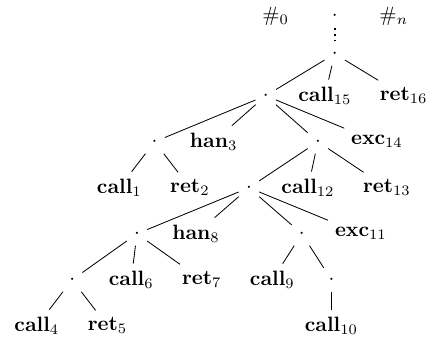}
\caption{Example OP word on OPM $M_\lcall$.}%
\label{fig:optl-example}
\end{figure}

E.g., consider POTL formula $\alpha := \llglob (\lthrow \implies \lcdback (\lhandle \land \leven{d} \mathrm{p}_A))$, which means that if an exception is thrown, it is always caught, and procedure $\mathrm{p}_A$ is called at some point inside the $\lhandle$-$\lthrow$ block.
One could try to translate it into OPTL with a formula such as $\beta := \llglob (\lthrow \implies \laback (\lhandle \land \luntil{\lessdot \doteq}{\top}{\mathrm{p}_A}))$.
Consider the OP word of Figure~\ref{fig:optl-example}.
When the until in $\beta$ is evaluated in the $\lhandle$ of pos.\ 3, its paths can only consider positions 4 and 5, because paths touching such positions cannot pass the $\gtrdot$ relation between 5 and 6.
Its paths, unlike those of its POTL counterpart, cannot jump between chain contexts in the $\lessdot$ relation, and cannot reach positions 6, 8, and 12 in this way.
If $\mathrm{p}_A$ held in pos.\ 6 and 10, $\alpha$ would be true in the word, but $\beta$ would be false.
Replacing the until with $\luntil{\lessdot \doteq \gtrdot}{\top}{\mathrm{p}_A}$ overcomes such issues, but it introduces another one: its paths would go past pos.\ 14, going outside of the subtree.
Thus, if $\mathrm{p}_A$ held only in pos.\ 15, $\alpha$ would be false, but this variant of $\beta$ would hold.

OPTL has Hierarchical operators, too.
Its yield-precedence Hierarchical Until ($\lhyuntil{}{}$) and since ($\lhysince{}{}$) operators,
when evaluated on a position $i$, consider paths $i < j_1 < j_2 < \dots < j_n$ s.t.\ $\chain(i,j_p)$ and $i \lessdot j_p$ for all $1 \leq p \leq n$, and there is no $k < j_1$ s.t.\ $\chain(i,k)$.
Their take-precedence counterparts ($\lhtuntil{}{}$ and $\lhtsince{}{}$) are symmetric.
Thus, such paths do not start in the position where until and since operators are evaluated, but always in a future position: this is another limitation of OPTL\@.
In fact, it is not possible to concatenate them to express complex properties on right (resp.\ left) contexts of chains sharing their left (resp.\ right) context, such as several function calls issued by the same function, or multiple function calls terminated by the same exception.
POTL has both Hierarchical Next/Back and Until/Since pairs which are composable, making it expressively complete on such positions.
For example, POTL formula
\(
\gamma := \llglob (\lcall \land \mathrm{p}_A \implies \lcdsince{(\lhusince{\top}{(\lcall \land \mathrm{p}_B)})}{(\ldback \# \lor \lcdback \#)})
\)
means that whenever procedure $\mathrm{p}_A$ is called, all procedures in the stack have previously invoked $\mathrm{p}_B$ (possibly excluding the one directly calling $\mathrm{p}_A$).
While $\lcdsince{}{}$ can be replaced with OPTL's $\lsince{\lessdot \doteq}{}{}$, POTL's $\lhusince{}{}$ cannot be easily translated.
In fact, OPTL's Hierarchical operators would only allow us to state that $\mathrm{p}_B$ is invoked by the procedures in the stack, but not necessarily before the call to $\mathrm{p}_A$.

The above intuition about OPTL's weaknesses are made formal in the next subsection.

\subsubsection{OPTL is not expressively complete}

In this section, we prove that no OPTL formula is equivalent to POTL formula
$\leven{d} \mathrm{p}_A$.
The proof is quite elaborate, which is unsurprising,
since the analogous problem of the comparison between CaRet and NWTL is still open.

First, we prove the following
\begin{lem}[Pumping Lemma for OPTL]%
\label{lemma:pumping}
Let $\varphi$ be an OPTL formula and $L$ an OPL, both defined on
a set of atomic propositions $AP$ and an OPM $M_{AP}$.
Then, for some positive integer $n$, for each $w \in L$, $|w| \geq n$,
there exist strings $u, v, x, y, z \in \powset{AP}^*$ such that
$w = uvxyz$, $|vy| > 1$, $|vxy| \leq n$
and for any $k > 0$ we have $w' = u v^k x y^k z \in L$;
for any $0 \leq j \leq k$ and $0 \leq i < |v|$ we have
$(w, |u| + i) \models \varphi$ iff $(w', |u| + j |v| + i) \models \varphi$,
and for any $0 \leq i < |y|$ we have
$(w, |u v^k x| + i) \models \varphi$ iff $(w', |u v^k x| + j |y| + i) \models \varphi$.
\end{lem}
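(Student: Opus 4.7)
The plan is to reduce the statement to a Bar-Hillel-style pumping lemma for context-free languages applied to an OPL that jointly tracks membership in $L$ and the truth value of every subformula of $\varphi$ at every position. The key observation is that OPTL can be translated to an equivalent OPA (as shown in~\cite{ChiariMP20a}), so a suitable product construction over an enriched alphabet can turn the logical preservation requirement into a plain OPL-membership requirement.

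Concretely, I would first expand the alphabet from $\powset{AP}$ to $\powset{AP \cup \clos{\varphi}}$, introducing a fresh propositional symbol $\mathrm{q}_\psi$ for every $\psi \in \clos{\varphi}$. I would then build an OPA $\mathcal{B}$ over this enlarged alphabet whose accepted language $\bar L$ consists of the \emph{annotated} words $\bar w$ such that the projection of $\bar w$ onto $\powset{AP}$ lies in $L$, and for every position $i$ and every $\psi \in \clos{\varphi}$, the symbol $\mathrm{q}_\psi$ is present at position $i$ of $\bar w$ iff $(w, i) \models \psi$. By induction on $\psi$, using the OPTL-to-OPA translation and closure of OPLs under Boolean operations, inverse alphabetic homomorphism, and intersection, $\bar L$ is an OPL. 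Hence it is a context-free language, and the classical CFL pumping lemma yields a constant $n$ for $\bar L$.

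Given $w \in L$ with $|w| \geq n$, the unique annotated lift $\bar w \in \bar L$ has $|\bar w| = |w| \geq n$, so the CFL pumping lemma gives a factorisation $\bar w = \bar u \bar v \bar x \bar y \bar z$ with $|\bar v \bar x \bar y| \leq n$, $|\bar v \bar y| > 1$ (see below), and $\bar u \bar v^k \bar x \bar y^k \bar z \in \bar L$ for all $k > 0$. Projecting to $\powset{AP}$ produces the required decomposition $w = uvxyz$, membership of every pumped word in $L$, and the preservation claim: for any $\psi \in \clos{\varphi}$, the label $\mathrm{q}_\psi$ at position $|u|+i$ of $\bar w$ and at position $|u|+j|v|+i$ of the pumped annotated word coincide, because both come from the identical letter of $\bar v$; membership of the pumped annotated word in $\bar L$ then forces $(w', |u|+j|v|+i) \models \psi$ iff $(w, |u|+i) \models \psi$. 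Specialising to $\psi = \varphi$ (and arguing symmetrically for the $y$-block) gives the statement.

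The step I expect to cost the most care is the construction of $\mathcal{B}$: defining the annotation-consistency language inductively requires, for each operator of POTL's OPTL-ancestor, an OPA that enforces the semantic clause relating $\mathrm{q}_\psi$ to the $\mathrm{q}_{\psi'}$'s for immediate subformulas $\psi'$, together with the $\chain$ relation encoded by the OPM; the remaining ingredient, ensuring $|vy| > 1$ rather than merely $|vy| \geq 1$, can be handled by the standard sharpening of the CFL pumping lemma that picks a nonterminal with two \emph{distinct} descendant occurrences on a sufficiently long root-to-leaf path in the parse tree, which in the operator-normal-form grammars underlying OPLs always produces $|vy| \geq 2$.
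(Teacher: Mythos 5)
Your proposal is correct and follows essentially the same route as the paper: annotate each position with the truth value of the formula, show the annotated language is an OPL by combining the OPTL-to-OPA model-checking construction of~\cite{ChiariMP20a} with closure under intersection, apply the classical CFL pumping lemma, and project back. The only (cosmetic) differences are that the paper annotates with a single bit for $\varphi$ rather than all of $\clos{\varphi}$, and reads that bit directly off the states of the model-checking automaton instead of rebuilding annotation-consistency automata by induction on subformulas.
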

\begin{proof}
Given a word $w \in L$, we define $\lambda(w)$ as the word of length $|w|$ such that,
if position $i$ of $w$ is labeled with $a$, then the same position in $\lambda(w)$
is labeled with $(a, 1)$ if $(w, i) \models \varphi$, and with $(a, 0)$ otherwise.
Let $\lambda(L) = \{\lambda(w) \mid w \in L\}$,
and $\lambda^{-1}$ is such that $\lambda^{-1}(\lambda(w)) = w$.
If we prove that $\lambda(L)$ is context-free,
from the classic Pumping Lemma~\cite{Harrison78} follows that, for some $n > 0$,
for all $\hat{w} \in \lambda(L)$ there exist strings
$\hat{u}, \hat{v}, \hat{x}, \hat{y}, \hat{z} \in (\powset{AP} \times \{0, 1\})^*$ such that
$\hat{w} = \hat{u} \hat{v} \hat{x} \hat{y} \hat{z}$, $|\hat{v} \hat{y}| > 1$,
$|\hat{v} \hat{x} \hat{y}| \leq n$
and for any $k > 0$ we have $\hat{w}' = \hat{u} \hat{v}^k \hat{x} \hat{y}^k \hat{z} \in \lambda(L)$.
The claim follows by applying $\lambda^{-1}$ to such strings,
and the word positions in which $\varphi$ holds in $\lambda^{-1}(\hat{w})$ are those labeled with $1$.

To prove that $\lambda(L)$ is context-free, we use the OPTL model checking construction
given in~\cite{ChiariMP20a}, which yields an OPA
$\mathcal{A}_\varphi = (\powset{AP}, \allowbreak M_{AP}, \allowbreak Q, \allowbreak I, \allowbreak F, \allowbreak \delta)$
accepting models of $\varphi$.
The states of $\mathcal{A}_\varphi$ are elements of the set $\clos{\varphi}$,
which contains $\varphi$ and all its subformulas.
Given a word $w$ compatible with $M_{AP}$, the accepting computations of $\mathcal{A}_\varphi$
are such that, for each $0 \leq i < |w|$, the state of $\mathcal{A}_\varphi$
prior to reading position $i$ contains $\psi \in \clos{\varphi}$ iff $(w, i) \models \psi$.

Thus, we build OPA
$\mathcal{A}_{\lambda(\powset{AP}^*)} = (\powset{AP} \times \{0, 1\}, \allowbreak M_{AP}, \allowbreak Q, \allowbreak I', \allowbreak F, \allowbreak \delta')$
that reads words on $(\powset{AP} \times \{0, 1\})^*$ and accepts $\lambda(\powset{AP}^*)$ as follows:
\begin{itemize}
\item $I'$ is the set of all states in $Q$ not containing past operators (and possibly $\varphi$);
\item $\delta'_\mathit{push}$ is such that if
  $(\Phi, a, \Theta) \in \delta_\mathit{push}$, then
  $(\Phi, (a, 1), \Theta) \in \delta_\mathit{push}$ if $\varphi \in \Phi$,
  and $(\Phi, (a, 0), \Theta) \in \delta_\mathit{push}$ otherwise;
\item $\delta'_\mathit{shift}$ is derived from $\delta_\mathit{shift}$ similarly;
\item $\delta'_\mathit{pop} = \delta_\mathit{pop}$.
\end{itemize}

\noindent
Since $L$ is an OPL, there exists an OPA $\mathcal{A}_L$ accepting it.
$\mathcal{A}_L$ can be easily modified to obtain $\mathcal{A}'_L$,
an OPA accepting all words $\hat{w} \in (\powset{AP} \times \{0, 1\})^*$ such that
the underlying word $w \in \powset{AP}^*$ is in $L$.
Language $\lambda(L)$ is the intersection between the language accepted
by $\mathcal{A}'_L$, and $\lambda(\powset{AP}^*)$.
Since OPLs are closed under intersection, $\lambda(L)$ is also an OPL\@.
\end{proof}

Let $L_\lcall$ be the max-language generated by OPM $M_\lcall$,
with the addition that $\mathrm{p}_A$ may appear in any word position.
We prove the following:

\begin{figure}[tb]
\centering
\includegraphics{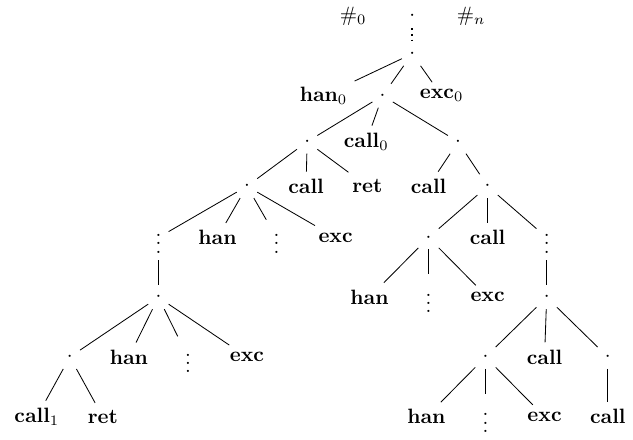}
\caption{Structure of a word in $L_\lcall$.}%
\label{fig:exc-tree}
\end{figure}

\begin{thm}%
\label{thm:optl-vs-potl}
Given the POTL formula $\leven{d} \mathrm{p}_A$,
for every OPTL formula $\varphi$ there exist a word $w \in L_\lcall$ and an integer $0 \leq i < |w|$
such that either $(w, i) \models \leven{d} \mathrm{p}_A$ and $(w, i) \not\models \varphi$,
or $(w, i) \not\models \leven{d} \mathrm{p}_A$ and $(w, i) \models \varphi$.
\end{thm}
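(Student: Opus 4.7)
The plan is to argue by contradiction using the Pumping Lemma for OPTL just proved. Suppose $\varphi \equiv \leven{d} \mathrm{p}_A$ on $L_\lcall$ for some OPTL formula $\varphi$, and let $n$ be the pumping constant associated with $\varphi$ and $L_\lcall$. The goal is to exhibit a word $w\in L_\lcall$ and a position $i$ such that pumping $w$ yields a word $w'$ whose corresponding position $i'$ disagrees with $(w,i)$ on $\leven{d} \mathrm{p}_A$, while by the Pumping Lemma $(w,i)\models\varphi$ iff $(w',i')\models\varphi$.

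Concretely, I would pick a word of the shape suggested by Figure~\ref{fig:exc-tree}: a deeply nested block of structural symbols (for example a long sequence of $\lcall$'s, possibly framed by $\lhandle$/$\lthrow$ guards to pin the outer chain contexts), with exactly one position $p$ carrying the label $\mathrm{p}_A$. The word is designed so that (i) there exists a position $i$ whose chain $\chi(i,j)$ has body containing $p$, so $(w,i)\models\leven{d}\mathrm{p}_A$; (ii) the material between $i$ and $p$ is a "uniform filler" (e.g.\ iterated $\lcall$'s or $\lcall\lret$ pairs) longer than $n$, while the material outside this filler is made of unique, non-repeatable markers. In this way, any factor $vxy$ of length at most $n$ produced by the Pumping Lemma is forced to lie entirely within the filler region. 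Pumping with a suitable $k>1$ then enlarges the filler in an unbalanced way, shifting the right context of the chain rooted at $i$ to the left of $p$ (equivalently: enlarging the subtree below $i$ so that $p$ falls outside it, depending on the sign of the asymmetry). The corresponding position $i'$ in $w'$ therefore satisfies $(w',i')\not\models\leven{d}\mathrm{p}_A$, contradicting the Pumping Lemma's guarantee that $\varphi$ agrees on $(w,i)$ and $(w',i')$.

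The main obstacle is the classical "where can $vxy$ hide" problem: the Pumping Lemma gives us no control over where $v$ and $y$ sit inside $w$, so the construction of $w$ has to trap $vxy$ inside the intended filler region and guarantee that any such pumping modifies the chain $\chi(i,\cdot)$ in the desired way. This is handled by making the filler the only long uniform portion of $w$, bounding all other regions by length $<n$, and choosing unique labels (outside the structural alphabet used in the filler) at the critical boundaries, so that the Pumping Lemma cannot select a $vxy$ that escapes the filler. A secondary delicate point is that pumping may shift where the right context of $\chi(i,\cdot)$ ends up, so the argument requires a small case analysis on the precedence relation between the filler symbols and the surrounding markers, using properties \ref{item:border-prop}--\ref{item:upward-prop} of $\chi$ to track exactly which position becomes the new right context of the subtree rooted at $i'$. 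Membership $w,w'\in L_\lcall$ is immediate because $M_\lcall$ is complete and $L_\lcall$ is its max-language; the real work lies in the chain-level bookkeeping that certifies the required flip of $\leven{d}\mathrm{p}_A$.
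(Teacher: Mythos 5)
Your proposal has a genuine gap, and it lies in how you use Lemma~\ref{lemma:pumping}. That lemma preserves the truth of $\varphi$ only at positions \emph{inside the pumped factors} $v$ and $y$: it relates $(w,|u|+i)$ for $0\le i<|v|$ to the corresponding positions in the copies of $v$ in $w'$, and similarly for $y$; it says nothing about the truth of $\varphi$ at positions in $u$, $x$ or $z$. Your contradiction needs ``$(w,i)\models\varphi$ iff $(w',i')\models\varphi$'' at the evaluation position $i$, which in your construction is the left context of a chain sitting \emph{outside} the uniform filler, so the lemma simply does not deliver that equivalence. If instead you push $i$ into $v$, a second problem appears: $v$ and $y$ are pumped by the same exponent $k$ and the copies are matched in a balanced, nested fashion, so wherever the unique $\mathrm{p}_A$ sits ($x$, $z$, or inside $y$ --- in which case it is replicated into every copy), all copies of the evaluation position agree on $\leven{d}\mathrm{p}_A$. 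There is no ``unbalanced'' pumping available to move $\mathrm{p}_A$ across the chain-body boundary of exactly one copy, so the required flip of $\leven{d}\mathrm{p}_A$ never materialises.

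This is precisely why the paper's proof is not a one-shot pumping argument. There, Lemma~\ref{lemma:pumping} plays only an auxiliary role: it shows that no OPTL subformula can act as a ``stopper'' singling out the outer $\lthrow_0$ from the nested $\lthrow$s, since its truth pattern is forced to repeat across the pumped copies. The heart of the argument is structural: any OPTL summary (or LTL) until that descends deep enough must admit the $\gtrdot$ relation and can therefore leak past $\lthrow_0$, so confining it to the $\lhandle_0$--$\lthrow_0$ subtree forces additional nested until/since operators; an induction on the $\lhandle$--$\lthrow$ nesting depth $\Theight$ then shows that depth $n$ requires at least $2n+1$ nested until/since operators, while any fixed formula has bounded nesting. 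You would need to recover an argument of this kind (or substantially strengthen the pumping statement so that it preserves $\varphi$ at the evaluation position) for your plan to go through.
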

\begin{proof}
Figure~\ref{fig:exc-tree} shows the structure of the syntax trees of words in a subset of $L_\lcall$.
Dots between $\lhandle$--$\lthrow$ pairs can be replaced with repetitions of the whole tree structure,
and other dots with the repetition of surrounding tree fragments
(e.g., $\lcall$--$\lret$ or $\lhandle$--$\lthrow$).
$\lhandle_0$ is the position in which $\varphi$ is evaluated, and $\lthrow_0$ is its matched $\lthrow$.
$\lcall_1$ is the $\lcall$ right after $\lhandle_0$, and $\lcall_0$ is the one at the highest
level of the subtree between $\lhandle_0$ and $\lthrow_0$.
$\lcall$s between $\lcall_0$ and $\lthrow_0$ do not have a corresponding $\lret$,
and are terminated by $\lthrow_0$.
The word delimited by $\lhandle_0$ and $\lthrow_0$ is itself part of a larger tree with the same structure.
For $\varphi$ to be equivalent to $\leven{d} \mathrm{p}_A$, it must be able to
\begin{enumerate}
\item look for the symbol $\mathrm{p}_A$ in all positions between $\lhandle_0$ and $\lthrow_0$; and
\item not consider any positions before $\lhandle_0$ or after $\lthrow_0$.
\end{enumerate}
In the following, we show that any OPTL formula $\varphi$ fails to satisfy both requirements, thus
\begin{enumerate}
\item one can hide $\mathrm{p}_A$ in one of the positions not covered,
  so that $\varphi$ is false in $\lhandle_0$, but $\leven{d} \mathrm{p}_A$ is true; or
\item put $\mathrm{p}_A$ in one of the positions outside $\lhandle_0$--$\lthrow_0$ reached by $\varphi$,
  so that it is true in $\lhandle_0$, but $\leven{d} \mathrm{p}_A$ is not.
\end{enumerate}

\noindent
If $\varphi$ is evaluated on position $\lhandle_0$, it must contain some modal operator
based on paths that reach each position between $\lhandle_0$ and $\lthrow_0$.
The length of the word between $\lhandle_0$ and $\lthrow_0$ has no limit,
so $\varphi$ must contain at least an until operator, which may be a LTL Until,
an OPTL Hierarchical Until, or an OPTL Summary Until $\luntil{\Pi}{}{}$.
For $\luntil{\Pi}{}{}$, we must have $\gtrdot \in \Pi$,
or the path would not be able to reach past position $\lret_1$
(remember that an OPTL summary path cannot skip chains with contexts in the $\lessdot$ relation).
The presence of $\gtrdot$ allows the Summary Until to reach positions past $\lthrow_0$:
the formula could be true if $\mathrm{p}_A$ appears after $\lthrow_0$,
but not between $\lhandle_0$ and $\lthrow_0$, unlike $\leven{d} \mathrm{p}_A$.
To avoid this, the path must be stopped earlier, by embedding an appropriate subformula
as the left operand of the until.

Suppose there exists a formula $\psi$ that is true in $\lthrow_0$, and false in all positions
between $\lhandle_0$ and $\lthrow_0$.
By Lemma~\ref{lemma:pumping}, there exists an integer $n$ such that for any $w \in L_\lcall$
longer than $n$ there is $w' = u v^k x y^k z \in L_\lcall$, for some $k > 0$,
such that either (a) $\psi$ never holds in $v^k x y^k$, or (b) it holds at least $k$ times in there.
We can take $w$ such that $\lhandle_0$ and $\lthrow_0$ both appear after position $n$,
and they contain nested $\lhandle$--$\lthrow$ pairs.
In case (a), $\psi$ cannot distinguish $\lthrow_0$ from nested $\lthrow$s, so a $\varphi$ based on
$\psi$ is not equivalent to $\leven{d} \mathrm{p}_A$ in $\lhandle_0$.
The same can be said in case (b), by evaluating $\varphi$ in a $\lhandle$ from $v^i$ with $i < k$.
In this case, also chaining multiple untils, each one ending in a position in which $\psi$ holds,
does not work, as $k$ can be increased beyond the finite length of any OPTL formula.

The above argument holds verbatim for LTL Until, and does not change if we prepend
LTL or abstract next operators to the until, because the length of the branch
between $\lcall_1$ and $\lcall_0$ is unlimited.
The argument for using since operators starting from $\lthrow_0$ is symmetric.
If both until and since operators are used, it suffices to apply the Pumping Lemma twice
(one for until and one for since), and take a value of $k$ large enough that
a part of the string cannot be reached by the number of until and since operators in the formula.
If Hierarchical operators are used, the argument does not change,
as they still need nested until or since operators to cover the whole subtree.

This argument also holds when the formula contains (possibly nested) negated until operators.
This is trivial if their paths cannot reach part of the subtree between $\lhandle_0$ and $\lthrow_0$.
If, instead, they can reach positions past $\lthrow_0$, we can build a word with $\mathrm{p}_A$
in one of such positions, but not between $\lhandle_0$ and $\lthrow_0$.
To distinguish it from a word with $\mathrm{p}_A$ between $\lhandle_0$ and $\lthrow_0$,
the formula would need a subformula that can distinguish positions between $\lhandle_0$ and $\lthrow_0$
from those outside, which would contradict Lemma~\ref{lemma:pumping}.

Until now, we have proved that a formula equivalent to $\leven{d} \mathrm{p}_A$ cannot
contain until or since operators that stop exactly at $\lthrow_0$.
However, they could be stopped earlier.
In the following, we show that any such OPTL formula can only work for words of a limited length.
Hence, no OPTL formula is equivalent to $\leven{d} \mathrm{p}_A$ on all words in $L_\lcall$.

Let $w \in L_\lcall$, and $x = \lhandle \, y \, \lthrow$ a subword of $w$
with the structure of Figure~\ref{fig:exc-tree}, in which each $\lhandle$ has a matched $\lthrow$, and conversely.
We define $\Theight(x) = 0$ if $y$ contains no positions labeled with $\lhandle$ or $\lthrow$
(hence, only $\lcall$s and $\lret$s).
Otherwise, let $x' = \lhandle \, y' \, \lthrow$ be the proper subword of $y$
with the maximum value of $\Theight(x')$: we set $\Theight(x) = \Theight(x') + 1$.

We prove by induction on $\Theight(x)$ that any OPTL formula evaluated in the first position of $x$
must contain nested until or since operators with a nesting depth
of at least $2 \cdot \Theight(x) + 1$ to be equivalent to $\leven{d} \mathrm{p}_A$.

If $\Theight(x) = 0$, at least one until or since operator is needed, as the length of $x$ is not fixed.
E.g., OPTL formula $\luntil{\lessdot \doteq \gtrdot}{\neg \lthrow}{\mathrm{p}_A}$ suffices.

If $\Theight(x) = n > 0$, Figure~\ref{fig:exc-tree} shows a possible structure of $x$.
Any Summary Until in the formula must be nested into another operator,
or its paths would jump to, and go past, the last position of $x$ ($\lthrow_0$).
A Summary Until could be, instead, nested into any number of nested next operators,
to be evaluated in one of the positions shown in Figure~\ref{fig:exc-tree}
between $\lhandle_0$ and $\lcall_0$.
(The tree fragments between $\lcall_1$ and $\lcall_0$ can be repeated enough times
so that the next operators alone cannot reach $\lcall_0$.)
As noted earlier, any such summary until must allow for paths
with consecutive positions in the $\gtrdot$ relation.
It may also jump to $\lthrow_0$ by following the chain relation,
because $\lcall \gtrdot \lthrow$.
Hence, the until must be stopped earlier by choosing appropriate operands
(e.g., $\neg \lanext \lthrow$ as the left operand).
However, this leaves the subword between $\lcall_0$ and $\lthrow_0$ unreached,
so any of its positions containing (or not) $\mathrm{p}_A$ would be ignored.
This can only be solved with another until operator, so at least two are needed.
If it is a Summary Until, then it must not allow the $\gtrdot$ relation,
or it could, again, escape $\lthrow_0$ (e.g.\ by skipping chains between $\lcall$s and $\lthrow_0$).
The argument can be extended by considering an LTL Until which stops anywhere before $\lthrow_0$,
or since operators evaluated in $\lthrow$ (e.g., nested in a $\lanext$ operator).
The same can be said for Hierarchical operators, which can cover only a part
of the subtree if used alone.

Let $x' = \lhandle \, y' \, \lthrow$ be a proper subword of $y$ with $\Theight(x') = n-1$.
Suppose it appears before $\lcall_0$ (the other case is symmetric).
It needs at least an until or since operator to be covered,
which must not escape $\lhandle_0$ or $\lthrow_0$.
The Pumping Lemma can be used to show that no OPTL formula can distinguish positions
in $x$ or $x'$ from those outside.
Thus, a formula with until or since operators that do not exit $y'$ is needed.
By the inductive hypothesis, it consists of at least $2 (n-1) + 1$ until or since operators,
thus $x$ needs $2 n + 1$ of them.

Note that the argument also holds if the until formulas are negated,
because negation cannot change the type of paths considered by an operator,
and cannot decrease the number of nested untils needed to cover the whole subtree.
\end{proof}

\section{First-Order Completeness on Finite Words}%
\label{sec:fo-completeness}

To show that POTL $\subseteq$ FOL on finite OP words,
we give a direct translation of POTL into FOL\@.
Proving that FOL $\subseteq$ POTL is more involved:
we translate \xuntil{}~\cite{Marx2004}, a logic on trees, into POTL\@.
\xuntil{} (defined in Section~\ref{sec:cxpath-translation})
is a logic on trees introduced to prove the expressive completeness of Conditional XPath,
and from its being equivalent to FOL on trees~\cite{Marx2005,LibkinS10}
we derive a FO-completeness result for POTL\@.

\subsection{First-Order Semantics of POTL}%
\label{sec:fo-semantics}

We show that POTL can be expressed with FOL
equipped with monadic relations for atomic propositions,
a total order on positions, and the chain relation between pairs of positions.
We define below the translation function $\nu$,
such that for any POTL formula $\varphi$, word $w$ and position $i$,
$(w, i) \models \nu_\varphi(x)$ iff $(w, i) \models \varphi$.
The translation for propositional operators is trivial.

For temporal operators, we first need to define a few auxiliary formulas.
We define the successor relation as the FO formula
\[
  \lsucc(x, y) := x < y \land \neg \exists z (x < z \land z < y).
\]
The PRs between positions can be expressed by means of propositional combinations
of monadic atomic relations only.
Given a set of atomic propositions $a \subseteq AP$, we define formula $\sigma_a(x)$,
stating that all and only propositions in $a$ hold in position $x$, as follows:
\begin{equation}
  \sigma_a(x) :=
    \bigwedge_{\mathrm{p} \in a} \mathrm{p}(x)
    \land
    \bigwedge_{\mathrm{p} \in AP \setminus a} \neg \mathrm{p}(x)%
\label{eq:sigmaa}
\end{equation}
For any pair of FO variables $x, y$ and $\prf \in \{\lessdot, \doteq, \gtrdot\}$,
we can build formula
\[
  x \pr y :=
    \bigvee_{a, b \subseteq AP \mid a \pr b} (\sigma_a(x) \land \sigma_b(y)).
\]

The following translations employ the three FO variables $x, y, z$, only.
This, in addition to the FO-completeness result for POTL,
proves that FOL on OP words retains the three-variable property, which holds in regular words.

\subsubsection{Next and Back Operators}
\[
  \nu_{\ldnext \varphi}(x) :=
    \exists y \Big(\lsucc(x, y) \land (x \lessdot y \lor x \doteq y) \land \exists x \big(x = y \land \nu_\varphi(x)\big)\Big)
\]
$\nu_{\ldback \varphi}(x)$ is defined similarly, and $\nu_{\lunext \varphi}(x)$ and $\nu_{\luback \varphi}(x)$
by replacing $\lessdot$ with $\gtrdot$.
\[
  \nu_{\lcdnext \varphi}(x) :=
    \exists y \big(\chain(x,y) \land (x \lessdot y \lor x \doteq y) \land \exists x (x = y \land \nu_\varphi(x))\big)
\]
$\nu_{\lcdback \varphi}(x)$, $\nu_{\lcunext \varphi}(x)$ and $\nu_{\lcuback \varphi}(x)$
are defined similarly.

\subsubsection{Downward/Upward Summary Until/Since}
The translation for the DS until operator can be obtained by noting that,
given two positions $x$ and $y$,
the DSP between them, if it exists, is the one that skips all chain bodies
entirely contained between them, among those with contexts
in the $\lessdot$ or $\doteq$ relations.
A position $z$ being part of such a path can be expressed with
formula $\neg \gamma(x,y,z)$ as follows:
\begin{align*}
  \gamma(x,y,z) &:=
    \gamma_L(x,z) \land \gamma_R(y,z) \\
  \gamma_L(x,z) &:=
    \exists y \Big(x \leq y \land y < z \land \exists x \big(z < x \land \chain(y,x) \land (y \lessdot x \lor y \doteq x)\big)\Big) \\
  \gamma_R(y,z) &:=
    \exists x \Big(z < x \land x \leq y \land \exists y \big(y < z \land \chain(y,x) \land (y \lessdot x \lor y \doteq x)\big)\Big)
\end{align*}
$\gamma(x,y,z)$ is true iff $z$ is not part of the DSP between $x$ and $y$, while $x \leq z \leq y$.
In particular, $\gamma_L(x,z)$ asserts that $z$ is part of the body of a chain whose left context is after $x$,
and $\gamma_R(y,z)$ states that $z$ is part of the body of a chain whose right context is before $y$.
Since chain bodies cannot cross, either the two chain bodies are actually the same one,
or one of them is a sub-chain nested into the other.
In both cases, $z$ is part of a chain body entirely contained between $x$ and $y$,
and is thus not part of the path.

Moreover, for such a path to exist, each one of its positions must be
in one of the admitted PRs with the next one.
Formula
\[
\delta(y,z) :=
  \exists x \big(z < x \land x \leq y
  \land (z \lessdot x \lor z \doteq x)
  \land \neg \gamma(z, y, x)
  \land (\lsucc(z, x) \lor \chain(z, x))\big)
\]
asserts this for position $z$, with the path ending in $y$.
(Note that by exchanging $x$ and $z$ in the definition of $\gamma(x,y,z)$ above,
one can obtain $\gamma(z, y, x)$ without using any additional variable.)
Finally, $\lcduntil{\varphi}{\psi}$ can be translated as follows:
\begin{align*}
  \nu_{\lcduntil{\varphi}{\psi}}(x) :=
    \exists y \Big(&x \leq y \land \exists x (x = y \land \nu_\psi(x)) \\
      &\land \forall z \big(x \leq z \land z < y \land \neg \gamma(x, y, z)
        \implies \exists x (x = z \land \nu_\varphi(x)) \land \delta(y,z) \big)\Big)
\end{align*}

The translation for the DS since operator is similar:
\begin{align*}
  \nu_{\lcdsince{\varphi}{\psi}}(x) :=
    \exists y \Big(&y \leq x \land \exists x (x = y \land \nu_\psi(x)) \\
      &\land \forall z \big(y < z \land z \leq x \land \neg \gamma(y, x, z)
        \implies \exists x (x = z \land \nu_\varphi(x)) \land \delta(x,z) \big)\Big)
\end{align*}
$\nu_{\lcuuntil{\varphi}{\psi}}(x)$ and $\nu_{\lcusince{\varphi}{\psi}}(x)$
are defined as above, substituting $\gtrdot$ for $\lessdot$.

\subsubsection{Hierarchical Operators}
Finally, below are the translations for two hierarchical operators,
the others being symmetric.
\begin{align*}
  \nu_{\lhunext \varphi}(x) :=
    &\exists y \Bigg(y < x \land \chain(y,x) \land y \lessdot x \land \\
        &\qquad \exists z \Big(x < z \land \chain(y,z) \land y \lessdot z \land \exists x (x = z \land \nu_\varphi(x)) \\
        &\qquad \quad \land \forall y \big(x < y \land y < z \implies \forall z (\chain(z,x) \land z \lessdot x \implies \neg \chain(z,y))\big)\Big)\Bigg)
\end{align*}
\begin{align*}
  \nu_{\lhuuntil{\varphi}{\psi}}(x) :=
    \exists z \bigg(&z < x \land z \lessdot x \land \chain(z,x) \land \\
      &\exists y \Big(x \leq y \land \chain(z,y) \land z \lessdot y \land \exists x (x = y \land \nu_\psi(x)) \land \\
        &\quad \forall z \big(x \leq z \land z < y \land \exists y (y < x \land y \lessdot x \land \chain(y,x) \land \chain(y,z)) \\
          &\qquad\qquad \implies \exists x (x = z \land \nu_\varphi(x))\big)\Big)\bigg)
\end{align*}

\subsection{Expressing \texorpdfstring{\xuntil{}}{Xuntil} in POTL}

To translate \xuntil{} to POTL, we give an isomorphism between OP words
and (a subset of) unranked ordered trees (UOT),
the structures on which \xuntil{} is defined.
First, we show how to translate OP words into UOTs, and then the reverse.

\subsubsection{OPM-compatible Unranked Ordered Trees}

\begin{defi}[Unranked Ordered Trees]
A UOT is a tuple $T = \langle S, \rchild, \rsibl, L \rangle$.
Each node is a sequence of child numbers, representing the path from the root to it.
$S$ is a finite set of finite sequences of natural numbers closed under the prefix operation,
and for any sequence $s \in S$,
if $s \cdot k \in S$, $k \in \mathbb{N}$, then either $k = 0$ or $s \cdot (k-1) \in S$
(by $\cdot$ we denote concatenation).
$\rchild$ and $\rsibl$ are two binary relations called the \emph{descendant} and
\emph{following sibling} relation, respectively.
For $s, t \in S$, $s \rchild t$ iff $t$ is any child of $s$
($t = s \cdot k$, $k \in \mathbb{N}$, i.e.\ $t$ is the $k$-th child of $s$),
and $s \rsibl t$ iff $t$ is the immediate sibling to the right of $s$
($s = r \cdot h$ and $t = r \cdot (h+1)$, for $r \in S$ and $h \in \mathbb{N}$).
$L \colon AP \rightarrow \powset{S}$ is a function that maps each
atomic proposition to the set of nodes labeled with it.
We denote as $\uotrees$ the set of all UOTs.
\end{defi}

Given an OP word $w = \langle U, <, M_{\powset{AP}}, P \rangle$,
it is possible to build a UOT
$T_w = \langle S_w, \allowbreak \rchild, \allowbreak \rsibl, \allowbreak L_w \rangle \in \uotrees$
with labels in $\powset{AP}$ isomorphic to $w$.
To do so, we define a function $\tau \colon U \rightarrow S_w$,
which maps positions of $w$ into nodes of $T_w$.
First, $\tau(0) = 0$: position 0 is the root node, and the last $\#$ is its rightmost child.
Given any position $i \in U$:
\begin{itemize}
\item
  if $i \doteq i+1$, then $\tau(i+1) = \tau(i) \cdot 0$ is the only child of $i$;
\item
  if $i \gtrdot i+1$, then $i$ has no children;
\item
  if $i \lessdot i+1$, then the leftmost child of $i$ is $i+1$
  ($\tau(i+1) = \tau(i) \cdot 0$).
  Moreover, if $j_1 < j_2 < \dots < j_n$ is the largest set of positions such that $\chain(i,j_k)$
  and either $i \lessdot j_k$ or $i \doteq j_k$ for $1 \leq k \leq n$,
  then $\tau(j_k) = \tau(i) \cdot k$.
\end{itemize}
In general, $i$ is in the $\lessdot$ relation with all of its children,
except possibly the rightmost one, with which $i$ may be in the $\doteq$ relation
(cf.\ property~\ref{item:upward-prop} of the $\chain$ relation).

This way, every position $i$ in $w$ appears in the UOT exactly once.
Indeed, if either $(i-1) \lessdot i$ or $(i-1) \doteq i$, then $i$ is a child of $i-1$.
Conversely, $(i-1) \gtrdot i$ iff $i$ is the right context of at least one chain.
Thus, consider $j$ s.t.\ $\chain(j,i)$, and for no $j' < j$ we have $\chain(j',i)$:
by property~\ref{item:downward-prop} of $\chain$, either $j \doteq i$ or $j \lessdot i$.
So, $i$ is a child of $j$ by the third rule above, and of no other node,
because if $(i-1) \gtrdot i$, then no other rule applies.

Finally, $\tau(i) \in L_w(\mathrm{a})$ iff $i \in P(\mathrm{a})$ for all $\mathrm{a} \in AP$,
so each node in $T_w$ is labeled with the set of atomic
propositions that hold in the corresponding word position.
We denote as $T_w = \tau(w)$ the UOT obtained by applying $\tau$
to every position of an OP word $w$.
Figure~\ref{fig:uotree-example} shows the translation of the word of Figure~\ref{fig:potl-example-word}
into an UOT\@.

\begin{figure}
\includegraphics{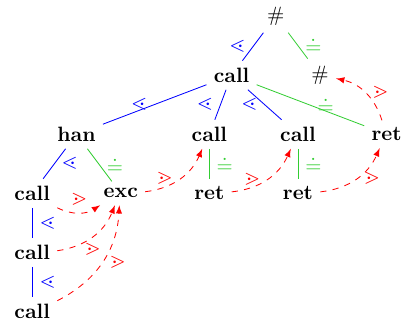}
\caption{The UOT corresponding to the word of Figure~\ref{fig:potl-example-word},
  and the ST of Figure~\ref{fig:sttree-example}.
  PRs between adjacent nodes are highlighted:
  $\mathord{\lessdot}$ is blue and $\mathord{\doteq}$ is green.
  A dashed red arrow connects each node to its $\ra$, if any.}%
\label{fig:uotree-example}
\end{figure}

As for the other way of the isomorphism, notice that we are considering
only a subset of UOTs.
In fact, we only consider UOTs whose node labels are compatible with a given OPM $M_{\powset{AP}}$.
In order to define the notion of OPM compatibility for UOTs,
we need to introduce the \emph{right context} ($\ra$) of a node.
Given a UOT $T$ and a node $s \in T$, the $\ra$ of $s$ is denoted $\ra(s)$.
If $s$ has a child $s'$ such that $s \doteq s'$, then $\ra(s)$ is undefined.
Otherwise, if $r$ is the leftmost right sibling of $s$, then $\ra(s) = r$;
if $s$ has no right siblings, then $\ra(s) = \ra(p)$, where $p$ is the parent of $s$.
In Figure~\ref{fig:uotree-example}, nodes are linked to their $\ra$ by a dashed red arrow.

In the following, for any nodes $s, s'$ and $\prf \in \{ \lessdot, \doteq, \gtrdot \}$,
we write $s \pr s'$ meaning that $a \pr b$,
where $a = \{\mathrm{p} \mid s \in L(\mathrm{p})\}$, and
$b = \{\mathrm{p} \mid s' \in L(\mathrm{p})\}$.
\begin{defi}[OPM-compatible UOTs]
We denote the set of UOTs compatible with an OPM $M$ as $\uotrees_M$.
A UOT $T$ is in $\uotrees_M$ iff the following properties hold.
The root node and its rightmost child are the only ones labeled with $\#$.
For any node $s \in T$, its rightmost child $r$, if any,
is such that either $s \lessdot r$ or $s \doteq r$.
For any other child $s' \neq r$ of $s$, we have $s \lessdot s'$.
If $\ra(s)$ exists, then $s \gtrdot \ra(s)$.
\end{defi}

Given a tree $T \in \uotrees_M$ with labels on $\powset{AP}$,
it is possible to build an OP word $w_T$ isomorphic to $T$.
Indeed,
\begin{lem}%
\label{lemma:tau-isomorphism}
Given an OP word $w$ and the UOT $T_w = \tau(w)$,
function $\tau$ is an isomorphism between positions of $w$ and nodes of $T_w$.
\end{lem}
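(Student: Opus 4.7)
The strategy is to promote the surjectivity argument given informally in the remark just before the lemma to a full isomorphism by verifying bijectivity together with preservation of labels, of the linear order, and of the $\chain$/$\ra$ relation. Much of the bookkeeping for bijectivity is already carried out in the preceding remark, so the work consists of organizing it and then handling the preservation part.

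\emph{Bijection.} Injectivity is immediate: $\tau(i)$ is fixed by the unique sequence of precedence-driven moves made while scanning positions $0, 1, \dots, i$, so distinct positions produce distinct paths in $T_w$. For well-definedness and surjectivity I reuse the case analysis on $(i-1) \pr i$ in the remark before the lemma: if $(i-1) \doteq i$ or $(i-1) \lessdot i$, then $i$ is placed as a child of $\tau(i-1)$; if $(i-1) \gtrdot i$, then properties~\ref{item:border-prop} and~\ref{item:downward-prop} of $\chain$ supply the smallest $j$ with $\chain(j,i)$ satisfying $j \doteq i$ or $j \lessdot i$, and $i$ is placed as a child of $\tau(j)$. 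Mutual exclusivity of the three precedence relations rules out double attachments. Finally, that the child-number assignments yield the prefix-closed, zero-indexed set $S_w$ required by the definition of a UOT follows from property~\ref{item:upward-prop} of $\chain$: the positions $j_1 < \dots < j_n$ with $\chain(i, j_p)$ and $i \lessdot j_p$ or $i \doteq j_p$ are linearly ordered and fill the child slots $1, \dots, n$ in order after the leftmost child $i+1$ at slot $0$.

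\emph{Preservation of structure and main obstacle.} Labels agree by construction of $L_w$. The linear order on $U$ corresponds to the document (left-to-right depth-first) order on $T_w$ because the three cases defining $\tau$ exhaust the three ways of continuing such a traversal. The chain relation $\chain(i,j)$ corresponds to $\tau(j) = \ra(\tau(i))$: when $i \lessdot j$ or $i \doteq j$, $\tau(j)$ is a non-leftmost or rightmost child of the appropriate ancestor of $\tau(i)$; when $i \gtrdot j$, one reaches $\tau(j)$ by climbing ancestors exactly as in the recursive definition of $\ra$. The main obstacle is verifying that the resulting tree actually belongs to $\uotrees_{M_{\powset{AP}}}$: the rightmost child of any node must be in the $\lessdot$ or $\doteq$ relation with its parent, all other children in $\lessdot$, and $s \gtrdot \ra(s)$ whenever $\ra(s)$ exists. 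These axioms are exactly what properties~\ref{item:downward-prop} and~\ref{item:upward-prop} of $\chain$ encode, so the check is case-by-case but mechanical, and the potentially unbounded $\chain$-fan-out is absorbed by the ``largest set'' clause in the definition of $\tau$.
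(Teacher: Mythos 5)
Your route differs from the paper's: you try to verify bijectivity and structure preservation directly, whereas the paper constructs the inverse explicitly, defining the linearization $\invtau$ and proving by induction on the tree that $\invtau(s)$ always has the form $a_0 x_0 a_1 x_1 \dots a_n x_n$ with $a_k \doteq a_{k+1}$ and either $x_k = \varepsilon$ or $\ochain{a_k}{x_k}{a_{k+1}}$, so that re-parsing the frontier of $T_w$ under the OPM reproduces exactly the tree. That induction --- in particular the step showing $\last(\invtau(c_p)) \gtrdot \first(\invtau(c_{p+1}))$ for consecutive children, so that each $\invtau(c_p)$ delimits a chain --- is the substantive content of the lemma, and it is the step your proposal replaces with an assertion. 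Your claim that the linear order on $U$ becomes document order on $T_w$ ``because the three cases defining $\tau$ exhaust the three ways of continuing such a traversal,'' and more generally that $<$ and $\chain$ are recoverable from $\rchild$ and $\rsibl$ (which is what the Proposition following the lemma needs), presupposes exactly this fact rather than proving it.

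There is also a concrete error in your statement of the correspondence for the chain relation: it is not true that $\chain(i,j)$ corresponds to $\tau(j)=\ra(\tau(i))$. The chain relation is one-to-many while $\ra$ is single-valued, and $\ra(s)$ is undefined precisely when $s$ has a child in the $\doteq$ relation --- i.e., precisely when some $\chain(i,j)$ with $i \doteq j$ holds. In Figure~\ref{fig:potl-example-word}, $\chain(1,7)$, $\chain(1,9)$ and $\chain(1,11)$ all hold, yet $\ra(\tau(1))$ is undefined; those right contexts are the non-leftmost children of $\tau(1)$. The correct dichotomy is: when $i \lessdot j$ or $i \doteq j$, $\tau(j)$ is a non-leftmost child of $\tau(i)$ itself (not of an ancestor of $\tau(i)$, as you write), and only when $i \gtrdot j$ does one have $\tau(j) = \ra(\tau(i))$. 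Since the point of the isomorphism is to exhibit a tree-definable relation equivalent to $\chain$, getting this correspondence right is not optional, and the version you state would make the subsequent FO translation fail on every one-to-many chain.
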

\begin{proof}
We define function $\invtau : S \rightarrow \powset{AP}^+$,
which maps a UOT node to the subword corresponding to the subtree rooted in it.
For any node $s \in T$, let its label $a = \{\mathrm{p} \mid s \in L(\mathrm{p})\}$,
and let $c_0, c_1 \dots c_n$ be its children, if any.
$\invtau(s)$ is defined as $\invtau(s) = a$ if $s$ has no children,
and $\invtau(s) = a \cdot \invtau(c_0) \cdot \invtau(c_1) \cdots \invtau(c_n)$ otherwise.
We prove $\invtau(s)$ is an OP word.

We need to prove by induction on the tree structure
that for any tree node $s$, $\invtau(s)$ is of the form
$a_0 x_0 a_1 x_1 \dots a_n x_n$, with $n \geq 0$,
and such that for $0 \leq k < n$, $a_k \doteq a_{k+1}$
and either $x_k = \varepsilon$ or $\ochain{a_k}{x_k}{a_{k+1}}$.
In the following, we denote as $\first(x)$ the first position of a string $x$,
and as $\last(x)$ the last one.
Indeed, for each $0 \leq i < n$ we have $a \lessdot \first(\invtau(c_i))$, and
the rightmost leaf $f_i$ of the tree rooted in $c_i$ is such that $\ra(f_i) = c_{i+1}$.
Since $f_i = \tau(\last(\invtau(c_i)))$ and $c_{i+1} = \tau(\first(\invtau(c_{i+1})))$,
we have $\last(\invtau(c_i)) \gtrdot \first(\invtau(c_{i+1}))$.
So, $\ochain{a}{\invtau(c_i)}{\first(\invtau(c_{i+1}))}$.
As for $\invtau(c_n)$, if $a \lessdot c_n$ then $\invtau(s) = a_0 x_0$
(and $a_0 \lessdot \first(x_0)$),
with $a_0 = a$ and $x_0 = \invtau(c_0) \cdot \invtau(c_1) \cdots \invtau(c_n)$.
If $a \doteq c_n$, consider that, by hypothesis, $\invtau(c_n)$ is of the form $a_1 x_1 a_2 \dots a_n x_n$.
So $\invtau(s) = a_0 x_0 a_1 x_1 a_2 \dots a_n x_n$,
with $a_0 = a$ and $x_0 = \invtau(c_0) \cdot \invtau(c_1) \cdots \invtau(c_{n-1})$.

The root $0$ of $T$ and its rightmost child $c_\#$ are labeled with $\#$.
So, $\invtau(c_\#) = \#$,
and $\invtau(0) = \# x_0 \#$, with $\ochain{\#}{x_0}{\#}$,
which is a finite OP word.

$\tau^{-1} : S \rightarrow U$ can be derived from $\invtau$.
By construction, we have $\tau^{-1}(\tau(i)) = i$ for any word $w$ and position $i$.
\end{proof}

From Lemma~\ref{lemma:tau-isomorphism} follows:
\begin{prop}
Let $M_{AP}$ be an OPM on $\powset{AP}$.
For any FO formula $\varphi(x)$ on OP words compatible with $M_{AP}$,
there exists a FO formula $\varphi'(x)$ on trees in $\uotrees_{M_{AP}}$ such that for any
OP word $w$ and position $i$ in it, $w \models \varphi(i)$ iff $T_w \models \varphi'(\tau(i))$,
with $T_w = \tau(w)$.
\end{prop}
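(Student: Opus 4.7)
The plan is to translate any FO formula $\varphi(x)$ on OP words into an FO formula $\varphi'(x)$ on UOTs in $\uotrees_{M_{AP}}$ by structural induction, invoking the bijection $\tau$ established by Lemma~\ref{lemma:tau-isomorphism}. Since $\tau$ preserves labels by construction, each atomic monadic predicate $\mathrm{p}(x)$ translates to the same predicate on trees; equality and Boolean connectives are preserved literally; each quantifier over word positions is replaced by a quantifier over tree nodes, with correctness immediate from $\tau$ being bijective. The only substantive work is re-expressing the two non-monadic atomic relations of the FO vocabulary on OP words --- the total order $<$ and the chain relation $\chain$ --- as FO formulas over $T_w$.

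First I would translate $\chain(x,y)$. The construction of $\tau$ together with properties~\ref{item:downward-prop} and~\ref{item:upward-prop} of $\chain$ yield a finite case analysis: either (i) $x \lessdot y$ or $x \doteq y$, in which case $\tau(y)$ is a non-leftmost child of $\tau(x)$ (and in the $\doteq$ subcase, the rightmost one); or (ii) $x \gtrdot y$, in which case $\tau(y) = \ra(\tau(x))$. The PR between $x$ and $y$ is determined by their labels alone, so it is FO-definable by a disjunction of conjunctions of the form $\sigma_a(x) \land \sigma_b(y)$ as in~(\ref{eq:sigmaa}). The relations ``$y$ is a (non-leftmost, resp.\ rightmost) child of $x$'' are FO-definable from $\rchild$ and $\rsibl$, and by the $\uotrees_{M_{AP}}$-compatibility conditions the function $\ra$ admits an FO-definition as well. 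Combining these ingredients produces an FO translation of $\chain(x,y)$.

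Next I would translate $x < y$. The inductive unfolding of $\invtau$ in the proof of Lemma~\ref{lemma:tau-isomorphism} shows that $\tau$ enumerates positions in document (pre-order) order on $T_w$. Document order on ordered trees is FO-definable from the descendant and following-sibling relations by the standard XPath-style condition: $y$ is a descendant of $x$, or there exist $x'$ equal to or ancestor of $x$ and $y'$ equal to or ancestor of $y$ such that $x'$ is a proper preceding sibling of $y'$. With the atomic translations in hand, the remaining inductive steps for $\psi_1 \land \psi_2$, $\neg \psi$ and $\exists z \, \psi$ are routine, and $w \models \varphi(i) \iff T_w \models \varphi'(\tau(i))$ follows by a straightforward induction on $\varphi$.

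The main obstacle I expect is not the translation pattern itself, which is standard, but a bookkeeping subtlety in the $\chain$ case: since $\chain$ can be many-to-one and one-to-many, one must check that in the $\gtrdot$ subcase $\ra(\tau(x))$ really coincides with the right context intended by $\chain$ when $\tau(x)$ is an inner left context sharing its right context with its leftmost relatives, and dually that no spurious non-leftmost-child pair is declared to be in $\chain$. Both facts follow from the very construction of $\tau$ and the $\uotrees_{M_{AP}}$ compatibility conditions, but it is worth writing this verification out explicitly so that the final translation of $\chain(x,y)$ can be checked case by case against properties~\ref{item:never-cross}--\ref{item:upward-prop}.
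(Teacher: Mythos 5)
The paper offers no explicit proof of this proposition: it is stated as an immediate consequence of Lemma~\ref{lemma:tau-isomorphism}, so your elaboration is exactly the argument the authors leave implicit, and its overall structure (induction on $\varphi$, with the only substantive work being the FO re-definition of $<$ and $\chain$ over the tree signature) is the right one. Two points deserve attention. First, your translation of $\chain(x,y)$ in the $\gtrdot$ subcase as $\tau(y)=\ra(\tau(x))$ is necessary but not sufficient: $\ra$ is also defined on leaves, and for a leaf $\tau(x)$ one has $\ra(\tau(x))=\tau(x+1)$ while $\chain(x,x+1)$ never holds (the definition of the chain relation requires $x<y-1$). For instance, in Figure~\ref{fig:potl-example-word} we have $\ra(\tau(5))=\tau(6)$ but $\neg\chain(5,6)$. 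The correct biconditional is: $\chain(x,y)\land x\gtrdot y$ iff $\ra(\tau(x))=\tau(y)$ and $\tau(x)$ has at least one child; with this extra conjunct your case analysis becomes a genuine ``iff'' (the $\lessdot/\doteq$ case, ``$\tau(y)$ is a non-leftmost child of $\tau(x)$'', is already exact, since the leftmost child $x+1$ can never satisfy $\chain(x,x+1)$). Second, both document order and the unfolding of the recursive definition of $\ra$ require the transitive closures of $\rchild$ and $\rsibl$, which are not FO-definable from the one-step relations alone; this is harmless only because the intended tree signature --- the one for which Theorem~\ref{thm:xuntil-fo-completeness} holds, and the one the paper itself uses when it writes $\rchild^*$ and $\rchild^+$ in Section~\ref{sec:synthesis} --- includes the descendant and following-sibling orders, and your write-up should say so explicitly rather than appeal to ``the descendant relation'' as if it were automatically available from the child relation.
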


\subsubsection{POTL Translation of \texorpdfstring{\xuntil{}}{Xuntil}}%
\label{sec:cxpath-translation}

We now give the full translation of the logic \xuntil{} from~\cite{Marx2004} into POTL\@.

The syntax of \xuntil{} formulas is
\(
\varphi ::=
\mathrm{a} \mid
\top \mid
\neg \varphi \mid
\varphi \land \varphi \mid
\rho(\varphi, \varphi),
\)
with $\mathrm{a} \in AP$ and $\rho \in \{\Downarrow, \Uparrow, \Rightarrow, \Leftarrow\}$.
The semantics of propositional operators is the usual one,
while $\rho(\varphi, \varphi)$ is an until/since operator on the child and sibling relations.
Let $T \in \uotrees$ be a UOT with nodes in $S$.
For any $r, s \in S$, $\rparen$ and $\rlsibl$ are s.t.\ $r \rparen s$ iff $s \rchild r$,
and $r \rlsibl s$ iff $s \rsibl r$.
We denote as $R_\rho^+$ the transitive (but not reflexive) closure of relation $R_\rho$,
and by $R_\rho^*$ its transitive \emph{and} reflexive closure.
For $s \in S$,
$(T,s) \models \rho(\varphi, \psi)$ iff there exists a node $t \in S$ s.t.\ $s R_\rho^+ t$
and $(T,t) \models \psi$, and for any $r \in S$ s.t.\ $s R_\rho^+ r$ and $r R_\rho^+ t$ we have
$(T,r) \models \varphi$.
Notice that $t \neq s$ and $r \neq s$, so $s$ is not included in the paths:
we call this semantics \emph{strict}.
Conversely, in POTL paths always start from the position where an until/since operator is evaluated.

\cite{LibkinS10} proved the equivalence of \xuntil{} to the logic Conditional XPath, 
which was proved equivalent to FOL on finite UOTs in~\cite{Marx2005}.
This result is valid for any labeling of tree nodes, and so is on OPM-compatible UOTs.
\begin{thmC}[\cite{Marx2005,LibkinS10}]%
\label{thm:xuntil-fo-completeness}
Let $M_{AP}$ be an OPM on $AP$.
For any FO formula $\varphi(x)$ on trees in $\uotrees_{M_{AP}}$,
there exists a \xuntil{} formula $\varphi'$ such that,
for any $T \in \uotrees_{M_{AP}}$ and node $t \in T$, we have
$T \models \varphi(t)$ iff $(T,t) \models \varphi'$.
\end{thmC}

We define function $\iota_\mathcal{X}$, which translates any \xuntil{} formula $\varphi$
into a POTL formula s.t.\ $\varphi$ holds on a UOT $T$ iff $\iota_\mathcal{X}(\varphi)$
holds on the isomorphic word $w_T$.
$\iota_\mathcal{X}$ is defined as the identity for the propositional operators,
and with the equivalences below for the other \xuntil{} operators.
Recall from Section~\ref{sec:ltl-comp} that for any $a \subseteq AP$,
$\sigma_a := \bigwedge_{\mathrm{p} \in a} \mathrm{p} \land \bigwedge_{\mathrm{q} \not\in a} \neg \mathrm{q}$
holds in a position $i$ iff $a$ is the set of atomic propositions holding in $i$.
For any POTL formula $\gamma$, let
$\lganext{\lessdot} \gamma := \bigvee_{a,b \subseteq AP, \, a \lessdot b} (\sigma_a \land \lcdnext (\sigma_b \land \gamma))$
be the restriction of $\lcdnext \gamma$ to chains with contexts in the $\lessdot$ PR;\@
operators $\lganext{\doteq} \gamma$, $\lgaback{\lessdot} \gamma$, $\lgaback{\doteq} \gamma$,
$\lgnext{\lessdot} \gamma$, $\lgback{\lessdot} \gamma$ are defined analogously.

For any \xuntil{} formulas $\varphi, \psi$,
let $\varphi' = \iota_\mathcal{X}(\varphi)$ and $\psi' = \iota_\mathcal{X}(\psi)$.
We define $\iota_\mathcal{X}$ as follows:
\begin{align}
\iota_\mathcal{X}(\Downarrow(\varphi, \psi)) &:=
\ldnext (\lcduntil{\varphi'}{\psi'})
\lor \lcdnext (\lcduntil{\varphi'}{\psi'})\label{eq:iota-downarrow} \\
\iota_\mathcal{X}(\Uparrow(\varphi, \psi)) &:=
\ldback (\lcdsince{\varphi'}{\psi'})
\lor \lcdback (\lcdsince{\varphi'}{\psi'})\label{eq:iota-uparrow}
\end{align}
\begin{align}
\iota_\mathcal{X}(\Rightarrow(\varphi, \psi)) :=
&\lhunext (\lhuuntil{\varphi'}{\psi'})\label{eq:iota-rightarrow-1} \\
&\lor \big( \neg \lhunext (\lhuuntil{\top}{\neg \varphi'})
            \land \lgaback{\lessdot} (\lganext{\doteq} \psi')\big)%
      \label{eq:iota-rightarrow-2} \\
&\lor \lgback{\lessdot} \Big(
      \lganext{\lessdot} \big(\psi' \land \neg \lhuback (\lhusince{\top}{\neg \varphi'}) \big) \Big)%
      \label{eq:iota-rightarrow-3} \\
&\lor \lgback{\lessdot} (\lganext{\doteq} \psi' \land \neg \lganext{\lessdot} \neg \varphi')%
      \label{eq:iota-rightarrow-4}
\end{align}
\begin{align}
\iota_\mathcal{X}(\Leftarrow(\varphi, \psi)) :=
&\lhuback (\lhusince{\varphi'}{\psi'})%
      \label{eq:iota-leftarrow-1} \\
&\lor \lgaback{\doteq} \big(
      \lganext{\lessdot} (\neg \lhunext \top \land \lhusince{\varphi'}{\psi'}) \big)%
      \label{eq:iota-leftarrow-2} \\
&\lor \big( \lgaback{\lessdot} (\lgnext{\lessdot} \psi')
      \land \neg \lhuback (\lhusince{\top}{\neg \varphi'}) \big)%
      \label{eq:iota-leftarrow-3} \\
&\lor \lgaback{\doteq} (
      \lgnext{\lessdot} \psi' \land \neg \lganext{\lessdot} \neg \varphi' )%
      \label{eq:iota-leftarrow-4}
\end{align}

We prove the correctness of this translation in the following lemmas.
\begin{lem}%
\label{lemma:iota-downarrow}
  Given an OP alphabet $(AP, M_{AP})$,
  for every \xuntil{} formula $\Downarrow(\varphi, \psi)$,
  and for any OP word $w$ and position $i$ in $w$, we have
  \[
    (T_w,\tau(i)) \models \Downarrow(\varphi, \psi) \text{ iff }
    (w,i) \models \iota_\mathcal{X}(\Downarrow(\varphi, \psi)).
  \]
  $T_w \in \uotrees_{M_{AP}}$ is the UOT obtained by applying function $\tau$
  to every position in $w$, such that for any position $i'$ in $w$
  $(T_w,\tau(i')) \models \varphi \text{ iff } (w,i') \models \iota_\mathcal{X}(\varphi)$,
  and likewise for $\psi$.
\end{lem}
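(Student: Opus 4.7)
The translation $\iota_\mathcal{X}(\Downarrow(\varphi, \psi)) = \ldnext (\lcduntil{\varphi'}{\psi'}) \lor \lcdnext (\lcduntil{\varphi'}{\psi'})$ reflects the tree-theoretic decomposition ``first move to a child, then traverse downwards within the resulting subtree''. My plan is to make this precise by establishing a correspondence between DSPs in $w$ and downward paths in $T_w$, and then to derive the biconditional from that correspondence.

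The central auxiliary claim is the following: for any positions $h \leq j$ in $w$, the DSP from $h$ to $j$ exists iff $\tau(j)$ equals, or is a proper descendant of, $\tau(h)$ in $T_w$; and when it exists, its sequence of visited positions $h = i_0 < i_1 < \dots < i_m = j$ is mapped by $\tau$ to exactly the tree path from $\tau(h)$ down to $\tau(j)$. I would prove this by induction on the tree distance between $\tau(h)$ and $\tau(j)$. At an intermediate position $i_p$, the children of $\tau(i_p)$ in $T_w$ are, by the construction of $\tau$, the position $i_p+1$ whenever $i_p \lessdot i_p+1$ or $i_p \doteq i_p+1$, together with every $k$ such that $\chain(i_p,k)$ and $i_p \lessdot k$ or $i_p \doteq k$. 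The DSP rule picks the greatest such chain-reachable $k \leq j$ when one exists, and otherwise the successor; a direct case analysis using the non-crossing property of $\chain$ (property~\ref{item:never-cross}) and property~\ref{item:upward-prop} shows that this choice is precisely the unique child of $\tau(i_p)$ whose subtree contains $\tau(j)$.

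The two directions of the biconditional then follow. For the forward direction, if $(T_w,\tau(i)) \models \Downarrow(\varphi,\psi)$, strictness yields a proper descendant $\tau(t)$ of $\tau(i)$ witnessing $\psi$, with $\varphi$ holding on the strict interior of the tree path from $\tau(i)$ to $\tau(t)$. Letting $\tau(i_1)$ be the child of $\tau(i)$ on that path, one of $\ldnext$ and $\lcdnext$ applies depending on whether $i_1 = i+1$ or $\chain(i,i_1)$ holds. The auxiliary claim identifies the DSP from $i_1$ to $t$ with the tree path from $\tau(i_1)$ to $\tau(t)$, so $\lcduntil{\varphi'}{\psi'}$ holds at $i_1$ by the inductive hypothesis on $\varphi$ and $\psi$. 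For the converse direction, if the POTL formula holds at $i$, one of its disjuncts exposes a child $\tau(i_1)$ of $\tau(i)$ together with a DSP from $i_1$ to some $t$ witnessing $\lcduntil{\varphi'}{\psi'}$; the auxiliary claim turns this DSP back into a tree path from $\tau(i_1)$ to a descendant $\tau(t)$ of $\tau(i)$, and the inductive hypothesis transports the propositional labels from positions to tree nodes.

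The main obstacle is the auxiliary claim. One must verify that the deterministic, greedy DSP rule never escapes the subtree of $\tau(h)$ and always selects the exact child of the current node whose subtree contains $\tau(j)$. The delicate case is when $\tau(j)$ lies in the subtree of the leftmost child $\tau(i_p+1)$: here the DSP could in principle attempt a chain jump to a right sibling of $\tau(i_p+1)$, and one must verify---using property~\ref{item:upward-prop}, which orders the chain-children of a node strictly to the right of the leftmost child---that all such siblings occupy positions $> j$, so the rule correctly falls back to the successor step. A symmetric non-crossing argument rules out stray chain jumps from descendants of $\tau(i_p+1)$ that would exit its subtree.
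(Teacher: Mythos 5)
Your proposal is correct and follows essentially the same route as the paper's proof: both directions ultimately rest on the observation that each DSP step (successor or maximal chain jump with $\lessdot$/$\doteq$ contexts) is exactly a parent-to-child move in $T_w$, so that DSPs from $i_1$ correspond bijectively to descending tree paths from $\tau(i_1)$. The only difference is organizational --- you factor this correspondence out as a standalone auxiliary claim proved once, whereas the paper threads the same local case analysis through an induction along the witnessing path using the expansion law $\lcduntil{\varphi'}{\psi'} \equiv \psi' \lor (\varphi' \land (\ldnext (\lcduntil{\varphi'}{\psi'}) \lor \lcdnext (\lcduntil{\varphi'}{\psi'})))$ --- and your treatment of the delicate leftmost-child case via property~\ref{item:upward-prop} matches what the paper needs implicitly.
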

\begin{proof}
  Let $\varphi' = \iota_\mathcal{X}(\varphi)$ and $\psi' = \iota_\mathcal{X}(\psi)$.

  \textbf{[Only if]}
  Suppose $(T_w,\tau(i)) \models \Downarrow(\varphi, \psi)$.
  Let $r = \tau(i)$, and $s = \tau(j)$ s.t.\ $r \rchild s$
  and $s$ is the first tree node of the path witnessing $\Downarrow(\varphi, \psi)$.

  We inductively prove that $\lcduntil{\varphi'}{\psi'}$ holds in $j$.
  If $s$ is the last node of the path, then $\psi'$ holds in $j$ and so does, trivially,
  $\lcduntil{\varphi'}{\psi'}$.
  Otherwise, consider any node $t = \tau(k)$ in the path, except the last one,
  and suppose $\lcduntil{\varphi'}{\psi'}$ holds in $k'$ s.t.\ $t' = \tau(k')$
  is the next node in the path.
  If $t'$ is the leftmost child of $t$, then $k' = k+1$ and either $k \lessdot k'$ or $k \doteq k'$:
  in both cases $\ldnext (\lcduntil{\varphi'}{\psi'})$ holds in $k$.
  If $t'$ is not the leftmost child, then $\chain(k,k')$ and $k \lessdot k'$ or $k \doteq k'$:
  so $\lcdnext (\lcduntil{\varphi'}{\psi'})$ holds in $k$.
  Thus, by expansion law
  $\lcduntil{\varphi'}{\psi'} \equiv \psi'
  \lor (\varphi' \land (\ldnext (\lcduntil{\varphi'}{\psi'})
         \lor \lcdnext (\lcduntil{\varphi'}{\psi'})))$,
  $\lcduntil{\varphi'}{\psi'}$ holds in $k$ and, by induction, also in $j$.

  Suppose $s$ is the leftmost child of $r$: $j = i+1$, and either $i \lessdot j$ or $i \doteq j$,
  so $\ldnext (\lcduntil{\varphi'}{\psi'})$ holds in $i$.
  Otherwise, $\chain(i,j)$ and either $i \lessdot j$ or $i \doteq j$.
  In both cases, $\lcdnext (\lcduntil{\varphi'}{\psi'})$ holds in $i$.

  \textbf{[If]}
  Suppose~\eqref{eq:iota-downarrow} holds in $i$.
  If $\ldnext (\lcduntil{\varphi'}{\psi'})$ holds in $i$,
  then $\lcduntil{\varphi'}{\psi'}$ holds in $j = i+1$,
  and either $i \lessdot j$ or $i \doteq j$: then $s = \tau(j)$ is the leftmost child of $\tau(i)$.
  If $\lcdnext (\lcduntil{\varphi'}{\psi'})$ holds in $i$,
  then $\lcduntil{\varphi'}{\psi'}$ holds in $j$ s.t.\ $\chain(i,j)$
  and $i \lessdot j$ or $i \doteq j$: $s = \tau(j)$ is a child of $\tau(i)$ in this case as well.

  We prove that if $\lcduntil{\varphi'}{\psi'}$ holds in a position
  $j$ s.t.\ $\tau(i) \rchild \tau(j)$, then $\Downarrow(\varphi, \psi)$ holds in $\tau(i)$.
  If $\lcduntil{\varphi'}{\psi'}$ holds in $j$, then there exists a DSP of minimal length
  from $j$ to $h > j$ s.t.\ $(w,h) \models \psi'$ and $\varphi'$ holds
  in all positions $j \leq k < h$ of the path, and $(T_w,\tau(k)) \models \varphi$.
  In any such $k$,
  $\lcduntil{\varphi'}{\psi'} \equiv \psi'
  \lor (\varphi' \land (\ldnext (\lcduntil{\varphi'}{\psi'})
         \lor \lcdnext (\lcduntil{\varphi'}{\psi'})))$
  holds.
  Since this DSP is the minimal one, $\psi'$ does not hold in $k$.
  Either $\ldnext (\lcduntil{\varphi'}{\psi'})$
  or $\lcdnext (\lcduntil{\varphi'}{\psi'})$ hold in it.
  Therefore, the next position in the path is $k'$ s.t.\ either $k' = k+1$ or $\chain(k,k')$,
  and either $k \lessdot k'$ or $k \doteq k'$, and
  $(w,k') \models \lcduntil{\varphi'}{\psi'}$.
  Therefore, $\tau(k')$ is a child of $\tau(k)$.
  So, there is a sequence of nodes $s_0, s_1, \dots, s_n$ in $T_w$ s.t.\ $\tau(i) \rchild s_0$,
  and $s_i \rchild s_{i+1}$ and $(T_w,s_i) \models \varphi$ for $0 \leq i < n$,
  and $(T_w,s_n) \models \psi$.
  This is a path making $\Downarrow(\varphi, \psi)$ true in $\tau(i)$.
\end{proof}

The proof for $\iota_\mathcal{X}(\Uparrow(\varphi, \psi))$ (\ref{eq:iota-uparrow})
is analogous to Lemma~\ref{lemma:iota-downarrow}, and is therefore omitted.

\begin{lem}%
\label{lemma:iota-rightarrow}
  Given an OP alphabet $(AP, M_{AP})$,
  for every \xuntil{} formula $\Rightarrow(\varphi, \psi)$,
  and for any OP word $w$ and position $i$ in $w$, we have
  \[
    (T_w,\tau(i)) \models \Rightarrow(\varphi, \psi) \text{ iff }
    (w,i) \models \iota_\mathcal{X}(\Rightarrow(\varphi, \psi)).
  \]
  $T_w \in \uotrees_{M_{AP}}$ is the UOT obtained by applying function $\tau$
  to every position in $w$, such that for any position $i'$ in $w$
  $(T_w,\tau(i')) \models \varphi \text{ iff } (w,i') \models \iota_\mathcal{X}(\varphi)$,
  and likewise for $\psi$.
\end{lem}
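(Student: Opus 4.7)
The plan is to mirror the structure of the proof of Lemma~\ref{lemma:iota-downarrow}, proving both directions by a case analysis over the four disjuncts \eqref{eq:iota-rightarrow-1}--\eqref{eq:iota-rightarrow-4}. The partition is driven by the structural role of $s = \tau(i)$ and the target sibling $s_n$ within their common parent $p$ in $T_w$. By the definition of $\tau$ and the OPM-compatibility of $T_w$, every non-leftmost child $j$ of $p$ satisfies $\chain(p,j)$ with $p \lessdot j$, except possibly the rightmost one, which may instead be in the $\doteq$ PR with $p$ (Property~\ref{item:upward-prop} of $\chi$); the leftmost child is $p+1$. The four cases I will analyse are therefore: $s$ non-leftmost and $s_n$ via $\lessdot$ (corresponding to disjunct \eqref{eq:iota-rightarrow-1}); $s$ non-leftmost and $s_n$ the rightmost child via $\doteq$ (disjunct \eqref{eq:iota-rightarrow-2}); $s$ leftmost and $s_n$ via $\lessdot$ (disjunct \eqref{eq:iota-rightarrow-3}); $s$ leftmost and $s_n$ rightmost via $\doteq$ (disjunct \eqref{eq:iota-rightarrow-4}).

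For case \eqref{eq:iota-rightarrow-1} the sequence $\tau^{-1}(s_1), \ldots, \tau^{-1}(s_n)$ is exactly the UHP reached from $i$ by $\lhunext$, and the biconditional follows by induction on $n$ using the expansion law for $\lhuuntil{}{}$ from Section~\ref{sec:potl-syntax-semantics}. For case \eqref{eq:iota-rightarrow-2} I will use $\lgaback{\lessdot}$ at $i$ to reach $p$ (exploiting $p \lessdot i$) and $\lganext{\doteq} \psi'$ to land on $s_n$; the conjunct $\neg \lhunext (\lhuuntil{\top}{\neg \varphi'})$ then forces $\varphi'$ on every position of the UHP starting at $\lhunext i$, which is exactly $\{s_1, \ldots, s_{n-1}\}$ and is vacuously satisfied when $n = 1$. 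The two remaining cases require $s$ leftmost with $p \lessdot i$ (the alternative $p \doteq i$ would make $s$ the only child of $p$, ruling out $\Rightarrow$ altogether). Case \eqref{eq:iota-rightarrow-3} steps back to $p$ with $\lgback{\lessdot}$, advances to $s_n$ with $\lganext{\lessdot}$, and uses $\neg \lhuback (\lhusince{\top}{\neg \varphi'})$ to enforce $\varphi'$ on the backward UHP $s_{n-1}, \ldots, s_1$; case \eqref{eq:iota-rightarrow-4} instead employs $\lganext{\doteq}$ to reach the rightmost-$\doteq$ child $s_n$, and $\neg \lganext{\lessdot} \neg \varphi'$ to assert $\varphi'$ on every non-rightmost-$\doteq$ child of $p$, which coincides with $\{s_1, \ldots, s_{n-1}\}$. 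The converse direction for each disjunct unfolds symmetrically, reconstructing the sibling path in $T_w$ from the witnessed chain and summary steps.

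The main obstacle I anticipate is showing that the negated sub-formulas quantify over exactly the intermediate siblings, with neither gaps nor extraneous positions. The key facts I will need are that $\lganext{\lessdot}$ from $p$ enumerates precisely the non-leftmost, non-rightmost-$\doteq$ children of $p$, and that a UHP (forward from $\lhunext i$ in case \eqref{eq:iota-rightarrow-2}, backward from $s_n$ in case \eqref{eq:iota-rightarrow-3}) visits exactly the corresponding $\lessdot$-siblings consecutively. Both follow from properties~\ref{item:downward-prop}, \ref{item:upward-prop} and the non-crossing property~\ref{item:never-cross} of $\chi$: among the chains with left context $p$, those whose right context is a non-rightmost-$\doteq$ child all sit in the $\lessdot$ PR, they are linearly ordered by their right contexts, and none can be skipped thanks to non-crossing.
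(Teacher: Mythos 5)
Your proposal is correct and follows essentially the same route as the paper's proof: the same four-way case split (leftmost vs.\ non-leftmost position of $s$ under its parent, crossed with whether the target sibling is reached via $\lessdot$ or is the rightmost $\doteq$-child), the same use of the hierarchical next/until for disjunct \eqref{eq:iota-rightarrow-1}, of $\lgaback{\lessdot}$/$\lganext{\doteq}$ with the negated hierarchical until for \eqref{eq:iota-rightarrow-2}, and of $\lgback{\lessdot}$ with $\lganext{\lessdot}$ or $\lganext{\doteq}$ for \eqref{eq:iota-rightarrow-3} and \eqref{eq:iota-rightarrow-4}, justified by properties \ref{item:never-cross}, \ref{item:downward-prop} and \ref{item:upward-prop} of $\chain$. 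The converse direction you sketch as a symmetric unfolding is exactly how the paper handles it, disjunct by disjunct.
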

\begin{proof}
  Let $\varphi' = \iota_\mathcal{X}(\varphi)$ and $\psi' = \iota_\mathcal{X}(\psi)$.

  \textbf{[Only if]}
  Suppose $\Rightarrow(\varphi, \psi)$ holds in $s = \tau(i)$.
  Then, node $r = \tau(h)$ s.t.\ $r \rchild s$ has at least two children,
  and $\Rightarrow(\varphi, \psi)$ is witnessed by a path starting in $t = \tau(j)$
  s.t.\ $s \rsibl t$, and ending in $v = \tau(k)$.
  We have the following cases:
  \begin{enumerate}
  \item\label{item:proof-iota-ra-1} \emph{$s$ is not the leftmost child of $r$.}
    \begin{enumerate}
    \item\label{item:proof-iota-ra-1a} \emph{$h \lessdot k$.}
      By the construction of $T_w$, for any node $t'$ in the path, there exists a position
      $j' \in w$ s.t.\ $t' = \tau(j')$, $\chain(h,j')$ and $h \lessdot j'$.
      The path made by such positions is a UHP, and $\lhuuntil{\varphi'}{\psi'}$ is true in $j$.
      Since $s$ is not the leftmost child of $r$, we have $\chain(h,i)$, and $h \lessdot i$,
      so~\eqref{eq:iota-rightarrow-1}
      holds in $i$.
    \item\label{item:proof-iota-ra-1b} \emph{$h \doteq k$, so $v$ is the rightmost child of $r$.}
      $\varphi$ holds in all siblings between $s$ and $v$ (excluded),
      and $\varphi'$ holds in the corresponding positions of $w$.
      All such positions $j$, if any, are s.t.\ $\chain(h,j)$ and $h \lessdot j$,
      and they form a UHP, so $\lhunext (\lhuuntil{\top}{\neg \varphi'})$
      never holds in $i$.
      Moreover, since $\psi$ holds in $v$, $\psi'$ holds in $k$.
      Note that $\lgaback{\lessdot}$ in $i$ uniquely identifies position $h$,
      and $\lganext{\doteq}$ evaluated in $h$ identifies $k$.
      So,~\eqref{eq:iota-rightarrow-2} holds in $i$.
    \end{enumerate}
  \item\label{item:proof-iota-ra-2} \emph{$s$ is the leftmost child of $r$.}
    In this case, we have $i = h+1$ and $h \lessdot i$
    (if $h \doteq i$, then $r$ would have only one child).
    \begin{enumerate}
    \item\label{item:proof-iota-ra-2a} \emph{$h \lessdot k$.}
      $\lgback{\lessdot}$ evaluated in $i$ identifies position $h$.
      $\psi'$ holds in $k$, and $\lhuback (\lhusince{\top}{\neg \varphi'})$ does not,
      because in all positions between $i$ and $k$ (excluded) corresponding to children of $r$,
      $\varphi'$ holds. Note that all such positions form a UHP, but $i$ is not part of it
      ($i = h+1$, so $\neg \chain(h,i)$), and is not considered by $\lhusince{\top}{\neg \varphi'}$.
      So,~\eqref{eq:iota-rightarrow-3} holds in $i$.
      \item\label{item:proof-iota-ra-2b} \emph{$h \doteq k$, so $v$ is the rightmost child of $r$.}
        $\psi$ holds in $v$, and $\varphi$ holds in all children of $r$,
        except possibly the first ($s$) and the last one ($v$).
        These are exactly all positions s.t.\ $\chain(h,j)$ and $h \lessdot j$.
        Since $\varphi'$ holds in all of them by hypothesis,
        $\neg \lganext{\lessdot} \neg \varphi'$ holds in $h$.
        Since $\psi$ holds in $v$, $\psi'$ holds in $k$, and $\lganext{\doteq} \psi'$ in $h$.
        So,~\eqref{eq:iota-rightarrow-4} holds in $i$.
    \end{enumerate}
  \end{enumerate}

\noindent
  \textbf{[If]}
  We separately consider cases~\eqref{eq:iota-rightarrow-1}--\eqref{eq:iota-rightarrow-4}.

  \noindent~\eqref{eq:iota-rightarrow-1}: $\lhunext (\lhuuntil{\varphi'}{\psi'})$
  holds in a position $i$ in $w$.
  Then, there exists a position $h$ s.t.\ $\chain(h,i)$ and $h \lessdot i$,
  and a position $j$ s.t.\ $\chain(h,j)$ and $h \lessdot j$
  that is the hierarchical successor of $i$, and $\lhuuntil{\varphi'}{\psi'}$ holds in $j$.
  So, $i$ and $j$ are consecutive children of $r = \tau(h)$.
  Moreover, there exists a UHP between $j$ and a position $k \geq j$.
  The tree nodes corresponding to all positions in the path are consecutive children of $r$,
  so we fall in case~\ref{item:proof-iota-ra-1a} of the \emph{only if} part of the proof.
  In $T_w$, a path between $t = \tau(j)$ and $v = \tau(k)$ witnesses the truth of
  $\Rightarrow(\varphi, \psi)$ in $s$.

  \noindent~\eqref{eq:iota-rightarrow-2}:
  $\neg \lhunext (\lhuuntil{\top}{\neg \varphi'}
   \land \lgaback{\lessdot} (\lganext{\doteq} \psi')$)
  holds in position $i \in w$ (this corresponds to case~\ref{item:proof-iota-ra-1b}).
  If $\lgaback{\lessdot} (\lganext{\doteq} \psi')$  holds in $i$,
  then there exists a position $h$ s.t.\ $\chain(h,i)$ and $h \lessdot i$,
  and a position $k$ s.t.\ $\chain(h,k)$ and $h \doteq k$, and $\psi'$ holds in $k$.
  $v = \tau(k)$ is the rightmost child of $r = \tau(h)$, parent of $s = \tau(i)$.
  Moreover, if $\neg \lhunext (\lhuuntil{\top}{\neg \varphi'})$ holds in $i$, then either:
  \begin{itemize}
  \item $\neg \lhunext \top$ holds, i.e.\ there is no position $j > i$
    s.t.\ $\chain(h,j)$ and $h \lessdot j$,
    so $v$ is the immediate right sibling of $s$.
    In this case $\Rightarrow(\varphi, \psi)$ holds in $s$ because $\psi$ holds in $v$.
  \item $\neg (\lhuuntil{\top}{\neg \varphi'})$ holds in $j > i$,
    the first position after $i$ s.t.\ $\chain(h,j)$ and $h \lessdot j$.
    This means $\varphi'$ holds in all positions $j' \geq j$
    s.t.\ $\chain(h,j')$ and $h \lessdot j'$.
    Consequently, the tree nodes corresponding to these positions plus $v = \tau(k)$ form
    a path witnessing $\Rightarrow(\varphi, \psi)$, which holds in $s = \tau(i)$.
  \end{itemize}

  \noindent~\eqref{eq:iota-rightarrow-3}:
  $\lgback{\lessdot} \Big(
      \lganext{\lessdot} \big(\psi' \land
      \neg \lhuback (\lhusince{\top}{\neg \varphi'}) \big) \Big)$
  holds in $i$.
  Let $h = i-1$, with $h \lessdot i$ (it exists because $\lgback{\lessdot}$ is true).
  There exists a position $k$, $\chain(h,k)$ and $h \lessdot k$,
  in which $\psi'$ holds, so $\psi$ does in $v = \tau(k)$,
  and $\lhuback (\lhusince{\top}{\neg \varphi'})$ is false in it.
  If it is false because $\neg \lhuback \top$ holds,
  there is no position $j < k$ s.t.\ $\chain(h,j)$ and $h \lessdot j$,
  so $v$ is the second child of $r = \tau(h)$, $s = \tau(i)$ being the first one.
  So, $\Rightarrow(\varphi, \psi)$ trivially holds in $s$ because $\psi$ holds in the next sibling.
  Otherwise, let $j < k$ be the rightmost position lower than $k$
  s.t.\ $\chain(h,j)$ and $h \lessdot j$.
  $\neg (\lhusince{\top}{\neg \varphi'})$ holds in it, so $\varphi'$ holds
  in all positions $j'$ between $i$ and $k$ that are part of the hierarchical path,
  i.e.\ s.t.\ $\chain(h,j')$ and $h \lessdot j'$.
  The corresponding tree nodes form a path ending in $v = \tau(k)$
  that witnesses the truth of $\Rightarrow(\varphi, \psi)$ in $s$
  (case~\ref{item:proof-iota-ra-2a}).

  \noindent~\eqref{eq:iota-rightarrow-4}:
  $\lgback{\lessdot} (\lganext{\doteq} \psi' \land \neg \lganext{\lessdot} \neg \varphi')$
  holds in $i$. Then let $h = i-1$, $h \lessdot i$, and $s = \tau(i)$ is the leftmost
  child of $r = \tau(h)$.
  Since $\lganext{\doteq} \psi'$ holds in $h$, there exists a position $k$,
  s.t.\ $\chain(h,k)$ and $h \doteq k$, in which $\psi'$ holds.
  So, $\psi$ holds in $v = \tau(k)$, which is the rightmost child of $r$, by construction.
  Moreover, $\varphi'$ holds in all positions s.t.\ $\chain(h,j)$ and $h \lessdot j$.
  Hence, $\varphi$ holds in all corresponding nodes $t = \tau(j)$,
  which are all nodes between $s$ and $v$, excluded.
  This, together with $\psi$ holding in $v$, makes a path that verifies
  $\Rightarrow(\varphi, \psi)$ in $s$
  (case~\ref{item:proof-iota-ra-2b}).
\end{proof}

\begin{lem}%
\label{lemma:iota-leftarrow}
  Given an OP alphabet $(AP, M_{AP})$,
  for every \xuntil{} formula $\Leftarrow(\varphi, \psi)$,
  and for any OP word $w$ and position $i$ in $w$, we have
  \[
    (T_w,\tau(i)) \models \Leftarrow(\varphi, \psi) \text{ iff }
    (w,i) \models \iota_\mathcal{X}(\Leftarrow(\varphi, \psi)).
  \]
  $T_w \in \uotrees_{M_{AP}}$ is the UOT obtained by applying function $\tau$
  to every position in $w$, such that for any position $i'$ in $w$
  $(T_w,\tau(i')) \models \varphi \text{ iff } (w,i') \models \iota_\mathcal{X}(\varphi)$,
  and likewise for $\psi$.
\end{lem}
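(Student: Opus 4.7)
The plan is to mirror the proof of Lemma~\ref{lemma:iota-rightarrow}, treating \xuntil{}'s $\Leftarrow$ as the past-time sibling analogue of $\Rightarrow$. The key to the translation is again the correspondence given by $\tau$: the siblings of a tree node $s=\tau(i)$ correspond to those positions $j$ in $w$ such that $\chain(h,j)$ for the parent $h$ of $s$, augmented by the leftmost child which is $h+1$ itself (reached by linear successor rather than a $\chain$ edge), and by the rightmost child which is singled out by being in the $\doteq$ relation with $h$. All four disjuncts of $\iota_\mathcal{X}(\Leftarrow(\varphi,\psi))$ reflect these special cases.

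For the only-if direction, I would assume $(T_w,\tau(i)) \models \Leftarrow(\varphi,\psi)$, witnessed by a node $v=\tau(k)$ strictly to the left of $s$ ($k<i$), plus a sibling path from $v$ up to $s$ in $T_w$ along which $\varphi$ holds on the strictly intermediate nodes and $\psi$ holds at $v$. I would split into four cases along two axes: (a)~whether $s$ is the rightmost child of its parent $r=\tau(h)$ (equivalently, whether $h\doteq i$ holds rather than $h\lessdot i$ with $\chain(h,i)$), and (b)~whether $v$ is the leftmost child of $r$ (equivalently, whether $k=h+1$ rather than $\chain(h,k)$ with $h\lessdot k$). The combination ``$s$ non-rightmost, $v$ non-leftmost'' gives \eqref{eq:iota-leftarrow-1} via $\lhuback(\lhusince{\varphi'}{\psi'})$, since the whole sibling sequence lies inside a UHP through $h$; ``$s$ rightmost, $v$ non-leftmost'' gives \eqref{eq:iota-leftarrow-2}, where $\lgaback{\doteq}$ locates $h$ from the rightmost $s$, then $\lganext{\lessdot}$ steps onto the rightmost hierarchical sibling $j'$ (asserted by $\neg\lhunext\top$) from which the $\lhusince{}{}$ path descends to $v$; ``$s$ non-rightmost, $v$ leftmost'' gives \eqref{eq:iota-leftarrow-3}, where $\lgaback{\lessdot}$ identifies $h$ as the linear predecessor of $v=h+1$ (at which $\lgnext{\lessdot}\psi'$ holds), and $\neg\lhuback(\lhusince{\top}{\neg\varphi'})$ enforces $\varphi'$ on the intermediate UHP strictly between $v$ and $s$; finally ``$s$ rightmost, $v$ leftmost'' gives \eqref{eq:iota-leftarrow-4}.

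For the if direction, I would examine each disjunct in turn, extract the corresponding parent position $h$ and witness position $k$, and reconstruct the sibling path in $T_w$ using properties~\ref{item:downward-prop}--\ref{item:upward-prop} of the $\chain$ relation to unambiguously recognise leftmost and rightmost children. The arguments are direct duals of those in the proof of Lemma~\ref{lemma:iota-rightarrow}: the $\lgaback{\doteq}$ and $\lgaback{\lessdot}$ prefixes pinpoint the unique parent $h$, the $\lganext{\lessdot}$, $\lganext{\doteq}$, $\lgnext{\lessdot}$ modalities range over the sibling positions of the right type, and the negated $\lhuback$/$\lganext{\lessdot}$ subformulas transfer the ``$\varphi$ holds everywhere in between'' requirement of $\Leftarrow$.

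The main obstacle, as in Lemma~\ref{lemma:iota-rightarrow}, is the asymmetry of the leftmost child: it is reached from its parent by the linear successor, not by a $\chain$ edge, and is therefore invisible to any $\lhusince{}{}$ path. This is precisely what forces the separate disjuncts \eqref{eq:iota-leftarrow-3} and \eqref{eq:iota-leftarrow-4}, and why the intermediate-$\varphi$ requirement there is expressed via $\neg\lhuback(\lhusince{\top}{\neg\varphi'})$ or $\neg\lganext{\lessdot}\neg\varphi'$ rather than inside the since operator itself. Once that case split is in place, the verification is a mechanical mirror image of the arguments already written out for $\Rightarrow$, and I expect no further conceptual difficulties.
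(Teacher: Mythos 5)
Your proposal is correct and follows essentially the same route as the paper's proof: the same four-way case split (on whether $v$ is the leftmost child of $r$ and on the precedence relation between $h$ and $i$), the same assignment of the disjuncts \eqref{eq:iota-leftarrow-1}--\eqref{eq:iota-leftarrow-4} to those four cases, and the same disjunct-by-disjunct reconstruction of the sibling path for the converse direction. One small caution: being the rightmost child of $r$ is not literally equivalent to $h \doteq i$ (a rightmost child may also be in the $\lessdot$ relation with its parent), but since your parenthetical makes clear that the split you actually perform is on the precedence relation---exactly as the paper does---the argument is unaffected.
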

\begin{proof}
  Let $\varphi' = \iota_\mathcal{X}(\varphi)$ and $\psi' = \iota_\mathcal{X}(\psi)$.

  \textbf{[Only if]}
  Suppose $\Leftarrow(\varphi, \psi)$ holds in $s = \tau(i)$.
  Then node $r = \tau(h)$ s.t.\ $r \rchild s$ has at least two children,
  and $\Leftarrow(\varphi, \psi)$ is true because of a path starting in $v = \tau(k)$,
  s.t.\ $r \rchild v$ and $(T_w,v) \models \psi$
  and ending in $t = \tau(j)$ s.t.\ $t \rsibl s$.
  We distinguish between the following cases:
  \begin{enumerate}
  \item \emph{$v$ is not the leftmost child of $r$.}
    \begin{enumerate}
    \item\label{item:proof-iota-la-1a} \emph{$h \lessdot i$.}
      By construction, all nodes in the path correspond to positions $j' \in w$
      s.t.\ $\chain(h,j')$ and $h \lessdot j'$, so they form a UHP\@.
      Hence, $\lhusince{\varphi'}{\psi'}$ holds in $j$, and~\eqref{eq:iota-leftarrow-1}
      holds in $i$.
    \item\label{item:proof-iota-la-1b} \emph{$h \doteq i$.}
      In this case, $s$ is the rightmost child of $r$, and $\chain(h,i)$.
      The path made of positions between $k$ and $j$ corresponding to nodes between $v$ and $t$
      (included) is a UHP\@.
      So $\lhusince{\varphi'}{\psi'}$ holds in $j$, which is the rightmost position of any
      possible such UHP:\@ so $\neg \lhunext \top$ also holds in $j$.
      Hence,~\eqref{eq:iota-leftarrow-2}
      holds in $i$.
    \end{enumerate}
  \item \emph{$v$ is the leftmost child of $r$.}
    \begin{enumerate}
    \item\label{item:proof-iota-la-2a} \emph{$h \lessdot i$.}
      In this case, $k = h+1$ and $\psi'$ holds in $k$.
      So, $\lgnext{\lessdot} \psi'$ holds in $h$,
      and $\lgaback{\lessdot} (\lgnext{\lessdot} \psi')$ holds in $i$.
      Moreover, in all word positions $j'$ with $k < j' < j$ corresponding to children of $r$,
      $\varphi'$ holds.
      Such positions form a UHP\@.
      So $\neg \lhuback (\lhusince{\top}{\neg \varphi'})$ holds in $i$.
      Note that this is true even if $s$ is the first right sibling of $v$.
      Thus,~\eqref{eq:iota-leftarrow-3} holds in $i$.
    \item\label{item:proof-iota-la-2b} \emph{$h \doteq i$.}
      $\psi'$ holds in $k = h+1$, so $\lgnext{\lessdot} \psi'$ holds in $h$.
      Since $\chain(h,i)$ and $h \doteq i$,
      $\lgaback{\doteq} (\lgnext{\lessdot} \psi')$ holds in $i$.
      Moreover, $\varphi$ holds in all children of $r$ except the first and last one,
      so $\varphi'$ holds in all positions $j'$ s.t.\ $\chain(h,j')$ and $h \lessdot j'$.
      So $\neg \lganext{\lessdot} \neg \varphi'$ holds in $h$, and~\eqref{eq:iota-leftarrow-4}
      in $i$.
    \end{enumerate}
  \end{enumerate}

\noindent
  \textbf{[If]}
  We separately consider cases~\eqref{eq:iota-leftarrow-1}--\eqref{eq:iota-leftarrow-4}.

  \noindent~\eqref{eq:iota-leftarrow-1}:
  $\lhuback (\lhusince{\varphi'}{\psi'})$ holds in $i$.
  Then, there exists a position $h$ s.t.\ $\chain(h,i)$ and $h \lessdot i$,
  and a position $j < i$ s.t.\ $\chain(h,j)$ and $h \lessdot j$.
  Since $j \neq h+1$, the corresponding tree node is not the leftmost one.
  So, this corresponds to case~\ref{item:proof-iota-la-1a},
  and $\Leftarrow(\varphi, \psi)$ holds in $s = \tau(i)$.

  \noindent~\eqref{eq:iota-leftarrow-2}:
  $\lgaback{\doteq} \big(\lganext{\lessdot} (\neg \lhunext \top \land \lhusince{\varphi'}{\psi'}) \big)$
  holds in $i$.
  Then, there exists a position $h$ s.t.\ $\chain(h,i)$ and $h \doteq i$.
  Moreover, at least a position $j'$ s.t.\ $\chain(h,j')$ and $h \lessdot j'$ exists.
  Let $j$ be the rightmost one, i.e.\ the only one in which $\neg \lhunext \top$ holds.
  The corresponding tree node $t = \tau(j)$ is s.t.\ $t \rsibl s$, with $s = \tau(i)$.
  Since $\lhusince{\varphi'}{\psi'}$ holds in $j$, a UHP starts from it,
  and $\psi$ and $\varphi$ hold in the tree nodes corresponding to, respectively,
  the first and all other positions in the path.
  This is case~\ref{item:proof-iota-la-1b}, and $\Leftarrow(\varphi, \psi)$ holds in $s$.

  \noindent~\eqref{eq:iota-leftarrow-3}:
  $\lgaback{\lessdot} (\lgnext{\lessdot} \psi')
    \land \neg \lhuback (\lhusince{\top}{\neg \varphi'})$
  holds in $i$.
  Then, there exists a position $h$ s.t.\ $\chain(h,i)$ and $h \lessdot i$.
  $\psi'$ holds in $k = h+1$, so $\psi$ holds in the leftmost child of $r = \tau(h)$.
  Moreover, $\varphi'$ holds in all positions $j' < i$ s.t.\ $\chain(h,j')$ and $h \lessdot j'$,
  so $\varphi$ holds in all children of $r$ between $v = \tau(k)$ and $s = \tau(i)$, excluded.
  This is case~\ref{item:proof-iota-la-2a}, and $\Leftarrow(\varphi, \psi)$ holds in $s$.

  \noindent~\eqref{eq:iota-leftarrow-4}:
  $\lgaback{\doteq} (\lgnext{\lessdot} \psi' \land \neg \lganext{\lessdot} \neg \varphi' )$
  holds in $i$.
  Then, there exists a position $h$ s.t.\ $\chain(h,i)$ and $h \doteq i$.
  $\lgnext{\lessdot} \psi'$ holds in $h$, so $\psi$ holds in node $v = \tau(h+1)$,
  which is the leftmost child of $r = \tau(h)$.
  Since $\neg \lganext{\lessdot} \neg \varphi'$ holds in $h$,
  $\psi'$ holds in all positions $j'$ s.t.\ $\chain(h,j')$ and $h \lessdot j$.
  So, $\psi$ holds in all children of $r$ except (possibly) the leftmost ($v$)
  and the rightmost ($s = \tau(i)$) ones.
  This is case~\ref{item:proof-iota-la-2b}, and $\Leftarrow(\varphi, \psi)$ holds in $s$.
\end{proof}

It is possible to express all POTL operators in FOL, as per Section~\ref{sec:fo-semantics}.
From this, and Lemmas~\ref{lemma:iota-downarrow},~\ref{lemma:iota-rightarrow}, and~\ref{lemma:iota-leftarrow}
together with Theorem~\ref{thm:xuntil-fo-completeness}, we derive
\begin{thm}%
\label{thm:potl-completeness}
POTL = FOL with one free variable on finite OP words.
\end{thm}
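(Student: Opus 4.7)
The proof assembles the translation chain already developed in this section. The containment POTL $\subseteq$ FOL is immediate: Section~\ref{sec:fo-semantics} exhibits the translation $\nu$ that maps each POTL formula $\varphi$ to an FO formula $\nu_\varphi(x)$ with a single free variable using only three variables $x,y,z$, and by construction $(w,i) \models \nu_\varphi(x)$ iff $(w,i) \models \varphi$. The translation is defined by structural induction on POTL formulas, with a clause for each temporal modality.

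For the reverse inclusion FOL $\subseteq$ POTL, the plan is to compose three correctness-preserving translations along the path OP words $\to$ OPM-compatible UOTs $\to$ \xuntil{} $\to$ POTL. Given an FO formula $\varphi(x)$ on OP words compatible with $M_{AP}$, first use the isomorphism $\tau$ of Lemma~\ref{lemma:tau-isomorphism} (together with the proposition following it) to obtain an FO formula $\varphi'(x)$ on $\uotrees_{M_{AP}}$ such that $w \models \varphi(i)$ iff $T_w \models \varphi'(\tau(i))$, where $T_w = \tau(w)$. Then apply Theorem~\ref{thm:xuntil-fo-completeness} (Marx / Libkin--Sirangelo) to convert $\varphi'$ into an equivalent \xuntil{} formula $\varphi''$ on OPM-compatible UOTs, so that $T_w \models \varphi'(\tau(i))$ iff $(T_w,\tau(i)) \models \varphi''$. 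Finally, apply the translation $\iota_\mathcal{X}$ inductively to $\varphi''$, producing a POTL formula $\iota_\mathcal{X}(\varphi'')$.

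Correctness of the final step follows by induction on $\varphi''$: the Boolean and atomic cases are trivial since $\iota_\mathcal{X}$ is the identity there, while the four modal cases $\Downarrow, \Uparrow, \Rightarrow, \Leftarrow$ are exactly Lemmas~\ref{lemma:iota-downarrow}, \ref{lemma:iota-rightarrow}, and \ref{lemma:iota-leftarrow} (with $\Uparrow$ handled symmetrically to $\Downarrow$, as noted in the text). Chaining the three equivalences yields, for every OP word $w$ and position $i$,
\[
(w,i) \models \varphi \iff T_w \models \varphi'(\tau(i)) \iff (T_w,\tau(i)) \models \varphi'' \iff (w,i) \models \iota_\mathcal{X}(\varphi''),
\]
so $\iota_\mathcal{X}(\varphi'')$ is the POTL formula equivalent to $\varphi$ required by the theorem.

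The substantive work is entirely in the earlier lemmas; the theorem itself is just a packaging step, so I do not expect any real obstacle here. The only thing worth verifying is that Theorem~\ref{thm:xuntil-fo-completeness} is applied to $\uotrees_{M_{AP}}$ (a proper subclass of UOTs): this is legitimate because the results of \cite{Marx2005,LibkinS10} are stated for arbitrary labelings of unranked ordered trees and the OPM-compatibility constraint is itself FO-definable, so relativizing to $\uotrees_{M_{AP}}$ preserves FO-completeness.
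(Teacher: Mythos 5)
Your proposal is correct and follows essentially the same route as the paper: POTL $\subseteq$ FOL via the translation $\nu$ of Section~\ref{sec:fo-semantics}, and FOL $\subseteq$ POTL by composing the word--tree isomorphism of Lemma~\ref{lemma:tau-isomorphism}, the FO-completeness of \xuntil{} on trees (Theorem~\ref{thm:xuntil-fo-completeness}), and the translation $\iota_\mathcal{X}$ whose correctness is Lemmas~\ref{lemma:iota-downarrow}, \ref{lemma:iota-rightarrow}, and \ref{lemma:iota-leftarrow}. Your closing remark on relativizing Marx's result to $\uotrees_{M_{AP}}$ matches the paper's own justification that the result holds for arbitrary labelings of unranked ordered trees.
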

\begin{cor}%
\label{cor:complete-subset}
The propositional operators plus
\(
  \ldnext,
  \ldback,
  \lcdnext,
  \lcdback,
  \lcduntil{}{},
  \lcdsince{}{},
  \lhunext, \allowbreak
  \lhuback, \allowbreak
  \lhuuntil{}{}, \allowbreak
  \lhusince{}{}
\)
are expressively complete on OP words.
\end{cor}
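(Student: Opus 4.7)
The plan is to obtain the corollary as a direct byproduct of the proof of Theorem~\ref{thm:potl-completeness}, by inspecting which POTL operators are actually used in the translation $\iota_\mathcal{X}$ from \xuntil{} to POTL given in equations \eqref{eq:iota-downarrow}--\eqref{eq:iota-leftarrow-4}. The direction POTL $\subseteq$ FOL for the restricted fragment is already a special case of the FO translations of Section~\ref{sec:fo-semantics}, so only the direction FOL $\subseteq$ POTL needs verification.

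First, I would unpack the auxiliary shortcuts introduced in Section~\ref{sec:cxpath-translation}. The operators $\lganext{\lessdot}\gamma$, $\lganext{\doteq}\gamma$, $\lgaback{\lessdot}\gamma$ and $\lgaback{\doteq}\gamma$ are defined as finite propositional disjunctions over pairs $(a,b) \in \powset{AP}^2$ with $a \pr b$ of formulas of the form $\sigma_a \land \lcdnext(\sigma_b \land \gamma)$ or $\sigma_a \land \lcdback(\sigma_b \land \gamma)$. Since each $\sigma_a$ is purely propositional (cf.~\eqref{eq:sigmaa}) and $\lcdnext, \lcdback$ belong to the listed subset, all four chain-restricted shortcuts are expressible in the restricted fragment. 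The linear-restricted shortcuts $\lgnext{\lessdot}\gamma$ and $\lgback{\lessdot}\gamma$ are obtained analogously using $\ldnext$ and $\ldback$, which are also listed.

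Next, I would walk through equations \eqref{eq:iota-downarrow}--\eqref{eq:iota-leftarrow-4} one by one and confirm that, after the above shortcuts are expanded, every right-hand side uses only operators from
\(
\{\ldnext, \ldback, \lcdnext, \lcdback, \lcduntil{}{}, \lcdsince{}{}, \lhunext, \lhuback, \lhuuntil{}{}, \lhusince{}{}\}
\)
together with propositional connectives. In particular, none of the upward-summary operators $\lcunext, \lcuback, \lcuuntil, \lcusince$, the downward-hierarchical operators $\lhdnext, \lhdback, \lhduntil, \lhdsince$, nor the plain upward next/back $\lunext, \luback$, ever appear in $\iota_\mathcal{X}$. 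Combining this observation with Theorem~\ref{thm:potl-completeness}, whose FOL $\subseteq$ POTL direction proceeds exactly through the chain OP~word $\xrightarrow{\tau}$ OPM-compatible UOT $\to$ \xuntil{} $\xrightarrow{\iota_\mathcal{X}}$ POTL, yields the desired expressive completeness of the restricted subset.

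The argument is therefore more bookkeeping than a genuinely new proof; the only mildly delicate point I expect to be worth handling carefully is the expansion of the four sibling-operator clauses \eqref{eq:iota-rightarrow-1}--\eqref{eq:iota-leftarrow-4}, where several layers of nested restricted shortcuts appear simultaneously and where it is tempting to think that some upward chain next is required --- a case check shows that $\lhunext$ and $\lhuback$ (plus the $\lcdnext$-based restricted shortcuts) already suffice, which is exactly why the upward variants of the chain and summary operators can be dispensed with while retaining FO-completeness.
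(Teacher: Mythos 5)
Your proposal is correct and matches the paper's own argument: the paper derives Corollary~\ref{cor:complete-subset} precisely by observing that the translation $\iota_\mathcal{X}$ (after expanding the $\sigma_a$-based shortcuts such as $\lganext{\lessdot}$, $\lgaback{\doteq}$, $\lgnext{\lessdot}$, $\lgback{\lessdot}$ into propositional formulas combined with $\lcdnext$, $\lcdback$, $\ldnext$, $\ldback$) uses only the listed operators, and then invoking Theorem~\ref{thm:potl-completeness}. Your additional bookkeeping about which upward and hierarchical-downward operators never appear is exactly the check the paper leaves implicit.
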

\begin{cor}%
\label{cor:optl-in-potl}
NWTL $\subset$ OPTL $\subset$ POTL over finite OP words.
\end{cor}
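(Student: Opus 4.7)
The corollary asserts two strict inclusions over finite OP words: NWTL $\subset$ OPTL and OPTL $\subset$ POTL. The plan is to split the argument in two independent parts, reusing Theorem~\ref{thm:potl-completeness} (POTL = FOL on OP words) and Theorem~\ref{thm:optl-vs-potl} (exhibiting a POTL-expressible but OPTL-inexpressible property).

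For the inclusion OPTL $\subseteq$ POTL, I would give, for every OPTL operator, a first-order definition on OP words analogous to those provided for POTL in Section~\ref{sec:fo-semantics}. This is routine: the LTL operators use only the linear order $<$ and the successor relation; the matching next/back $\lanext$ and $\laback$ use $\chain$ and match POTL's $\lcunext$ and $\lcuback$ directly; the OPTL summary until $\luntil{\Pi}{}{}$ is already defined inductively through PR-filtered next operators and chain jumps via $\chain$ restricted to $\doteq$ or $\gtrdot$ contexts, and this translates verbatim into a FO formula employing $\chain$, $<$, and the PR predicates $x \pr y$ built from monadic labels (as done in Section~\ref{sec:fo-semantics}); the hierarchical operators $\lhyuntil{}{}$, $\lhysince{}{}$, $\lhtuntil{}{}$, $\lhtsince{}{}$ are handled by quantifying over a common chain context and iterating along positions sharing it. Applying Theorem~\ref{thm:potl-completeness} to the resulting FO formula yields an equivalent POTL formula, so OPTL $\subseteq$ POTL.

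For the strictness OPTL $\subsetneq$ POTL, it suffices to invoke Theorem~\ref{thm:optl-vs-potl}, which exhibits the POTL formula $\leven{d} \mathrm{p}_A$ and proves that no OPTL formula is equivalent to it on the language $L_\lcall$. For the remaining inclusion NWTL $\subseteq$ OPTL and its strictness, I would appeal to the results of the companion OPTL paper \cite{ChiariMP20a}: there it is shown that every NWTL modality translates into OPTL on OP words that encode nested words (the chain relation on such words is one-to-one and coincides with the NWTL matching relation), and that OPTL can formulate properties on many-to-one/one-to-many chains (e.g.\ on exceptions terminating several pending calls) which are not expressible in NWTL because nested words cannot even represent the underlying structure.

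The main obstacle is the first step: laying out the FO translations of the OPTL summary and hierarchical operators with the right quantifier nesting so that the definitions uniformly use only monadic labels, $<$, and $\chain$ (as needed to invoke Theorem~\ref{thm:potl-completeness}). Once this is done, the two strict inclusions fall out from the previous theorems with no further work.
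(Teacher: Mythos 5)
Your proposal is correct and follows essentially the same route as the paper: OPTL $\subseteq$ POTL is obtained by expressing OPTL's semantics in FOL (over $<$, $\chain$, and the monadic label predicates) and invoking Theorem~\ref{thm:potl-completeness}, strictness comes from Theorem~\ref{thm:optl-vs-potl}, and NWTL $\subset$ OPTL is delegated to the companion result in \cite{ChiariMP20a}. The only difference is that you sketch the FO translations of the OPTL operators in somewhat more detail than the paper, which simply asserts that OPTL's semantics is ``expressible in FOL similarly to POTL.''
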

\begin{cor}%
\label{cor:3var}
Every FO formula with at most one free variable is equivalent
to one using at most three distinct variables on finite OP words.
\end{cor}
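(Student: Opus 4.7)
The plan is to combine Theorem~\ref{thm:potl-completeness} with the explicit translation $\nu$ from POTL to FOL given in Section~\ref{sec:fo-semantics}. Given an arbitrary FO formula $\varphi(x)$ with at most one free variable on finite OP words, the backward direction of Theorem~\ref{thm:potl-completeness} yields an equivalent POTL formula $\psi$. Applying $\nu_\psi(x)$ then produces an FO formula equivalent to $\psi$, and hence to $\varphi(x)$. The whole argument therefore reduces to a single verification: that the translation $\nu$ introduces no variables besides $x$, $y$, and $z$.

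The key observation is that every clause of $\nu$ in Section~\ref{sec:fo-semantics} is already written over the fixed alphabet $\{x,y,z\}$, and variables are systematically recycled via two devices. First, the auxiliary formulas $\sigma_a(x)$ and $x \pr y$ are purely propositional combinations of monadic predicates, so they introduce no new bound variables. Second, every recursive call $\nu_\varphi$ inside a temporal clause is wrapped in the rebinding idiom $\exists x\,(x = y \land \nu_\varphi(x))$ (or $\exists x\,(x=z \land \nu_\varphi(x))$), which reuses $x$ as the free variable of the inductive subformula. This means that, under the inductive hypothesis that $\nu_\varphi$ uses only $\{x,y,z\}$, its insertion inside a larger $\nu$-clause does not add any fresh name.

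What needs checking, and is the only mildly delicate point, is that the more elaborate clauses (downward/upward summary until/since and the hierarchical operators) remain within three variables despite the deeper nesting of quantifiers. I would proceed clause by clause, confirming in each case that every sub-quantifier either rebinds $x$, $y$, or $z$, and that the auxiliary predicates $\gamma(x,y,z)$, $\gamma_L$, $\gamma_R$, $\delta$ and the hierarchical formulas all respect this discipline. The worked examples in the paper, such as $\gamma(x,y,z)$ being reused to obtain $\gamma(z,y,x)$ by a pure renaming, illustrate that no clause requires a fourth name.

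Having established by structural induction on $\psi$ that $\nu_\psi(x)$ is an FO formula over the three variables $\{x,y,z\}$ with $x$ as its unique free variable, the corollary is immediate: $\varphi(x) \equiv \psi \equiv \nu_\psi(x)$ on finite OP words, and $\nu_\psi(x)$ uses at most three distinct variables. The main obstacle is thus purely bookkeeping — there is no new combinatorial content beyond Theorem~\ref{thm:potl-completeness} — but the bookkeeping must be done carefully in the hierarchical clauses, where the alternation between $\exists$ and $\forall$ rebinding of $y$ and $z$ is the only place a naive formulation would be tempted to introduce a fourth variable.
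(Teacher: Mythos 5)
Your proposal is correct and coincides with the paper's own argument: the paper derives Corollary~\ref{cor:3var} precisely by composing the FO-completeness of POTL (Theorem~\ref{thm:potl-completeness}) with the observation, stated at the start of Section~\ref{sec:fo-semantics}, that the translation $\nu$ uses only the three variables $x,y,z$ via the rebinding idiom $\exists x\,(x=y \land \nu_\varphi(x))$. The clause-by-clause bookkeeping you describe is exactly the verification the paper relies on.
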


Corollary~\ref{cor:complete-subset} follows from the definition of $\iota_\mathcal{X}$
and Theorem~\ref{thm:potl-completeness}
(note that all other operators are shortcuts for formulas expressible with those listed).
In Corollary~\ref{cor:optl-in-potl}, NWTL $\subset$ OPTL was proved in~\cite{ChiariMP20a},
and OPTL $\subseteq$ POTL comes from Theorem~\ref{thm:potl-completeness} and
the semantics of OPTL being expressible in FOL similarly to POTL,
while OPTL $\subsetneq$ POTL comes from Theorem~\ref{thm:optl-vs-potl}.
Corollary~\ref{cor:3var}, stating that OP words have the three-variable property,
follows from the FOL semantics of POTL being expressible with just three variables.

\section{First-Order Completeness on \texorpdfstring{$\omega$}{Omega}-Words}%
\label{sec:omega-fo-completeness}

To prove the FO-completeness of the translation of \xuntil{} into POTL
also on OP $\omega$-words, we must prove that \xuntil{} is FO-complete on the OPM-compatible
UOTs resulting from $\omega$-words.
In Section~\ref{sec:opm-comp-omega-trees} we show that infinite OPM-compatible UOTs
can be divided in two classes, depending on their shape.
Then, after introducing some new notation in Section~\ref{sec:omega-notation}, we show how to translate a given FO formula into \xuntil{} separately for each UOT class
in Sections~\ref{sec:rr-trees} and~\ref{sec:lr-trees},
and how such translations can be combined to work on any infinite OPM-compatible UOT
in Section~\ref{sec:synthesis}.
Our proofs exploit composition arguments on trees from~\cite{Libkin09,HaferT87,MollerR99} but introduce new techniques to deal with the peculiarities of UOTs derived from $\omega$-OPLs.

\subsection{OPM-compatible \texorpdfstring{$\omega$}{omega}-UOTs}%
\label{sec:opm-comp-omega-trees}
The application of function $\tau$ from Section~\ref{sec:fo-completeness}
to OP $\omega$-words results in two classes of infinite UOTs,
depending on the shape of the underlying ST\@.
In both cases, in the UOT the rightmost child of the root is not labeled with $\#$,
and nodes in the rightmost branch do not have a \emph{right context}.
$\tau^{-1}$, which can be defined in the same way, converts such UOTs into words
with open chains.

\begin{figure}
\centering
\begin{tabular}{m{0.5\textwidth} m{0.3\textwidth}}
\centering
\includegraphics{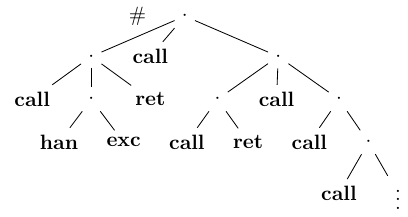}
&
\centering
\includegraphics{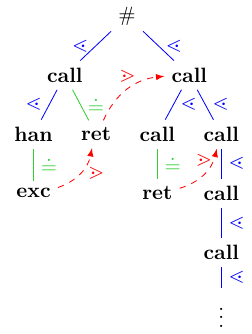}
\end{tabular}
\caption{ST (left) and UOT (right) of the RR OP $\omega$-word
  $\# \; \lcall \; \lhandle \; \lthrow \; \lret \; \allowbreak \lcall \; \allowbreak \lcall \; \lret \; \lcall \; \lcall \; \lcall \; \dots$
  on OPM $M_\lcall$ (Figure~\ref{fig:opm-mcall}).}%
\label{fig:rr-example}
\end{figure}

If a word $w$ reaches infinity through right recursion, then it contains an infinite
number of chains that have a left context, which we call a \emph{pending} position,
but no right context.
An $\omega$OPBA reading such a word does an infinite number of push moves,
and its stack grows to infinity.
The corresponding UOT $T_w = \tau(w)$
presents a single infinite branch, made of the rightmost nodes of each level.
Such nodes, which we call \emph{pending}, are in the $\lessdot$ PR
when they correspond to a right recursion step,
and in the $\doteq$ PR when they are siblings in the ST\@.
Pending nodes may have left non-terminal (dot) siblings, which correspond to bodies of inner chains
between two consecutive right-recursion steps.
So, $\omega$-words in which left and right recursion are alternated also
fall into this class, which we call right-recursive (RR) UOTs (Figure~\ref{fig:rr-example}).

\begin{figure}
\centering
\begin{tabular}{m{0.45\textwidth} m{0.5\textwidth}}
\centering
\includegraphics{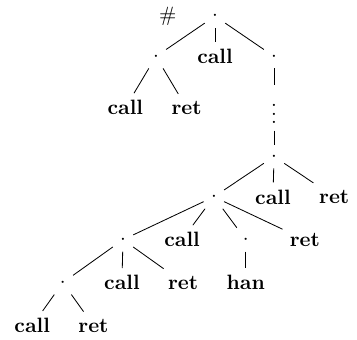}
&
\centering
\includegraphics{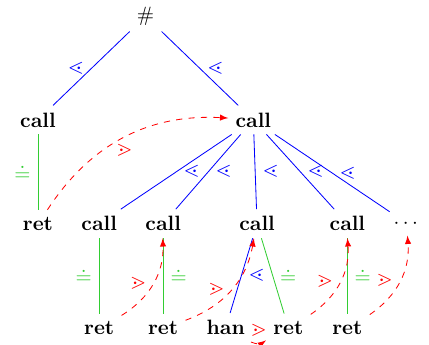}
\end{tabular}
\caption{ST (left) and UOT (right) of the LR OP $\omega$-word
  $\# \; \lcall \; \lret \; \lcall \; \allowbreak \lcall \; \allowbreak \lret \; \allowbreak \lcall \; \lret \; \lcall \; \lhandle \; \lret \; \lcall \; \lret \; \dots$
  on OPM $M_\lcall$ (Figure~\ref{fig:opm-mcall}).}%
\label{fig:lr-example}
\end{figure}

If $w$ reaches infinity by left recursion, then it contains an infinite number of chains
sharing the same left context.
An $\omega$OPBA reading $w$ performs an infinite sequence of pop moves, each one followed by a push,
and its stack size is ``ultimately bounded'', i.e., the stack symbol related to such a left context remains in the stack indefinitely,
and other symbols are repeatedly pushed on top of it, and popped.
Thus, the stack reaches the same size infinitely many times.
The rightmost branch of $T_w = \tau(w)$ ends with a node $r_\infty$
with an infinite number of children (cf.\ Figure~\ref{fig:lr-example}).
Node $r_\infty$ is in the $\lessdot$ relation with all of its children,
otherwise it would violate property~\ref{item:downward-prop} of the $\chain$ relation.
This is the left-recursive (LR) class of UOTs.

An exception to the above classification may occur if the OPM is such that the transitive closure of the $\doteq$ relation is reflexive ---in other words the OPM contains $\doteq$-circularities---.
In this case the ST \emph{may} contain just one node with an infinite number of children,
all in the $\doteq$ PR\@.
As a result, such nodes form \emph{a unique infinite branch} in the corresponding UOT whose nodes are in the $\doteq$ relation unlike the case of Figure~\ref{fig:rr-example}.
This is the distinguishing feature of RR UOTs, despite the fact that in this case the stack of the $\omega$OPBA remains ``ultimately bounded'' as in the case of LR UOTs. Thus, this exceptional case is attributed to the RR class.

In the following, by RR (resp.\ LR) word or ST we mean an OP $\omega$-word or ST that translates to a RR (resp.\ LR) UOT\@.
Next, we separately give a translation of FOL into \xuntil{} for RR and LR UOTs,
and then show how to combine them to obtain completeness.

\begin{rems}
There cannot be RR UOTs containing a node with infinitely many children,
or LR UOTs where more than one node has infinite children.
If this was the case, then the infinite children would appear as consecutive
infinite subsets of positions in the OP $\omega$-word isomorphic to the UOT\@.
But this is impossible, because the set of word positions is $\mathbb{N}$,
which is not dense and does not contain dense subsets.

The FO-completeness proof of the logic NWTL~\cite{lmcs/AlurABEIL08}
is also based on a translation of Nested Words to UOTs.
However, Nested Words result in only one kind of UOT,
because VPL grammars can be transformed so that words grow in only one direction.
Thus, that proof does not deal with the issue of combining two separate translations.
\end{rems}

\subsection{Notation}\label{sec:omega-notation}
Given a FO formula $\varphi$, we call its \emph{quantifier rank} (q.r.)
the maximum nesting level of its quantifiers.
Let $\mathcal{M}$ be a structure on a relational signature $\langle D, R_1, \dots, R_n \rangle$,
with domain $D$ and relations $R_1, \dots, R_n$.
The \emph{rank-$k$ type} of $\mathcal{M}$ is the set
\[
\sigma_k(\mathcal{M}) =
\{\varphi \mid \text{$\varphi \in$ FOL, $\mathcal{M} \models \varphi$
  and the q.r.\ of $\varphi$ is $k$}\},
\]
while the rank-$k$ type of $\mathcal{M}$ with a distinguished element $d \in D$ is
\[
\sigma_k(\mathcal{M}, d) =
\{\varphi(x) \mid \text{$\varphi(x) \in$ FOL, $(\mathcal{M}, d) \models \varphi(x)$
  and the q.r.\ of $\varphi(x)$ is $k$}\}.
\]

Since the set of nonequivalent FO formulas with at most $k$ quantifiers on a relational signature is finite,
there are only finitely many rank-$k$ types.
For each rank-$k$ type $\sigma_k$ it is possible to define the Hintikka
formula $H_{\sigma_k}$~\cite{Hintikka53}, s.t.\ $\mathcal{M} \models H_{\sigma_k}$
iff the rank-$k$ type of $\mathcal{M}$ is $\sigma_k$.

The compositional argument used here is based on Ehrenfeucht-Fra\"{\i}ss\'e (EF) games
(see e.g.,~\cite{GradelKLMSVVW07,Immerman12})
between two players, $\forall$ (the \emph{Spoiler}) and $\exists$ (the \emph{Duplicator}).
A round of an EF game between two structures $\mathcal{M}$ and $\mathcal{M}'$ with the same signature
starts with $\forall$ picking an element of the domain
of either one of $\mathcal{M}$ and $\mathcal{M}'$,
followed by $\exists$ answering by picking an element of the other structure.
$\exists$ wins the $k$-round game on two structures if the map
between the elements picked by $\forall$ and those picked by $\exists$ in each of the first $k$ rounds
is a partial isomorphism between $\mathcal{M}$ and $\mathcal{M}'$.
We write $\mathcal{M} \sim_k \mathcal{M}'$ if $\exists$ has a winning strategy for the
$k$-round game between $\mathcal{M}$ and $\mathcal{M}'$ and,
given $a \in D$ and $a' \in D'$, we write $(\mathcal{M}, a) \sim_k (\mathcal{M}', a')$
iff $\exists$ wins the game on $\mathcal{M}$ and $\mathcal{M}'$ in whose first round $\forall$
picks $a$, and $\exists$ answers with $a'$.
We write $\mathcal{M} \equiv_k \mathcal{M}'$ to state that $\mathcal{M}$ and $\mathcal{M}'$
have the same rank-$k$ type, i.e.\ they satisfy exactly the same FO formulas of q.r.\ at most $k$.
By the EF Theorem, $\mathcal{M} \sim_k \mathcal{M}'$ iff $\mathcal{M} \equiv_k \mathcal{M}'$, for all $k$.

We refer to the syntax and semantics of \xuntil{} presented in Section~\ref{sec:cxpath-translation}.
The semantics of the until and since operators is strict,
i.e.\ the position where they are evaluated is not part of their paths,
which start with the next one.
Thus, next and back operators are not needed, but we define them as shortcuts,
for any \xuntil{} formula $\varphi$:
\begin{align*}
\xcnext \varphi &:= \xcuntil{\neg \top}{\varphi} &
\xcback \varphi &:= \xcsince{\neg \top}{\varphi} &
\xsnext \varphi &:= \xsuntil{\neg \top}{\varphi} &
\xsback \varphi &:= \xssince{\neg \top}{\varphi}
\end{align*}

We call the structure $\langle U, <, P \rangle$ a finite LTL word if $U \subseteq \mathbb{N}$
is finite, and an LTL $\omega$-word if $U = \mathbb{N}$.
$<$ is a linear order on $U$, and $P \colon AP \to \powset{U}$ is a labeling function.
The syntax of an LTL formula $\varphi$ is, for $\mathrm{a} \in AP$,
\(
\varphi ::= \mathrm{a}
\mid \neg \varphi
\mid \varphi \lor \varphi
\mid \lluntil{\varphi}{\varphi}
\mid \llsince{\varphi}{\varphi}
\),
where propositional operators have the usual meaning.
Given an LTL word $w$, and a position $i$ in it,
\begin{itemize}
\item $(w, i) \models \mathrm{a}$ iff $i \in P(\mathrm{a})$;
\item $(w, i) \models \lluntil{\psi}{\theta}$
  iff there is $j > i$ s.t.\ $(w, j) \models \theta$ and
  for any $i < j' < j$ we have $(w, j') \models \psi$;
\item $(w, i) \models \llsince{\psi}{\theta}$
  iff there is $j < i$ s.t.\ $(w, j) \models \theta$ and
  for any $j < j' < i$ we have $(w, j') \models \psi$.
\end{itemize}
A \emph{future} LTL formula only contains the $\lluntil{}{}$ modality,
and a \emph{past} one only contains $\llsince{}{}$.

\subsection{RR UOTs}%
\label{sec:rr-trees}

\begin{figure}[bt]
  \centering
  \includegraphics{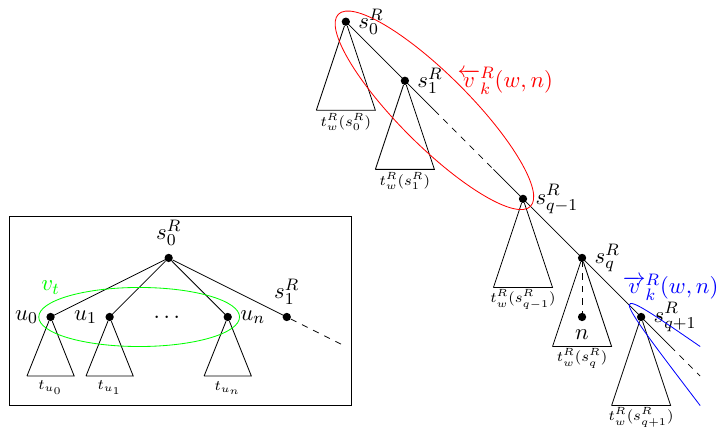}
  \caption{Parts in which we divide a RR UOT for Lemma~\ref{lemma:composition-rr} (right)
    and Lemma~\ref{lemma:composition-subtree} (left).}%
  \label{fig:rr-tree-comp}
\end{figure}

We prove the following:
\begin{lem}%
\label{lemma:transl-rr}
Given a FO formula on UOTs $\bar{\varphi}(x)$ of q.r.\ $k \geq 1$,
there exists a \xuntil{} formula $\varphi_R$
s.t.\ for any OP $\omega$-word $w$ s.t.\ $T_w = \tau(w)$ is a RR UOT,
and for any node $n \in T_w$ we have
$(T_w, n) \models \bar{\varphi}(x)$ iff $(T_w, n) \models \varphi_R$.
\end{lem}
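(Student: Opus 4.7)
The plan is to prove the lemma by a composition argument in the style of Feferman--Vaught and Shelah, combining Ehrenfeucht--Fra\"iss\'e games on trees with Kamp's theorem on the infinite rightmost branch. I would decompose any RR UOT $T$ into (i) its unique infinite rightmost branch $B=r_0,r_1,r_2,\dots$ of pending nodes, whose consecutive pairs are either in the $\doteq$ or in the $\lessdot$ PR, and (ii) for each $r_i$, the finite forest $F_i$ made of the (possibly empty) left siblings of $r_i$, each being the root of a finite OPM-compatible UOT. The distinguished node $n$ either sits on $B$ or inside one of the $F_i$'s, and I would handle the two cases separately and then take a disjunction.

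Next I would prove a composition lemma saying that the rank-$k$ type of $(T,n)$ is determined by: the location of $n$ (on-branch or off-branch, and in which $F_i$ in the latter case); the rank-$k$ type of the finite structure $(F_i,n)$ when $n\in F_i$; and the rank-$k$ type of the labelled $\omega$-word $W_T$ obtained from $B$ by expanding the alphabet so that each $r_i$ is tagged with (a) its own $AP$-label, (b) the PR between $r_i$ and $r_{i+1}$, and (c) the finite multiset of rank-$k$ types of the trees in $F_i$. Since there are only finitely many rank-$k$ types of finite UOTs, this expanded alphabet is finite. The proof of the composition lemma is the standard EF-game argument: the Duplicator mirrors on-branch Spoiler moves using her winning strategy on $W_T$, and off-branch moves using her strategy on the relevant $F_i$; crossings from branch to subtree are absorbed because the branch labels already encode the subtree types up to rank $k$.

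With the composition lemma in hand, each rank-$k$ type $\sigma$ of $(T,n)$ is characterised by a Hintikka-style Boolean combination of three kinds of atoms: (1) a finite UOT Hintikka formula evaluated on $F_i$ (translatable to \xuntil{} by Theorem~\ref{thm:xuntil-fo-completeness}); (2) a finite UOT Hintikka formula for the subtrees hanging off the branch, which, being a local property of $r_i$, can be expressed in \xuntil{} using $\xssinceo$ to reach the first left sibling and the $\xsuntilo,\xcuntilo$ operators to inspect it; (3) a FO property of the labelled $\omega$-word $W_T$, which by Kamp's theorem is equivalent to an LTL formula, and which I would then simulate in \xuntil{} by walking the rightmost branch, using $\xsuntilo$ between same-level pending siblings and $\xcnext$ to descend from $r_i$ to $r_{i+1}$ at a $\lessdot$ step (the ``rightmost-child'' pending step), each atomic branch-label being a \xuntil{} formula recognising the tag described above. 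Finally, I would guard the whole translation by a \xuntil{} formula identifying the two cases $n\in B$ and $n\in F_i$: being on the rightmost branch is expressible as ``no strict right sibling and the ancestor chain reaches the root without ever taking a non-rightmost-child step'', and its negation characterises off-branch nodes.

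The main obstacle I anticipate is the composition lemma, specifically the accounting of quantifier ranks across branch/subtree crossings: when the Spoiler jumps from a subtree into the branch (or vice versa) the Duplicator's response must preserve not only the isomorphism on already-played elements but also rank-$k$ type information about the unplayed sides of the decomposition, which typically forces the branch labels to carry rank-$k'$ types for some $k'\geq k$. I would address this by an outer induction on $k$: the branch alphabet at level $k$ tags each $r_i$ with rank-$(k-1)$ types of $F_i$ (plus structural information of constant cost), so a clean bookkeeping yields a finite alphabet at every level while still capturing all rank-$k$ queries. All the remaining steps---\xuntil{}-expressibility of finite Hintikka formulas, expressibility of LTL on the branch, and of the branch-location guard---are then routine applications of Theorem~\ref{thm:xuntil-fo-completeness} and Kamp's theorem.
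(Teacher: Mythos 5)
Your overall architecture matches the paper's: you decompose the RR UOT into the infinite pending branch plus finite subtrees hanging off it, prove an EF-game composition lemma reducing the rank-$k$ type of $(T,n)$ to the types of these pieces, use Kamp's theorem to handle the branch word over an alphabet of finite-tree types, and use the FO-completeness of \xuntil{} on finite trees for the finite pieces. The paper does exactly this (Lemma~\ref{lemma:composition-rr} is your composition lemma, with the branch word split into a past finite word and a future $\omega$-word so that LTL separation yields pure-future and pure-past formulas).

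However, there is a genuine gap in the step you dismiss as routine: expressing, by a \xuntil{} formula evaluated \emph{inside} $T_w$, that a given node roots a finite subtree of a prescribed rank-$k$ type. Marx's theorem gives you a \xuntil{} formula that characterizes the type when evaluated on the finite tree \emph{as a standalone structure}; but \xuntil{} does not have the separation property, so that formula may contain $\xcsinceo$ (ancestor-since) and $\xsuntilo,\xssinceo$ (sibling) subformulas whose paths, when the formula is evaluated at an internal node of $T_w$, escape the intended subtree and inspect the pending branch or other parts of the tree. Your atoms of kinds (1), (2) and (3) all rely on such embedded evaluations, so without a relativization argument the translation is not justified. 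The paper's Lemma~\ref{lemma:composition-subtree} is devoted precisely to this: it truncates topmost-level since/sibling subformulas to $\neg\top$ and recursively guards each $\xcsince{\varphi}{\varphi'}$ with $\neg\xcback\alpha^*$ for a marker formula $\alpha^*$ identifying the subtree boundary, and then proves by structural induction that the rewritten formula agrees with the original on the embedded copy. You need this (or an equivalent device) for your proof to go through. Two smaller points: your proposed outer induction that decrements the rank of the branch labels is unnecessary --- the composition lemma holds with rank-$k$ types throughout --- and your picture of the branch as containing ``same-level pending siblings'' navigated by $\xsuntilo$ is off: under $\tau$, consecutive pending nodes are always parent and rightmost child in the UOT (for both $\doteq$ and $\lessdot$ steps), so the branch is walked with descendant operators relativized to pending nodes.
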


To prove Lemma~\ref{lemma:transl-rr}, we show that $\varphi_R$
can be built by combining formulas
describing only parts of a UOT, as hinted in Figure~\ref{fig:rr-tree-comp}.
First, we prove that the rank-$k$ type of such parts determines
the rank-$k$ type of the whole tree (Lemma~\ref{lemma:composition-rr});
then, since such subdivision includes finite subtrees, we show how to express
their rank-$k$ types in \xuntil{} (Lemma~\ref{lemma:composition-subtree});
finally we use such results to translate $\bar{\varphi}(x)$.

This part of the proof partially resembles the one for the FO-completeness of NWTL
in~\cite{lmcs/AlurABEIL08}, because of the similarity between the shape of RR UOTs
and those resulting from nested $\omega$-words.
However, the two proofs diverge significantly, because Nested Words are isomorphic to binary trees,
while RR UOTs are unranked.

Consider the RR UOT $T_w$ of the statement.
We denote with $s^R_0, s^R_1, \dots$ the infinite sequence of \emph{pending} nodes obtained
by starting from the root of $T_w$, and always descending through the rightmost child.
We call $t^R_w(s^R_p)$ the finite subtree obtained by removing the rightmost child of $s^R_p$
and its descendants from the subtree rooted in $s^R_p$.
If $s^R_p$ has one single child, $t^R_w(s^R_p)$ is made of $s^R_p$ only.
Let $n$ be any node in $T_w$, and let $s^R_q$ be s.t.\ $n$ is part of $t^R_w(s^R_q)$,
and $s^R_q = n$ if $n$ is a pending node.
Let $\Gamma_k$ be the (finite) set of all rank-$k$ types of finite UOTs,
and $\sigma^R_k(w, n) \in \Gamma_k$ be the rank-$k$ type of $t^R_w(s^R_q)$.
We define $\vla^R_k(w, n)$ as a finite LTL word of length $q$ on alphabet $\Gamma_k$,
s.t.\ each position $p$, $0 \leq p \leq q-1$, is labeled with $\sigma^R_k(w, s^R_p)$.
Also, let $\vra^R_k(w, n)$ be a LTL $\omega$-word on $\Gamma_k$ having each position
labeled with $\sigma^R_k(w, s^R_{q+j+1})$ for all $j \geq 0$.
We give the following compositional argument:
\begin{lem}%
\label{lemma:composition-rr}
Let $w_1$ and $w_2$ be two OP $\omega$-words, such that $T_{w_1} = \tau(w_1)$
and $T_{w_2} = \tau(w_2)$ are two RR UOTs.
Let $i_1$ and $i_2$ be two positions in, resp., $w_1$ and $w_2$,
and let $s_1 = \tau(i_1)$ and $s_2 = \tau(i_2)$.
For any $k \geq 1$, if
\begin{enumerate}
\item\label{item:foc-rr-vla}
  $\vla^R_k(w_1, s_1) \equiv_k \vla^R_k(w_2, s_2)$,
\item\label{item:foc-rr-vra}
  $\vra^R_k(w_1, s_1) \equiv_k \vra^R_k(w_2, s_2)$ and
\item\label{item:foc-rr-sametype}
  $\sigma^R_k(w_1, s_1) = \sigma^R_k(w_2, s_2)$,
\end{enumerate}
then $(T_{w_1}, s_1) \equiv_k (T_{w_2}, s_2)$.
\end{lem}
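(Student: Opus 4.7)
The plan is to invoke the Ehrenfeucht--Fra\"iss\'e theorem: it suffices to exhibit a winning strategy for Duplicator in the $k$-round EF game on $(T_{w_1}, s_1)$ and $(T_{w_2}, s_2)$. The three hypotheses, each an $\equiv_k$ equivalence, yield by the EF Theorem three auxiliary winning strategies: $\mathsf{D}_L$ for the LTL game between $\vla^R_k(w_1, s_1)$ and $\vla^R_k(w_2, s_2)$, $\mathsf{D}_R$ for the LTL game between $\vra^R_k(w_1, s_1)$ and $\vra^R_k(w_2, s_2)$, and $\mathsf{D}_T$ for the subtree game on $t^R_{w_1}(s^R_{q_1})$ and $t^R_{w_2}(s^R_{q_2})$ with $s_1, s_2$ as distinguished elements, where $s^R_{q_j}$ is the pending node whose finite subtree contains $s_j$.

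First I would observe the canonical decomposition of each $T_{w_j}$: every node lies in a unique finite subtree $t^R_{w_j}(s^R_p)$, and the only UOT relations that cross subtree boundaries are the parent--child edges $s^R_p \rchild s^R_{p+1}$ and, whenever $s^R_p$ has a rightmost ordinary (i.e., non-pending) child $c$, the sibling edge $c \rsibl s^R_{p+1}$. Duplicator will maintain, throughout the play, a finite order-preserving correspondence $\Pi$ between pending indices of $T_{w_1}$ and those of $T_{w_2}$, initialised with $(q_1, q_2)$, together with a separate subtree sub-strategy instance for each matched pair in $\Pi$; the correspondence $\Pi$ will always be consistent with the current plays of $\mathsf{D}_L$ and $\mathsf{D}_R$.

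Then, when Spoiler picks a node $n \in t^R_{w_1}(s^R_p)$ (symmetrically for $T_{w_2}$), Duplicator locates $p$ and proceeds as follows. If $p$ is already matched to some $p'$ in $\Pi$, Duplicator advances the associated subtree sub-strategy to obtain a reply $n' \in t^R_{w_2}(s^R_{p'})$. Otherwise, Duplicator consults $\mathsf{D}_L$ if $p < q_1$, or $\mathsf{D}_R$ if $p > q_1$, to pick a fresh matching index $p'$; since the LTL labels at $p$ and $p'$ coincide, we have $t^R_{w_1}(s^R_p) \equiv_k t^R_{w_2}(s^R_{p'})$, and a fresh $\equiv_k$-based sub-strategy for this pair yields the reply $n'$. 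Since at most $k$ moves are played in total, each sub-strategy is asked to handle at most $k$ rounds, which is covered by the $\equiv_k$ hypotheses.

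The hard part will be verifying that the resulting partial map is always an isomorphism on $\rchild$ and $\rsibl$, especially across the boundary edges listed above. Preservation within each matched pair of subtrees is immediate from its dedicated sub-strategy. Preservation of $s^R_p \rchild s^R_{p+1}$ across a boundary follows because $\mathsf{D}_L$ and $\mathsf{D}_R$ maintain the linear order of matched pending indices, so consecutive matched indices on one side correspond to consecutive matched indices on the other. Preservation of the sibling boundary edge $c \rsibl s^R_{p+1}$ is the most delicate point: it hinges on the rank-$k$ type of each subtree (with its pending root $s^R_p$ distinguished) encoding whether such a $c$ exists, together with OPM-compatibility of the labels forcing the same relation on the matching subtree. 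Once these invariants are discharged, Duplicator wins all $k$ rounds, so $(T_{w_1}, s_1) \sim_k (T_{w_2}, s_2)$, and the EF Theorem delivers the desired $(T_{w_1}, s_1) \equiv_k (T_{w_2}, s_2)$.
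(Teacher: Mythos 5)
Your proposal is correct and follows essentially the same route as the paper's proof: a $k$-round EF game in which Duplicator dispatches Spoiler's moves to the three auxiliary games provided by the hypotheses (the $\vla^R_k$ word game for subtrees hanging off ancestor pending nodes, the $\vra^R_k$ word game for those below, and the hypothesis-(3) subtree game for the current one), concluding via the EF Theorem. Your explicit bookkeeping of the index correspondence $\Pi$ with persistent per-pair sub-strategies, and your flagging of the boundary edges $s^R_p \rchild s^R_{p+1}$ and $c \rsibl s^R_{p+1}$, merely spell out details the paper compresses into ``a direct consequence of rank-$k$ type equivalences in the hypotheses.''
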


\begin{proof}
By the EF Theorem, equivalences~\ref{item:foc-rr-vla}--\ref{item:foc-rr-vra}--\ref{item:foc-rr-sametype}
imply that the respective games have a winning strategy; we refer to them by game~\ref{item:foc-rr-vla},~\ref{item:foc-rr-vra} and~\ref{item:foc-rr-sametype}.
We prove that $(T_{w_1}, s_1) \sim_k (T_{w_2}, s_2)$,
i.e.\ that $\exists$ has a winning strategy in the EF game on $(T_{w_1}, s_1)$ and $(T_{w_2}, s_2)$ so that the thesis follows.
In round 0 of the game, partial isomorphism is ensured by $s_1$ and $s_2$ having the same labels,
due to hypothesis~\ref{item:foc-rr-sametype}.
In any subsequent round, suppose w.l.o.g.\ that $\forall$ picks a node $s^\forall$ from $T_{w_1}$
(the converse is symmetric).
Let $s^R_{q_1}$ be the pending node s.t.\ $s_1$ is part of $t^R_w(s^R_{q_1})$,
and $s^R_{q_2}$ be the pending node s.t.\ $s_2$ is part of $t^R_w(s^R_{q_2})$.
We have the following cases:

\begin{itemize}
\item
$s^\forall = s^R_{q^\forall}$ is one of the pending nodes that are ancestors of $s^R_{q_1}$,
and $q^\forall$ is the corresponding position in $\vla^R_k(w_1, s_1)$.
Then, $\exists$ selects $q^\exists$ in $\vla^R_k(w_2, s_2)$ in response to $q^\forall$
according to her winning strategy for game~\ref{item:foc-rr-vla}.
Her answer to $s^\forall$ is $s^\exists = s^R_{q^\exists}$
(i.e.\ the pending node with the same index).

\item
$s^\forall$ is part of a subtree $t^R_{w_1}(s^R_{q^\forall})$
such that $s^R_{q^\forall}$ is an ancestor of $s^R_{q_1}$.
Then $\exists$ chooses position $q^\exists$ in $\vla^R_k(w_2, s_2)$ as before.
According to game~\ref{item:foc-rr-vla}, $q^\forall$ and $q^\exists$ must be labeled with
the same rank-$k$ type of a subtree
(i.e.\ $\sigma^R_k(w_1, s^R_{q^\forall}) = \sigma^R_k(w_2, s^R_{q^\exists})$).
Hence, game $t^R_{w_1}(s^R_{q^\forall}) \sim_k t^R_{w_2}(s^R_{q^\exists})$ has a winning strategy,
which $\exists$ can use to pick $s^\exists$ in $t^R_{w_2}(s^R_{q^\exists})$.

\item
If the node picked by $\forall$ is the rightmost child of $s^R_{q_1}$
or one of its descendants, then $\exists$ proceeds symmetrically,
but using her winning strategy on game~\ref{item:foc-rr-vra}.

\item
Finally, if $\forall$ picks a node in $t^R_{w_1}(s^R_{q_1})$,
then by~\ref{item:foc-rr-sametype} we have $t^R_{w_1}(s^R_{q_1}) \sim_k t^R_{w_2}(s^R_{q_2})$,
and $\exists$ answers according to her winning strategy in this game.
\end{itemize}

\noindent
This strategy preserves the partial isomorphism
w.r.t.\ the child and sibling relations and monadic predicates,
as a direct consequence of rank-$k$ type equivalences in the hypotheses.
\end{proof}

Lemma~\ref{lemma:composition-rr} shows that the rank-$k$ type of an RR OPM-compatible UOT
is determined by the rank-$k$ types of the parts in which we divide it.
Given FO formula $\bar{\varphi}(x)$, consider the set of all tuples made of
\begin{enumerate*}
\item the rank-$k$ type of $\vla^R_k(w, s)$,
\item the rank-$k$ type of $\vra^R_k(w, s)$, and
\item the type $\sigma^R_k(w, s)$,
\end{enumerate*}
such that $(T_w, s) \models \bar{\varphi}(x)$ for any RR UOT $T_w$ and $s \in T_w$.
This set is finite, because there are only finitely many rank-$k$ types of each component.
So we can translate the formulas expressing the types in each tuple into \xuntil{} separately,
and combine them to obtain one for the whole tree.
Then, \xuntil{} formula $\varphi$ is a disjunction of the resulting translated formulas,
one for each tuple.

Before proceeding with our translation, we need to show how to express in \xuntil{}
the rank-$k$ type of a finite UOT such as $t^R_{w}(s^R_{p})$,
for some $w$ and $p$, in the context of $T_w$.
Since the rank-$k$ type of $t^R_{w}(s^R_{p})$ only contains information about that subtree,
we need to restrict the formula expressing it to such nodes.
In the following lemma we show how to do this, thanks to a formula $\alpha^*$
which holds in the root of the subtree (but may hold in other parts of $T_w$),
and allows us to restrict \xuntil{} operators so that they do not exit the subtree.

\begin{lem}%
\label{lemma:composition-subtree}
Let $\sigma_k(t)$, with $k \geq 1$, be the rank-$k$ type of a finite OPM-compatible UOT $t$.
Let $r$ be a node of a RR or LR OPM-compatible UOT $T_w$ with a finite number of children.
Let $\alpha^*$ be a \xuntil{} formula that holds in $r$, and does not hold in subtrees rooted at children of $r$, except possibly the rightmost one.
Then, there exists a \xuntil{} formula $\beta(t)$ that, if evaluated in $r$,
is true iff $r$ is the root of a subtree of $T_w$ with rank-$k$ type $\sigma_k(t)$,
from which the subtree rooted in $r$'s rightmost child has been erased in case $\alpha^*$ holds in that child.
\end{lem}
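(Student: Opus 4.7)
The plan is to combine Marx's FO-completeness of \xuntil{} on finite UOTs (Theorem~\ref{thm:xuntil-fo-completeness}) with a syntactic relativization that uses $\alpha^*$ to identify the boundary of $t$ inside $T_w$. Since ``the finite OPM-compatible UOT rooted at its root has rank-$k$ type $\sigma_k(t)$'' is a first-order property of finite UOTs, Theorem~\ref{thm:xuntil-fo-completeness} yields a \xuntil{} formula $\gamma$ such that, for every finite OPM-compatible UOT $t'$ with root $r'$, $(t', r') \models \gamma$ iff $\sigma_k(t') = \sigma_k(t)$. It then suffices to produce a \xuntil{} formula $\beta(t)$ whose evaluation at $r$ in $T_w$ mimics the evaluation of $\gamma$ at the root of $t$ viewed as an isolated tree.

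I would build $\beta(t)$ by relativizing $\gamma$ inductively: atomic propositions and propositional combinations are left untouched, while each path operator $\Downarrow,\Uparrow,\Rightarrow,\Leftarrow$ is guarded so that its paths remain inside $t$. A key simplification is that, for any descendant $n$ of $r$ distinct from $r$, the children and siblings of $n$ in $T_w$ already lie in $t$: the only portion of the $r$-subtree removed is the subtree of the rightmost child of $r$ (when $\alpha^*$ holds at that child), and this subtree is disjoint from any subtree whose root is not a descendant of that child. Consequently the relativization really matters in only two situations: (i) $\Uparrow$ paths that would leave $r$ upward, and (ii) $\Downarrow$ or $\Rightarrow$ paths emanating from $r$ or from a child of $r$ that would reach the excluded rightmost subtree.

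Both restrictions can be encoded using $\alpha^*$ together with local \xuntil{} predicates. For (i), since $\alpha^*$ holds at $r$, the $\Uparrow$ witness is bounded by requiring that the ascending path does not cross any $\alpha^*$-node lying strictly above the witness, so that $r$ acts as the uppermost admissible position; auxiliary occurrences of $\alpha^*$ inside the rightmost subtree lie below $r$ and do not interfere. For (ii), the rightmost child of $r$ is characterized as a child of $r$ with no right sibling, and the excluded subtree is the set of its (non-strict) descendants precisely when $\alpha^*$ holds at that child, a condition expressible using a combination of $\Uparrow$ and $\Rightarrow$. Setting $\beta(t)$ to be the so-relativized $\gamma$, a routine induction on the structure of $\gamma$ shows that $(T_w, r) \models \beta(t)$ iff the finite UOT $t$ described in the statement has rank-$k$ type $\sigma_k(t)$.

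The chief obstacle is that $\alpha^*$ is not a unique marker of $r$: nothing in the hypothesis prevents it from holding above $r$ in $T_w$ or at several nodes of the rightmost subtree. Hence $\alpha^*$ alone cannot serve as a plain cutoff; the relativization must couple it with structural predicates (such as ``has no right sibling'') to isolate exactly the boundary of $t$. Once this bookkeeping is in place, the inductive check that the relativized formula preserves the semantics of $\gamma$ goes through by a case analysis on the outermost connective of $\gamma$, using the strict semantics of \xuntil{} path operators.
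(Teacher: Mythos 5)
Your route differs from the paper's: you apply Marx's theorem once to the whole subtree $t$ and then relativize the resulting formula $\gamma$, whereas the paper first decomposes $t$ into its root plus the horizontal word of its children's subtree types, handles that word via Kamp's theorem and the separation property of LTL (so that only future sibling operators appear, which are relativized simply by conjoining $\neg\alpha^*$), applies Marx's theorem to each child subtree separately, and then only has to relativize the $\xcsinceo$ operators inside each child-subtree formula, using $\xcback\alpha^*$ as a marker for that subtree's root. Both decompositions are viable in principle, but yours concentrates all of the difficulty into the relativization step, and that is where the proposal has a genuine gap.

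The gap is at the boundary node $r$. You treat the relativization as a uniform guard on each path operator, ensuring that "its paths remain inside $t$", but guarding the paths is not enough: one must also control which formula is evaluated at a path's endpoint when that endpoint is $r$. Concretely, suppose $\gamma$ contains $\xcsince{\varphi}{\varphi'}$ and, in the isolated tree $t$, its witness is the root. Your guarded version correctly lets the ascending path stop at $r$, but it then evaluates the unmodified $\varphi'$ at $r$ inside $T_w$, where $r$ has ancestors and siblings and still owns the erased rightmost subtree: any top-level $\Uparrow$, $\Rightarrow$ or $\Leftarrow$ inside $\varphi'$ — false at the root of the isolated $t$ by strictness of the semantics — may become true, and any top-level $\Downarrow$ may be witnessed inside the erased subtree. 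The same issue affects the top-level occurrences of $\Uparrow$, $\Rightarrow$, $\Leftarrow$ in $\gamma$ itself, evaluated at $r$; your two-case analysis omits sibling operators at $r$ entirely. Repairing this requires a two-tiered transformation — an "internal" version of each subformula plus a "root-adjusted" version in which the offending top-level operators are replaced by $\neg\top$, substituted exactly when a path lands on the boundary. This is precisely the disjunction $(\neg\xcback\alpha^* \land \varphi') \lor (\xcback\alpha^* \land \varphi^{\prime\Downarrow})$ that the paper's proof introduces and then validates by structural induction; without that idea the "routine induction" you invoke does not close, because the inductive hypothesis for internal nodes says nothing about how subformulas behave at $r$.
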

\begin{proof}
Let $t_w$ be the subtree rooted at $r$, excluding $r$'s rightmost child and its descendants,
if $\alpha^*$ holds in it.
We provide a formula $\beta(t)$ that holds in $r$ iff $t_w \equiv_k t$.
If $t$ has only one node $s$ with no children,
which can be determined with a FO formula with one quantifier,
then $\beta(t) := \beta_{AP}(s) \land \neg \xcnext (\neg \alpha^*)$,
where $\beta_{AP}(s)$ is a Boolean combination of the atomic propositions holding in $s$.

Otherwise, the rank-$k$ type of $t$ is fully determined by
\begin{enumerate}
\item the propositional symbols holding in its root $s$;
\item the rank-$k$ type of the finite LTL word $v_t$,
  whose positions $0 \leq p \leq m$
  are labeled with the rank-$k$ types of the subtrees $t_{u_p}$
  rooted at the children $u_p$ of $s$ (including the rightmost one).
\end{enumerate}
This can be proved with a simple compositional argument.

Thus, we define
\[
  \beta(t) := \xcnext (\neg \xsback \top \land \beta'(t)) \land \beta_{AP}(s),
\]
where $\beta_{AP}(s)$ is a Boolean combination of the atomic propositions holding in $s$,
and $\beta'(t)$ characterizes $t_{u_0}, \dots, t_{u_m}$.
By this we mean that $\beta'(t)$ is such that when $\beta(t)$ holds on $r$,
its children (except the rightmost one if $\alpha^*$ holds in it)
must be roots of subtrees isomorphic to $t_{u_0}, \dots, t_{u_m}$.
Note that formula $\beta'(t)$ is enforced in the leftmost child
(where $\xsback \top$ is false) of the node where $\beta(t)$ is evaluated.

We now show how to obtain $\beta'(t)$.
Due to Kamp's Theorem~\cite{Kamp68} and the separation property of LTL~\cite{GabbayHR94},
there exists a future LTL formula $\beta''(t)$ that,
evaluated in the first position of $v_t$, completely determines its rank-$k$ type
(it can be obtained by translating into LTL the Hintikka formula
equivalent to  the rank-$k$ type of $v_t$).

We now show how to express the rank-$k$ types of the subtrees $t_{u_p}$.
Since they are finite, by Marx's Theorem~\cite[Corollary 3.3]{Marx2005},
there exists a \xuntil{} formula $\gamma_p$ that,
evaluated in $u_p$, fully determines the rank-$k$ type of $t_{u_p}$
($\gamma_p$ can be obtained by translating the Hintikka formula
for the rank-$k$ type of $t_{u_p}$ into \xuntil{}).
Unfortunately, the separation property does not hold for \xuntil{}~\cite{BenediktL16},
and $\gamma_p$ may contain $\xcsinceo$ operators that, in the context of $T_w$,
consider nodes that are not part of $t_{u_p}$.

There is, however, a way of syntactically transforming $\gamma_p$ so that,
if evaluated on a child $r'$ of $r$, its paths remain constrained to $t_{r'}$,
the subtree rooted in $r'$.
Given a \xuntil{} formula $\psi$, it can be written as a Boolean combination
of atomic propositions and until/since operators (possibly nested).
We denote by $\psi^\Downarrow$ the formula obtained by replacing all subformulas of the form
$\xcsince{\varphi}{\varphi'}$, $\xsuntil{\varphi}{\varphi'}$ and $\xssince{\varphi}{\varphi'}$ at the topmost level with $\neg \top$.
If $\gamma_p$ is evaluated in the root of $t_{u_p}$ outside of $t$,
all such operators evaluate to false.
So, $\gamma_p^\Downarrow$ in $r'$ agrees with $\gamma_p$ in the root of $t_{u_p}$ on such subformulas.
Now, take $\gamma_p^\Downarrow$, and recursively replace all subformulas of the form
$\xcsince{\varphi}{\varphi'}$ with the following:
\[
  \xcsince{\varphi \land \neg \xcback \alpha^*}
    {(\neg \xcback \alpha^* \land \varphi') \lor
     (\xcback \alpha^* \land \varphi^{\prime \Downarrow})}.
\]
We call $\gamma'_p$ the obtained formula.
Note that, since $\alpha^*$ holds in $r$ and at most in its rightmost child,
$\xcback \alpha^*$ only holds in nodes $u_p$, $0 \leq p \leq m$.
This way, we can prove that $\xcsinceo$ paths in $\gamma'_p$ cannot continue past $u_p$,
and those ending in $u_p$ depend on a formula that does not consider nodes outside the subtree
rooted at $u_p$.
Thus, when $\gamma'_p$ is evaluated in $u_p$ in the context of $T_w$,
all its until/since operators consider exactly the same positions as the corresponding ones
in $\gamma_p$ evaluated outside of $T_w$.
Hence, the two formulas are equivalent in their respective contexts,
and $\gamma'_p$ is true in positions that are the root of a subtree equivalent to $t_{u_p}$.

We now show that such transformations do not change the formula's meaning in unwanted ways.
Let $t_1$ be a UOT, and let $t_2$ be a subtree of a larger UOT $T$,
such that $t_1$ and $t_2$ are identical, and $\alpha^*$ holds in the parent of $r_2$,
the root of $t_2$.
We prove that $\gamma'_p$ holds on $r_2$ iff $\gamma_p$ holds on $r_1$, the root of $t_1$.
For any subformula $\psi$ of $\gamma_p$ not at the topmost nesting level,
let $\psi'$ be the corresponding subformula in $\gamma'_p$.
Then, we prove by structural induction on $\psi$ that, for any node $s \neq r_1$ in $t_1$ and $t_2$,
we have $(t_1, s) \models \psi$ iff $(T, s) \models \psi'$.

The base case, where $\psi$ is an atomic proposition, is trivial.
The composition by Boolean operators is also straightforward.

Paths considered by formulas of the form $\xsuntil{\varphi_1}{\varphi_2}$,
$\xssince{\varphi_1}{\varphi_2}$, and $\xcuntil{\varphi_1}{\varphi_2}$ cannot get out of $t_2$,
when evaluated in one of its nodes that is not $r_2$.
Thus, we have e.g.\ $(t_1, s) \models \xsuntil{\varphi_1}{\varphi_2}$
iff $(T, s) \models \xsuntil{\varphi'_1}{\varphi'_2}$ directly from the fact that
$(t_1, s') \models \varphi_1$ iff $(T, s') \models \varphi'_1$ for any $s' \in t_1$
(and the same for $\varphi_2$).

Finally, let $\psi = \xcsince{\varphi_1}{\varphi_2}$.
Suppose $\psi$ is witnessed by a path in $t_1$ that does not include its root $r_1$.
Then, $\neg \xcback \alpha^*$ holds in all positions in such path inside $t_2$,
and $\psi'$ is satisfied by the same path, thanks to the inductive hypothesis.
Otherwise, $\psi$ may be witnessed by a path ending in $r_1$.
In this case, $\neg \xcback \alpha^*$ holds in all positions
in the corresponding path in $t_2$ except the last one.
There, $(\xcback \alpha^* \land \varphi^{\prime \Downarrow})$ holds.
In fact, $\varphi$ holds in the root of $t_1$ where any $\xcsinceo$,
$\xsuntilo$ and $\xssinceo$ operator are false,
so they can correctly be replaced with $\neg \top$ in $\varphi^{\prime \Downarrow}$.
Moreover, $\xcuntilo$ operators are strict, so they must be witnessed by paths
completely contained in $t_1$, excluding its root.
Hence, by the inductive hypothesis they witness
the corresponding operators in $\varphi^{\prime \Downarrow}$.
Conversely, any path that witnesses $\psi'$ in $s \in t_2$ must be, by construction,
entirely inside $t_2$, so it witnesses $\psi$ in $t_1$ as well.
Finally, $\gamma'_p$ holding on $r_2$ can be justified by an argument similar
to the one for $\varphi^{\prime \Downarrow}$.

Now, we can go on with the definition of $\beta(t)$:
we set $\beta'(t) := \neg \alpha^* \land \beta'''(t)$, 
where $\beta'''(t)$ is obtained from $\beta''(t)$ by 
\begin{itemize}
\item recursively replacing all subformulas $\lluntil{\varphi}{\varphi'}$
  with $\xsuntil{\neg \alpha^* \land \varphi}{\neg \alpha^* \land \varphi'}$;
\item replacing all labels, which are rank-$k$ types of UOTs $t_{u_p}$,
  with the corresponding $\gamma'_p$.
\end{itemize}
Thus, $\beta'(t)$ is a \xuntil{} formula that characterizes children of $r$,
without considering the one where $\alpha^*$ holds (if any).
\end{proof}

Lemma~\ref{lemma:composition-subtree} allows us to express in \xuntil{} the rank-$k$ type of
any subtree $t = t^R_w(s^R_p)$, for any pending node $s^R_p$ in $T_w$.
The root of $t$ is a pending node, and so is its rightmost child $s^R_{p+1}$,
which is not part of $t$.
So, we use $\alpha^R_\infty := \neg \xcsince{\top}{\xsnext \top}$, which is true in pending nodes,
where the path from the current node to the root is made of rightmost nodes only.
Thus, formula $\beta(t)$ from Lemma~\ref{lemma:composition-subtree}
is true in a pending node iff it is the root of a subtree equivalent to $t$,
excluding its rightmost child.

Let $n$ be the node where the translated formula is evaluated
(i.e.\ the one corresponding to the free variable in the FO formula $\bar{\varphi}(x)$
to be translated).
Let $s^R_{p_n}$ be its closest pending ancestor.
According to Lemma~\ref{lemma:composition-rr}, we need to express the rank-$k$ types
of $t^R_w(s^R_{p_n})$, $\vra^R_k(w, n)$, and $\vla^R_k(w, n)$.

For $t^R_w(s^R_{p_n})$, simply take formula $\beta(t^R_w(s^R_{p_n}))$.

By Kamp's Theorem and the separation property of LTL, the rank-$k$ type of $\vra^R_k(w, n)$
can be expressed by a future LTL formula $\overrightarrow{\psi}$
to be evaluated in its first position.
First, we recursively take the conjunction of all subformulas of $\overrightarrow{\psi}$
with $\alpha^R_\infty$.
Then, we recursively substitute each LTL operator $\lluntil{\varphi}{\varphi'}$ with
$\xcuntil{\varphi}{\varphi'}$, obtaining a \xuntil{} formula $\overrightarrow{\psi}'$
that evaluates its paths only on pending positions in $T_w$.
We can prove that equivalence is kept despite such transformations by induction
on the formula's structure, as we did in Lemma~\ref{lemma:composition-subtree}.
Since $\vra^R_k(w, n)$ is labeled with rank-$k$ types of UOTs,
we substitute any such atomic proposition $\sigma_k(t)$ in $\overrightarrow{\psi}'$
with $\beta(t)$, obtaining formula $\overrightarrow{\psi}''$,
that captures the part of the tree rooted at the rightmost child of $s^R_{p_n}$.

A formula $\overleftarrow{\psi}''$ for $\vla^R_k(w, n)$ can be obtained similarly,
but using a past LTL formula.
The Since modality can be replaced with $\xcsinceo$,
while other transformations remain the same.

All formulas we built so far are meant to be evaluated in a pending node.
If $n = s^R_{p_n}$, then we are done.
Otherwise, we evaluate in $n$ an appropriate $\xcsinceo$ formula,
that can only be witnessed by a path ending in $s^R_{p_n}$.
Thus, the final translation is
\[
  \varphi_R :=
  (\alpha^R_\infty \land \varphi'_R) \lor
  \xcsince{\neg \alpha^R_\infty}{\alpha^R_\infty \land \varphi'_R}
\]
where
\[
  \varphi'_R :=
  \beta(t^R_w(s^R_{p_n}))
  \land \xcnext (\neg \xsnext \top \land \overrightarrow{\psi}'')
  \land \xcback \overleftarrow{\psi}''.
\]

This concludes the proof of Lemma~\ref{lemma:transl-rr}.
\qed%

\subsection{LR UOTs}%
\label{sec:lr-trees}

\begin{figure}[bt]
  \centering
  \includegraphics{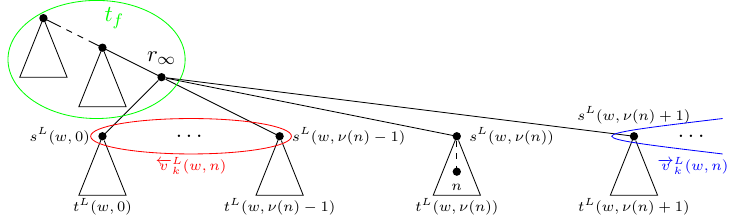}
  \caption{Parts in which we divide a LR UOT for Lemma~\ref{lemma:composition-lr}.}%
  \label{fig:lr-tree-comp}
\end{figure}

Let $w$ be an OP $\omega$-word, and $T_w = \tau(w)$ a LR UOT\@.
We name $r_\infty$ the node with infinite children, and
denote as $t_f(w)$ the finite UOT obtained by removing all children of $r_\infty$ from $T_w$.
We prove the following:
\begin{lem}%
\label{lemma:transl-lr}
Given a FO formula on UOTs $\bar{\varphi}(x)$ of q.r.\ $k \geq 1$,
there are two \xuntil{} formulas $\varphi_{L1}$
and $\varphi_{L2}$ s.t.\ for any OP $\omega$-word s.t.\ $T_w = \tau(w)$ is a LR UOT,
and for any node $n \in T_w$ we have
\begin{itemize}
\item $(T_w, n) \models \bar{\varphi}(x)$ iff $(T_w, n) \models \varphi_{L1}$ when $n \not\in t_f(w)$, and
\item $(T_w, n) \models \bar{\varphi}(x)$ iff $(T_w, n) \models \varphi_{L2}$ when $n \in t_f(w)$.
\end{itemize}
\end{lem}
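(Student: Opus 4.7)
The plan is to follow the same three-step template as the proof of Lemma~\ref{lemma:transl-rr}, adapted to the LR shape where the only source of infinity is the single node $r_\infty$ and $t_f(w)$ is finite. First I would state and prove a composition lemma analogous to Lemma~\ref{lemma:composition-rr}: for LR UOTs $T_{w_1},T_{w_2}$ with distinguished nodes $s_1,s_2$, $(T_{w_1},s_1)\equiv_k (T_{w_2},s_2)$ holds whenever (i) the finite trunks $t_f(w_j)$, with $s_j$ marked if $s_j\in t_f(w_j)$, have the same rank-$k$ type; (ii) the LTL $\omega$-words $\vra^L_k(w_j,s_j)$ labeled by the rank-$k$ types of the subtrees rooted at successive children of $r_\infty$, with the position of the subtree that contains $s_j$ marked when $s_j\notin t_f(w_j)$, have the same rank-$k$ type; and (iii) when $s_j\notin t_f(w_j)$, the distinguished subtree containing $s_j$ with $s_j$ marked has the same rank-$k$ type on both sides. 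The proof is the standard EF argument: the Duplicator patches together her winning strategies for the component games, dispatching each Spoiler move to the matching component (trunk, distinguished subtree, sibling chain at $r_\infty$ via the LTL game, or a non-distinguished child-subtree of $r_\infty$ whose type is pinned down by that LTL game).

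Since only finitely many tuples of such rank-$k$ types can witness the truth of $\bar\varphi(x)$ and, by the composition lemma, the tuple determines the truth of $\bar\varphi(x)$ at $(T_w,n)$, $\varphi_{L1}$ and $\varphi_{L2}$ will be the disjunctions, over the finitely many good tuples, of \xuntil{} formulas asserting exactly one such tuple. The case split between them is driven by the indicator $\xcsince{\top}{\alpha^L_\infty}$ (for $\varphi_{L1}$) versus its negation (for $\varphi_{L2}$), where $\alpha^L_\infty := \xcnext\top \land \neg\xcnext(\neg\xsnext\top)$ says ``has a child but no rightmost child'', which in an LR UOT singles out $r_\infty$.

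Asserting a single tuple requires three \xuntil{} fragments, one per component. For the rank-$k$ type of $t_f(w)$ I would take the Marx~\cite{Marx2005} translation of its Hintikka sentence and recursively replace every $\xcuntil{\varphi}{\varphi'}$ by $\xcuntil{\neg\xcsince{\top}{\alpha^L_\infty}\land\varphi}{\neg\xcsince{\top}{\alpha^L_\infty}\land\varphi'}$, preventing downward paths from leaking below $r_\infty$; the remaining modalities need no relativisation, because for any $s\in t_f(w)$ the sets of its ancestors and siblings coincide in $t_f(w)$ and in $T_w$. For each rank-$k$ type of a subtree rooted at a child of $r_\infty$, Lemma~\ref{lemma:composition-subtree} applies verbatim with $\alpha^*:=\xcback\alpha^L_\infty$ (``my parent is $r_\infty$''); combining these subtree-type formulas with the LTL rendering of the Hintikka sentence for $\vra^L_k(w,n)$ --- obtained via Kamp's Theorem, mapping $\lluntil{}{}$ to $\xsuntilo$ and $\llsince{}{}$ to $\xssinceo$ to traverse $r_\infty$'s sibling chain --- gives the formula for the second component. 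For the third (only in $\varphi_{L1}$) I would again use Marx~\cite{Marx2005} on the distinguished subtree with $n$ marked, relativised by the $\gamma_p\leadsto\gamma'_p$ surgery from Lemma~\ref{lemma:composition-subtree} with $\alpha^*=\xcback\alpha^L_\infty$. The three fragments are finally combined into a single formula evaluated at $n$ by navigating with $\xcsinceo$ up to the root for the first fragment, and with $\xcsinceo$ up to the child $c_q$ of $r_\infty$ containing $n$ (identified by $\xcback\alpha^L_\infty$) followed by $\xssinceo$ to the leftmost sibling for the second.

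The main obstacle will be the relativisation for $t_f(w)$: Lemma~\ref{lemma:composition-subtree} cannot be applied directly at $r_\infty$ because of its finite-children hypothesis, so I have to rederive the surgery by hand and verify by structural induction on the Marx formula that, over $T_w$, the modified $\xcuntilo$ operators agree with their unmodified counterparts read on $t_f(w)$. This closely parallels the inductive $\gamma_p\leadsto\gamma'_p$ argument inside the proof of Lemma~\ref{lemma:composition-subtree}, with $\xcsince{\top}{\alpha^L_\infty}$ playing the role of the ``strict descendant of $r_\infty$'' sentinel, and it must be checked in combination with the other two fragments so that the overall formula has precisely the semantics prescribed by the composition lemma.
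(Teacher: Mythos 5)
Your proposal is correct and follows essentially the same route as the paper's proof: the same decomposition of an LR UOT into the finite trunk $t_f(w)$, the sibling sequence of subtrees at $r_\infty$ encoded as an LTL ($\omega$-)word of rank-$k$ types, and the distinguished subtree; the same EF composition lemma; Kamp/separation for the sibling chain; and Marx plus the relativisation surgery (with an $\alpha^L_\infty$-style sentinel) for the finite components. The only differences are cosmetic --- you merge $\vla^L_k$ and $\vra^L_k$ into one marked $\omega$-word where the paper keeps them separate, and you place the sentinel at $r_\infty$ rather than at its children --- and your worry about the finite-children hypothesis of Lemma~\ref{lemma:composition-subtree} at $r_\infty$ is resolved in the paper exactly as you anticipate, by a direct relativisation of the Marx formula rather than an application of that lemma.
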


\noindent
The proof is structured in a way similar to that of Lemma~\ref{lemma:transl-rr},
but differs in the parts in which we divide the tree.
In fact, we divide a LR UOT into its finite part,
and two LTL words which make up the infinite children of $r_\infty$
(see Figure~\ref{fig:lr-tree-comp}).

We name $s^L(w, 0), s^L(w, 1), \dots$ the children of $r_\infty$,
and denote as $t^L(w, 0), t^L(w, 1), \dots$ the finite subtrees rooted at,
resp., $s^L(w, 0), s^L(w, 1), \dots$,
and $\sigma^L_k(w, 0), \allowbreak \sigma^L_k(w, 1), \allowbreak \dots$ their rank-$k$ types.
For all $p \geq 0$, for any node $m$ in $t^L(w, p)$,
we define the map $\nu$ so that $\nu(m) = p$.
Now, let $n$ be any node in $T_w$.
We define $\vra^L_k(w, n)$ as the LTL $\omega$-word on the alphabet of rank-$k$ types
of finite OPM-compatible UOTs, such that each of its positions $i_q$, $q \geq 0$,
is labeled with $\sigma^L_k(w, \nu(n)+q+1)$ if $n \not\in t_f(w)$.
If $n$ is in $t_f(w)$, then each position $i_q$ is labeled with $\sigma^L_k(w, q)$.
We further define $\vla^L_k(w, n)$ as the finite LTL word of length $\nu(n)$ on the same alphabet,
such that each of its positions $i_q$, $0 \leq q \leq \nu(n)-1$, is labeled with $\sigma^L_k(w, q)$.
If $\nu(n) = 0$, or $n$ is part of $t_f(w)$, then $\vla^L_k(w, n)$ is the empty word.
We now prove the following composition argument:
\begin{lem}%
\label{lemma:composition-lr}
Let $w_1$ and $w_2$ be two OP $\omega$-words, such that $T_{w_1} = \tau(w_1)$
and $T_{w_2} = \tau(w_2)$ are two LR UOTs,
and $r_\infty^1$ and $r_\infty^2$ are their nodes with infinitely many children.
Let $i_1$ and $i_2$ be two positions in $w_1$ and $w_2$, such that, by letting $n_1 = \tau(i_1)$ and $n_2 = \tau(i_2)$,
$n_1 \in t_f(w_1)$ iff $n_2 \in t_f(w_2)$.
If
\begin{enumerate}
\item\label{item:foc-lr-tf}
  $(t_f(w_1), n_1) \equiv_k (t_f(w_2), n_2)$ if $n_1 \in t_f(w_1)$;
  $(t_f(w_1), r_\infty^1) \equiv_k (t_f(w_2), r_\infty^2)$ otherwise;
\item\label{item:foc-lr-vla}
  $\vla^L_k(w_1, n_1) \equiv_k \vla^L_k(w_2, n_2)$;
\item\label{item:foc-lr-vra}
  $\vra^L_k(w_1, n_1) \equiv_k \vra^L_k(w_2, n_2)$;
\item\label{item:foc-lr-sametype}
  $n_1 \not\in t_f(w_1)$ implies $t^L(w_1, \nu(n_1)) \equiv_k t^L(w_2, \nu(n_2))$
\end{enumerate}
then $(T_{w_1}, n_1) \equiv_k (T_{w_2}, n_2)$.
\end{lem}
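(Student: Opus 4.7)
The plan is to mimic the proof of Lemma~\ref{lemma:composition-rr}, using an Ehrenfeucht--Fra\"iss\'e (EF) game based compositional argument. By the EF Theorem, the four rank-$k$ type equivalences in the hypotheses correspond to $\exists$ having winning strategies in the associated $k$-round EF games; call them game \ref{item:foc-lr-tf}, \ref{item:foc-lr-vla}, \ref{item:foc-lr-vra} and \ref{item:foc-lr-sametype}. I will combine these into a single winning strategy for $\exists$ in the $k$-round game on $(T_{w_1}, n_1)$ and $(T_{w_2}, n_2)$; the thesis then follows by the converse direction of the EF Theorem.

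Round~$0$ is taken care of at once: the map $n_1 \mapsto n_2$ is a partial isomorphism because the two nodes carry the same monadic labels, extracted either from hypothesis \ref{item:foc-lr-tf} when $n_1 \in t_f(w_1)$, or from hypothesis \ref{item:foc-lr-sametype} otherwise. In each subsequent round, assume without loss of generality that $\forall$ picks a node $m$ in $T_{w_1}$; $\exists$'s reply is determined by the location of $m$. If $m \in t_f(w_1)$, she answers in $t_f(w_2)$ following her strategy for game \ref{item:foc-lr-tf}. If $m \in t^L(w_1, p)$ with $p \neq \nu(n_1)$ (or $p$ arbitrary when $n_1 \in t_f(w_1)$), she first views this choice as selecting position $p$ in either $\vla^L_k(w_1, n_1)$ or $\vra^L_k(w_1, n_1)$, depending on whether $p < \nu(n_1)$ or $p > \nu(n_1)$; she uses game \ref{item:foc-lr-vla} or \ref{item:foc-lr-vra} to pick a corresponding index $p'$ on the $w_2$ side, and then appeals to $t^L(w_1, p) \equiv_k t^L(w_2, p')$ (a rank-$k$ type equivalence guaranteed by the common label of $p$ and $p'$ in the word game) to choose her actual response inside $t^L(w_2, p')$. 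Finally, if $m \in t^L(w_1, \nu(n_1))$, which requires $n_1 \not\in t_f(w_1)$, she answers inside $t^L(w_2, \nu(n_2))$ via game \ref{item:foc-lr-sametype}.

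The main obstacle will be to check that this combined strategy preserves the partial isomorphism with respect to monadic predicates and to the $\rchild$ and $\rsibl$ relations across all $k$ rounds. Within a single component of the decomposition, preservation follows from the substrategy used; the delicate part concerns relations that cross components. Sibling relations between two children of $r_\infty$ lying in distinct subtrees $t^L(w_1, p)$ and $t^L(w_1, p')$ depend only on the relative ordering of $p$ and $p'$ in the concatenation of $\vla^L_k$ and $\vra^L_k$, which is preserved by strategies \ref{item:foc-lr-vla} and \ref{item:foc-lr-vra}; descendant relations crossing between $t_f(w_1)$ and the $t^L(w_1, p)$'s are anchored at $r_\infty^1$, whose image under strategy \ref{item:foc-lr-tf} is $r_\infty^2$, so the parent-of-subtree relation is respected. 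A detail worth spelling out is the interaction between the four substrategies as $\forall$'s picks alternate between components over successive rounds: this is handled exactly as in the proof of Lemma~\ref{lemma:composition-rr}, by observing that each substrategy manipulates only its own component of the partial isomorphism, so the composition is well-defined and cumulatively winning; a minor subcase to record explicitly is when $n_1$ is itself $r_\infty^1$ or an ancestor thereof, in which case $\vla^L_k(w_1, n_1)$ is empty and the decomposition degenerates cleanly into the finite-part game together with game \ref{item:foc-lr-vra}.
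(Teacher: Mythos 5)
Your proposal is correct and follows essentially the same route as the paper's proof: a composed EF-game strategy in which $\exists$ dispatches $\forall$'s pick to game \ref{item:foc-lr-tf}, \ref{item:foc-lr-vla}, \ref{item:foc-lr-vra} or \ref{item:foc-lr-sametype} according to whether the chosen node lies in $t_f(w_1)$, in a subtree $t^L(w_1,p)$ with $p$ below, above, or equal to $\nu(n_1)$, using the matching labels in the word games to transfer into a subtree game, and degenerating to the $\vra^L_k$-only case when $n_1 \in t_f(w_1)$. Your closing remarks on cross-component preservation of $\rchild$ and $\rsibl$ are, if anything, slightly more explicit than the paper's, which simply appeals to the standard composition argument.
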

\begin{proof}
The proof is carried out by means of a standard composition argument.
We give a winning strategy for $\exists$ in the $k$-round EF game
between $(T_{w_1}, n_1)$ and $(T_{w_2}, n_2)$.
Suppose w.l.o.g.\ that $\forall$ picks a node $n^\forall$ from $T_{w_1}$.
$n^\forall$ may be part of, either, $t_f(w_1)$ or a subtree $t^L(w_1, \nu(n^\forall))$.
In the former case, $\exists$ answers with a node in $t_f(w_2)$,
according to her winning strategy in game~\ref{item:foc-lr-tf}.
For the latter case, let us first suppose $n_1$ and $n_2$ are not in, resp., $t_f(w_1)$ and $t_f(w_2)$.
Then, $\exists$ proceeds as follows:
\begin{enumerate}[label=(\alph*)]
\item\label{item:foc-lr-vla-proof}
  $\nu(n^\forall) < \nu(n_1)$.
  In this case, $\exists$ plays her game~\ref{item:foc-lr-vla} as if $\forall$ had picked
  position $\nu(n^\forall)$ in $\vla^L_k(w_1, n_1)$.
  Let $q^\exists$ be the position in $\vla^L_k(w_2, n_2)$ chosen according to such strategy.
  Since $\vla^L_k(w_1, n_1) \equiv_k \vla^L_k(w_2, n_2)$ and $k \geq 1$,
  $\nu(n^\forall)$ and $q^\exists$ must have the same label,
  so we have $\sigma^L_k(w_1, \nu(n^\forall)) = \sigma^L_k(w_2, q^\exists)$.
  So, $t^L(w_1, \nu(n^\forall)) \equiv_k t^L(w_2, q^\exists)$, and $\exists$ may pick a node $n^\exists$
  in $t^L(w_2, q^\exists)$ according to her winning strategy for the $k$-round game on $t^L(w_1, \nu(n^\forall))$
  and $t^L(w_2, q^\exists)$, considering $n^\forall$ as $\forall$'s move.
\item $\nu(n^\forall) = \nu(n_1)$.
  In this case, $\exists$ picks $n^\exists$ in $T_{w_2}$ according to her winning strategy
  on the EF game for equivalence~\ref{item:foc-lr-sametype}.
\item\label{item:foc-lr-vra-proof}
  $\nu(n^\forall) > \nu(n_1)$.
  $\exists$ may proceed as in case~\ref{item:foc-lr-vla-proof},
  but picking $q^\exists$ from $\vra^L_k(w_2, n_2)$ according to her winning strategy
  on game~\ref{item:foc-lr-vra}.
\end{enumerate}

\noindent
If, instead, $n_1$ and $n_2$ are in $t_f(w_1)$ and $t_f(w_2)$,
then $\vra^L_k(w_1, n_1)$ (resp.\ $\vra^L_k(w_2, n_2)$)
represents all of the infinite siblings in $T_{w_1}$ (resp.\ $T_{w_2}$),
and $\vla^L_k(w_1, n_1)$ and $\vla^L_k(w_2, n_2)$ are the empty word.
So, $\exists$ may proceed as in case~\ref{item:foc-lr-vla-proof},
but picking $q^\exists$ from $\vra^L_k(w_2, n_2)$ according to her winning strategy
on game~\ref{item:foc-lr-vra}.
\end{proof}

We now show how to express a FO formula with one free variable with a \xuntil{}
formula equivalent to it on a LR UOT\@.
As in the RR case, we show how to represent in \xuntil{} the rank-$k$ types of all
parts in which we divide the tree in Lemma~\ref{lemma:composition-lr}.
Let $n$ be the node in which the \xuntil{} formula is evaluated.
We need to distinguish whether $n \in t_f(w)$ or not.
This is more conveniently done while also discerning such cases from the RR one.
Thus, we now treat the two LR cases separately, and show how to combine them with the RR case
in Section~\ref{sec:synthesis}.

Suppose $n \not\in t_f(w)$.
Let $s_n = s^L(w, \nu(n))$ be the root of the subtree $t_n = t^L(w, \nu(n))$ containing $n$.
Children of $r_\infty$ can be identified by the \xuntil{} formula
$\alpha^L_\infty := \xsnext \top \land \neg \xsuntil{\top}{\neg \xsnext \top}$,
saying that no right sibling without right siblings is reachable from the current node
(i.e.\ there exists no rightmost sibling).
By Lemma~\ref{lemma:composition-subtree} with $\alpha^* = \alpha^L_\infty$,
there exists a \xuntil{} formula $\beta(t_n)$ that fully identifies the rank-$k$ type
of $t_n$ if evaluated in $s_n$.

Moreover, $\vla^L_k(w, n)$ and $\vra^L_k(w, n)$ can be expressed similarly to
$\vla^R_k(w, n)$ and $\vra^R_k(w, n)$ for RR UOTs.
By Kamp's Theorem, there exists a future LTL formula $\overrightarrow{\psi}_k(w, n)$ that,
evaluated in the first position of $\vra^L_k(w, n)$, fully identifies its rank-$k$ type.
Recursively substitute $\lluntil{\varphi}{\varphi'}$ subformulas with $\xsuntil{\varphi}{\varphi'}$
in $\overrightarrow{\psi}_k(w, n)$, obtaining $\overrightarrow{\psi}'_k(w, n)$.
Then, replace all rank-$k$ types of UOTs in $\overrightarrow{\psi}'_k(w, n)$
with the respective formulas obtained from Lemma~\ref{lemma:composition-subtree},
with $\alpha^* = \alpha^L_\infty$, thus obtaining $\overrightarrow{\psi}''_k(w, n)$.
Now, $\xsnext \overrightarrow{\psi}''_k(w, n)$, evaluated in $s_n$,
fully describes the rank-$k$ type of all right siblings of $s_n$, and of the subtrees rooted in them.

Formula $\xsback \overleftarrow{\psi}''_k(w, n)$, which describes left siblings of $s_n$ if evaluated
in it, can be obtained symmetrically, but replacing $\llsince{}{}$ with $\xssinceo$
in the LTL formula.

Finally, we need to describe the rank-$k$ type of $t_f(w)$.
By Marx's Theorem~\cite[Corollary 3.3]{Marx2005}, there exists a formula $\psi_{f1}$ that,
evaluated in $r_\infty$, describes the rank-$k$ type of $t_f(w)$.
Take the conjunction of each subformula of $\psi_{f1}$ with $\neg \alpha^L_\infty$,
and call the obtained formula $\psi'_{f1}$.
Thus, the rank-$k$ type of $t_f(w)$ is described by $\xcback \psi'_{f1}$, evaluated in $s_n$.
Finally, the rank-$k$ type of the whole tree is described by the following formula,
evaluated in $n$:
\[
  \varphi_{L1} :=
  (\alpha^L_\infty \land \varphi'_{L1}) \lor
  \xcsince{\neg \alpha^L_\infty}{\alpha^L_\infty \land \varphi'_{L1}},
\]
where the $\xcsinceo$ operator is needed to reach $s_n$ from $n$ if $s_n \neq n$, and
\[
  \varphi'_{L1} :=
  \xcback \psi'_{f1} \land
  \xsback \overleftarrow{\psi}''_k(w, n) \land
  \xsnext \overrightarrow{\psi}''_k(w, n) \land
  \beta(t_n).
\]

Suppose, instead, $n \in t_f(w)$.
Then, we express the rank-$k$ type of $t_f(w)$ by means of a \xuntil{} formula $\psi_{f2}$,
evaluated in $n$, which exists by Marx's Theorem~\cite[Corollary 3.3]{Marx2005}.
In it, we recursively take the conjunction of subformulas with $\neg \alpha^L_\infty$,
thus obtaining $\psi'_{f2}$.

Next, we need to describe the rank-$k$ type of $\vra^L_k(w, s^L(w, 0))$
(recall that $\vla^L_k(w, s^L(w, 0))$ is the empty word).
This can be done as in case $n \not\in t_f(w)$,
thus obtaining formula $\overrightarrow{\psi}''_k(w, s^L(w, 0))$ which,
evaluated in $s^L(w, 0)$, fully identifies the children of $r_\infty$.
The latter can be identified by formula $\xcnext \alpha^L_\infty$.
Thus, to describe its children from $n$, we use formula
\[
  \varphi'_{L2} :=
  \xcsince{\top}
    {\neg \xcback \top \land \xcuntil{\neg \xsnext \top}
      {\xcnext \big(\alpha^L_\infty \land \neg \xsback \top \land \overrightarrow{\psi}''_k(w, s^L(w, 0))\big)}},
\]
where the outermost $\xcsinceo$ reaches the root of $T_w$ (identified by $\neg \xcback \top$),
the inner $\xcuntilo$ reaches $r_\infty$ (identified by $\xcnext \alpha^L_\infty$)
by descending through rightmost children
($r_\infty$ is the left context of an open chain, hence a pending position),
and $\neg \xsback \top$ identifies $s^L(w, 0)$.
The final formula is
\[
  \varphi_{L2} :=
  \varphi'_{L2} \land \psi'_{f2}.
\]

This concludes the proof of Lemma~\ref{lemma:transl-lr}.
\qed%

\subsection{Synthesis}%
\label{sec:synthesis}
To finish the proof, we must combine the previous cases to obtain a single \xuntil{} formula.
First, note that it is possible to discern the type of UOT by means of FO
formulas of q.r.\ at most 5.
The following formula identifies RR UOTs:
\[
  \gamma_R :=
  \forall x [(0 \rchild^* x \land \forall y (0 \rchild^* y \land y \rchild^* x)
    \implies \mu(y))
      \implies \exists y (x \rchild y \land \mu(y))],
\]
where $\mu(y) := \neg \exists z (y \rsibl z)$.
$\gamma_R$ means that any node $x$ which is on the rightmost branch from the root
must have a rightmost child, i.e.\ the rightmost branch has no end.
LR UOTs are identified by the following formulas:
\[
  \gamma_{L1}(n) :=
  \exists x [x \rchild^+ n \land \exists y (x \rchild y) \land
    \forall y (x \rchild y \implies \exists z (y \rsibl z))]
\]
\[
  \gamma_{L2}(n) :=
  \exists x [\neg (x \rchild^+ n) \land \exists y (x \rchild y) \land
    \forall y (x \rchild y \implies \exists z (y \rsibl z))]
\]
Both $\gamma_{L1}(n)$ and $\gamma_{L2}(n)$ say there exists a node $x = r_\infty$
that has infinite children, but $\gamma_{L2}(n)$ is true iff the free variable $n$ is in $t_f(w)$,
while $\gamma_{L1}(n)$ is true otherwise.

Given a FO formula of q.r.\ $m$ with one free variable $\bar{\varphi}(x)$, let $k = \max(m, 5)$.
Consider the finite set
$\Gamma_k = \{\sigma_k(T_w, n) \mid (T_w, n) \models \bar{\varphi}(x),\ n \in T_w\}$
of the rank-$k$ types of OPM-compatible UOTs satisfying $\bar{\varphi}(x)$,
with $n$ as a distinguished node.
For each one of them, take the corresponding Hintikka formula $H(w, n)$.
Since $k \geq 5$, and the UOT type is distinguishable by formulas of q.r.\ at most 5,
for each $\sigma_k(T_w, n)$ it is possible to tell whether it describes RR or LR UOTs.
For each type, by Lemmas~\ref{lemma:transl-rr} and~\ref{lemma:transl-lr},
it is possible to express $H(w, n)$ through formulas describing the rank-$k$
types of the substructures given by the composition arguments,
and translate them into \xuntil{} accordingly.
Then, the translated formula $\varphi$ is the XOR of one of the following formulas,
for each type $\sigma_k(T_w, n) \in \Gamma_k$.
\begin{itemize}
\item If $T_w$ is RR, then
  \(
    \xi_R \land \varphi_R(\sigma_k(T_w, n)).
  \)
\item If $T_w$ is LR, and $n \not\in t_f(w)$:
  \(
    \xi_L \land (\alpha^L_\infty \lor \xcsince{\top}{\alpha^L_\infty})
      \land \varphi_{L1}(\sigma_k(T_w, n)),
  \)
  where we assert one of the infinite siblings is reachable going upwards form $n$,
  and so $n$ is not in $t_f(w)$.
\item If $T_w$ is LR, and $n \in t_f(w)$:
  \(
    \xi_L \land \neg (\alpha^L_\infty \lor \xcsince{\top}{\alpha^L_\infty})
      \land \varphi_{L2}(\sigma_k(T_w, n)).
  \)
\end{itemize}
In the above formulas, $\varphi_R(\sigma_k(T_w, n))$, $\varphi_{L1}(\sigma_k(T_w, n))$,
and $\varphi_{L2}(\sigma_k(T_w, n))$ are the formulas expressing $\sigma_k(T_w, n)$,
obtained as described in the previous paragraphs.
Moreover, we have
\begin{align*}
  \xi_R &:=
    \xcsince{\top}{\neg \xcback \top \land
      \neg \xcuntil{\neg \xsnext \top}
        {\neg \xsnext \top \land \neg \xcnext \top}}, \\
  \xi_L &:=
    \xcsince{\top}{\neg \xcback \top \land \xcuntil{\neg \xsnext \top}{\neg \xsnext \top \land \xcnext \alpha^L_\infty}}.
\end{align*}
$\xi_R$ identifies RR UOTs. In it, the outermost $\xcsinceo$ reaches the root of the tree,
  and the inner $\xcuntilo$ imposes that the rightmost branch has no end.
$\xi_L$ identifies LR UOTs. The outermost $\xcsinceo$ also reaches the root of the tree,
and the inner $\xcuntilo$ is verified by a path in which all nodes are on the rightmost branch,
except the last one, which is one of the infinite siblings.

Boolean queries can be expressed as Boolean combinations of formulas of the form
$\bar{\varphi} := \exists x (\bar{\varphi}'(x))$.
Then, it is possible to translate $\bar{\varphi}'(x)$ as above, to obtain \xuntil{}
formula $\varphi'$, and $\xcuntil{\top}{\varphi'}$ is such that
$(T_w, 0) \models \xcuntil{\top}{\varphi'}$ iff $T_w \models \bar{\varphi}$,
for any OPM-compatible UOT $T_w$.

Thus, we can state
\begin{thm}%
\label{thm:xuntil-opm-trees}
\xuntil{} $=$ FOL with one free variable on OPM-compatible $\omega$-UOTs.
\end{thm}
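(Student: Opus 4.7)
The plan is to prove the theorem by establishing both inclusions. The direction \xuntil{} $\subseteq$ FOL is straightforward: each \xuntil{} operator has a direct FO definition using the relations $\rchild$ and $\rsibl$ and the monadic predicates, exactly as in the finite case treated in Section~\ref{sec:fo-completeness}. The challenging direction is FOL $\subseteq$ \xuntil{}, which must be reduced to the constructions already developed in Lemmas~\ref{lemma:transl-rr} and \ref{lemma:transl-lr}.

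For FOL $\subseteq$ \xuntil{}, I would first observe that by Section~\ref{sec:opm-comp-omega-trees} every infinite OPM-compatible UOT is either RR or LR, and these two classes are separated by FO sentences of bounded quantifier rank (the formulas $\gamma_R$, $\gamma_{L1}(n)$, $\gamma_{L2}(n)$ of q.r.\ at most $5$). Given a FO formula $\bar{\varphi}(x)$ of q.r.\ $m$, set $k = \max(m,5)$ and enumerate the finitely many rank-$k$ types $\sigma_k(T_w,n)$ of pointed OPM-compatible UOTs that satisfy $\bar{\varphi}(x)$. For each such type, the corresponding Hintikka formula fixes: (i) the class (RR vs LR with $n$ inside or outside $t_f(w)$), and (ii) enough rank-$k$ information about the substructures in the decompositions of Figures~\ref{fig:rr-tree-comp} and \ref{fig:lr-tree-comp} to determine the type of the whole tree, by the composition Lemmas~\ref{lemma:composition-rr} and \ref{lemma:composition-lr}. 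Lemmas~\ref{lemma:transl-rr} and \ref{lemma:transl-lr} then give \xuntil{} formulas $\varphi_R$, $\varphi_{L1}$, $\varphi_{L2}$ capturing each type within its class, and the full translation is the disjunction over types, each disjunct conjoined with the \xuntil{} sentences $\xi_R$ or $\xi_L$ that internally distinguish the UOT class, plus the subformula $\alpha^L_\infty \lor \xcsince{\top}{\alpha^L_\infty}$ that tells whether $n$ is inside or outside $t_f(w)$.

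A remaining case to address is that $\bar{\varphi}$ may be a sentence (no free variables); this is handled by noting that any Boolean query is equivalent to one of the form $\exists x\,\bar{\varphi}'(x)$, translating $\bar{\varphi}'(x)$ to an \xuntil{} formula $\varphi'$ as above, and evaluating $\xcuntil{\top}{\varphi'}$ at the tree root (with the extra adjustment that the root itself must be tested, which can be done by disjoining $\varphi'$ evaluated after descending once). One must also verify that the sentences $\xi_R$ and $\xi_L$ correctly witness the class membership on every OPM-compatible UOT, and that the extension of Lemma~\ref{lemma:composition-subtree} used inside Lemmas~\ref{lemma:transl-rr} and \ref{lemma:transl-lr} is applicable verbatim with the marker formulas $\alpha^R_\infty$ and $\alpha^L_\infty$.

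The main obstacle, already absorbed into the preparatory lemmas, is the coexistence of two structurally very different classes of infinite UOTs: unlike the nested-word setting of \cite{lmcs/AlurABEIL08}, we cannot normalize to a single shape, so the translation must internally detect the class. The subtle point is that the detection must be carried out by \xuntil{} formulas whose paths do not leak across the parts of the tree being described; this is why the FO-definability of RR/LR by formulas of q.r.\ $\leq 5$ (ensuring $k \geq 5$) is essential, and why the relativization techniques inside Lemma~\ref{lemma:composition-subtree}, where each until/since is restricted via $\neg \xcback \alpha^*$ or $\neg \alpha^L_\infty$, are the crucial ingredient that makes the disjunctive synthesis of the three cases go through.
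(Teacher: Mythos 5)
Your proposal follows essentially the same route as the paper: the RR/LR dichotomy detected both by FO formulas of quantifier rank at most $5$ and by the \xuntil{} sentences $\xi_R$, $\xi_L$, the enumeration of rank-$k$ types with $k = \max(m,5)$, the appeal to the composition lemmas and to the per-class translations $\varphi_R$, $\varphi_{L1}$, $\varphi_{L2}$, and the final case-split synthesis (your disjunction and the paper's XOR coincide since the rank-$k$ types are mutually exclusive). Your added remark about testing the root itself in the sentence case, owing to the strict semantics of $\xcuntilo$, is a small but legitimate refinement of the paper's closing step.
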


Thanks to Theorem~\ref{thm:xuntil-opm-trees}, we can extend the results entailed
by the translation of Section~\ref{sec:cxpath-translation} to $\omega$-words.
\begin{thm}%
\label{thm:potl-completeness-omega}
POTL = FOL with one free variable on OP $\omega$-words.
\end{thm}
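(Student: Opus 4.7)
The plan is to adapt the argument used for Theorem~\ref{thm:potl-completeness} (finite words) to the $\omega$-case, combining the FO-completeness of \xuntil{} on $\omega$-UOTs (Theorem~\ref{thm:xuntil-opm-trees}) with the POTL$\leftrightarrow$\xuntil{} translations already established.

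For the inclusion POTL $\subseteq$ FOL, I would simply observe that the FO semantics given in Section~\ref{sec:fo-semantics} depends only on the order $<$, the chain relation $\chain$, and monadic predicates, none of which require $U$ to be finite. The translation $\nu_\varphi(x)$ therefore remains well-formed and correct on OP $\omega$-words, where $U = \mathbb{N}$, since each operator's semantics is stated pointwise via local existential/universal quantification over positions; in particular, the auxiliary formulas $\gamma$, $\delta$, and $\sigma_a$ are not sensitive to finiteness. This also preserves the three-variable property on $\omega$-words as a corollary, mirroring Corollary~\ref{cor:3var}.

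For the reverse inclusion FOL $\subseteq$ POTL, I would lift the isomorphism $\tau$ of Section~\ref{sec:rr-trees}--\ref{sec:lr-trees} to obtain, for each FO formula $\bar\varphi(x)$ on OP $\omega$-words, an equivalent FO formula $\bar\varphi'(x)$ on the associated OPM-compatible $\omega$-UOT $T_w = \tau(w)$: this is possible because $\tau$ is a first-order interpretation (both $\rchild$ and $\rsibl$ are FO-definable from $<$ and $\chain$, using the right-context analysis that was used in the finite case). By Theorem~\ref{thm:xuntil-opm-trees}, $\bar\varphi'(x)$ is equivalent on OPM-compatible $\omega$-UOTs to some \xuntil{} formula $\psi$. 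Finally, I would apply the translation $\iota_\mathcal{X}$ of Section~\ref{sec:cxpath-translation} to turn $\psi$ into a POTL formula.

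The main obstacle I anticipate is verifying that Lemmas~\ref{lemma:iota-downarrow}, \ref{lemma:iota-rightarrow}, and \ref{lemma:iota-leftarrow} transfer verbatim to $\omega$-words, since their proofs were written for finite words. Concretely, one must re-examine the arguments for the $\Rightarrow(\varphi,\psi)$ and $\Leftarrow(\varphi,\psi)$ cases in the two situations peculiar to $\omega$-UOTs: (i) when the position $\tau(i)$ is one of the infinitely many children of the node $r_\infty$ of an LR UOT, and (ii) when $\tau(i)$ lies on the infinite rightmost branch of an RR UOT and its parent has infinitely many descendants in that branch. In case (i), the parent $r_\infty$ has no rightmost child, so formula branches such as $\lganext{\doteq}\psi'$ that witness the rightmost child must simply be discarded; in case (ii), the same applies for the pending node being the left context of an open chain, where $\lcnext{t}$ and the downward hierarchical operators are vacuously false by Section~\ref{sec:optl-omega}. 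One then argues, case by case, that the remaining disjuncts of $\iota_\mathcal{X}(\Rightarrow(\varphi,\psi))$ and $\iota_\mathcal{X}(\Leftarrow(\varphi,\psi))$ still capture exactly the desired sibling paths, because only finitely many siblings lie between any two fixed siblings of $r_\infty$, so the hierarchical operators' semantics (which concern \emph{consecutive} right/left contexts) continues to agree with the sibling semantics of \xuntil{}. The translations for $\Downarrow$ and $\Uparrow$, by contrast, reduce to downward summary operators on finite subtrees and require no change. Once these cases are dispatched, composing $\iota_\mathcal{X}$ with the $\omega$-version of Theorem~\ref{thm:xuntil-opm-trees} yields Theorem~\ref{thm:potl-completeness-omega}.
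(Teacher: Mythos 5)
Your proposal matches the paper's own (very terse) proof: the paper likewise obtains Theorem~\ref{thm:potl-completeness-omega} by composing the FO semantics of Section~\ref{sec:fo-semantics} (which is insensitive to the finiteness of $U$) for one inclusion with the chain FOL on OP $\omega$-words $\to$ FOL on OPM-compatible $\omega$-UOTs $\to$ \xuntil{} (Theorem~\ref{thm:xuntil-opm-trees}) $\to$ POTL via $\iota_\mathcal{X}$ for the other. Your explicit re-examination of Lemmas~\ref{lemma:iota-downarrow}, \ref{lemma:iota-rightarrow}, and \ref{lemma:iota-leftarrow} on the RR and LR tree shapes is precisely the content left implicit in the paper's one-line derivation, and your case analysis of pending nodes and of the children of $r_\infty$ is consistent with the caveats the paper itself records in Section~\ref{sec:optl-omega}.
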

\begin{cor}%
\label{cor:complete-subset-omega}
The propositional operators plus
\(
  \ldnext,
  \ldback,
  \lcdnext,
  \lcdback,
  \lcduntil{}{},
  \lcdsince{}{},
  \lhunext, \allowbreak
  \lhuback, \allowbreak
  \lhuuntil{}{}, \allowbreak
  \lhusince{}{}
\)
are expressively complete on OP $\omega$-words.
\end{cor}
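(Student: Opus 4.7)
The plan is to reproduce, for the $\omega$-word case, the same two-step argument that gave Corollary~\ref{cor:complete-subset} in the finite case, now pivoting on Theorem~\ref{thm:potl-completeness-omega} and Theorem~\ref{thm:xuntil-opm-trees}. First I would invoke Theorem~\ref{thm:potl-completeness-omega}, which asserts that every FOL formula with one free variable on OP $\omega$-words is equivalent to some POTL formula; that equivalence is witnessed by composing the FOL-to-\xuntil{} translation on OPM-compatible $\omega$-UOTs (Theorem~\ref{thm:xuntil-opm-trees}) with the \xuntil{}-to-POTL translation $\iota_\mathcal{X}$ defined in Section~\ref{sec:cxpath-translation}. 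It therefore suffices to verify that $\iota_\mathcal{X}$ produces formulas whose modalities all belong to the claimed fragment.

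The key step is a syntactic inspection of equations~(\ref{eq:iota-downarrow})--(\ref{eq:iota-leftarrow-4}). The only POTL modalities appearing on their right-hand sides are $\ldnext$, $\ldback$, $\lcdnext$, $\lcdback$, $\lcduntil{}{}$, $\lcdsince{}{}$, $\lhunext$, $\lhuback$, $\lhuuntil{}{}$, and $\lhusince{}{}$, together with the precedence-restricted shortcuts $\lganext{\lessdot}$, $\lganext{\doteq}$, $\lgaback{\lessdot}$, $\lgaback{\doteq}$, $\lgnext{\lessdot}$, $\lgback{\lessdot}$. Unfolding the definitions in Section~\ref{sec:ltl-comp} and Section~\ref{sec:cxpath-translation} shows each such shortcut is itself a Boolean combination of propositional labels $\sigma_a$ with occurrences of $\lcdnext$, $\lcdback$, $\ldnext$, $\ldback$. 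Hence every POTL formula produced by $\iota_\mathcal{X}$ lies strictly inside the stated operator set, and the other POTL operators (upward next/back, upward chain next/back, upward summary until/since, downward hierarchical operators) are not needed in the translation.

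The only place where the $\omega$ case can a priori diverge from the finite one concerns open chains: at left contexts of open chains the $\chain$-based modalities behave as described in Section~\ref{sec:optl-omega}, and certain $\lcdnext$- or downward-hierarchical subformulas simply become false there. This is where I expect the only genuine bookkeeping to live, and it is the main obstacle to a fully clean proof. However, Lemmas~\ref{lemma:iota-downarrow}, \ref{lemma:iota-rightarrow}, and \ref{lemma:iota-leftarrow} justify $\iota_\mathcal{X}$ position by position through the isomorphism $\tau$, which extends to $\omega$-words via Section~\ref{sec:opm-comp-omega-trees} (both for RR and for LR UOTs), and their reasoning is purely local on the chain structure; so those lemmas transfer verbatim once $\tau$ is read on an $\omega$-UOT. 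Combining Theorem~\ref{thm:xuntil-opm-trees} with the range restriction of $\iota_\mathcal{X}$ observed above then yields Corollary~\ref{cor:complete-subset-omega}.
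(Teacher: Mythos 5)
Your proposal is correct and follows essentially the same route as the paper: the corollary is obtained from Theorem~\ref{thm:xuntil-opm-trees} (FO-completeness of \xuntil{} on OPM-compatible $\omega$-UOTs) together with the observation that the translation $\iota_\mathcal{X}$ of Section~\ref{sec:cxpath-translation} --- including the precedence-restricted shortcuts, which unfold to $\ldnext$, $\ldback$, $\lcdnext$, $\lcdback$ and propositional combinations --- only produces formulas in the stated operator set. Your extra remark on open chains is a reasonable bit of diligence, but the paper treats the transfer of the $\iota_\mathcal{X}$ correctness lemmas to $\omega$-UOTs as routine in exactly the way you do.
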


The containment relations in Corollary~\ref{cor:optl-in-potl} are easily extended to $\omega$-words:
\begin{cor}%
\label{cor:optl-in-potl-omega}
NWTL $\subset$ OPTL $\subset$ POTL over OP $\omega$-words.
\end{cor}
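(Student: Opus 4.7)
My plan is to extend the proof of Corollary~\ref{cor:optl-in-potl} to $\omega$-words by replaying each of its three ingredients in the infinite setting, now using Theorem~\ref{thm:potl-completeness-omega} in place of Theorem~\ref{thm:potl-completeness}. The three ingredients to reprove are: (a) NWTL $\subseteq$ OPTL over OP $\omega$-words, (b) OPTL $\subseteq$ POTL over OP $\omega$-words, and (c) the two strictness claims.

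For (b), the first step is to give a direct FO translation of every OPTL operator on OP $\omega$-words, mirroring Section~\ref{sec:fo-semantics}. The matching next/back ($\lanext$, $\laback$) use the $\chi$ relation directly; the OPTL summary until $\luntil{\Pi}{\varphi}{\psi}$ is translated by specifying that each consecutive pair of path positions is related either by $\lsucc$ with a PR in $\Pi$, or by $\chi$ with a PR in $\{\doteq, \gtrdot\}$ (and symmetrically for since); the OPTL hierarchical until and since use quantification very much like the POTL operators $\lhunext$, $\lhuuntil{}{}$ in Section~\ref{sec:fo-semantics}. These translations are syntactic and remain valid when $U = \mathbb{N}$, because their semantics on open chains is simply that the chain next/since and right-looking summary operators become false on pending positions, which is captured by the $\chi$ relation being undefined there. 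Composing the translation with the FO-to-POTL direction of Theorem~\ref{thm:potl-completeness-omega} yields (b).

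For (a), the same strategy applies: nested words are a special case of OP words (as discussed in Section~\ref{sec:structured-langs}), and NWTL's modalities are all FO-definable on nested words~\cite{lmcs/AlurABEIL08}. A nested-word model can be encoded as an OP word over an OPM whose structural labels partition $\Sigma$ into $\Sigma_c, \Sigma_i, \Sigma_r$ with the PRs matching the call/internal/return behavior of a VPA. Under this encoding every NWTL operator translates into an FO formula on OP $\omega$-words, and Theorem~\ref{thm:potl-completeness-omega} then puts it in POTL; the same observation, applied instead to OPTL semantics through (b), inserts NWTL into OPTL.

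For (c), the strict inclusion NWTL $\subsetneq$ OPTL on $\omega$-words follows by taking any OPL $\omega$-language whose underlying structure uses a many-to-one chain (for instance the language of $\omega$-words over $M_\lcall$ containing at least one $\lhandle$--$\lthrow$ pair that aborts multiple calls), which is not a VPL and hence not expressible in NWTL, while an OPTL formula over $M_\lcall$ using $\lanext$ captures it. The strict inclusion OPTL $\subsetneq$ POTL requires more care: the counterexample $\leven{d} \mathrm{p}_A$ from Theorem~\ref{thm:optl-vs-potl} must be shown inexpressible in OPTL on some OP $\omega$-word, which I expect to be the main obstacle. The plan is to first extend Lemma~\ref{lemma:pumping} to $\omega$-OPLs: the product construction in its proof uses the OPTL-to-OPA construction of~\cite{ChiariMP20a}, which exists for $\omega$OPBAs as well, and intersection with $\lambda(\powset{AP}^\omega)$ preserves $\omega$-context-freeness, so the classical $\omega$-pumping lemma for CFLs applies. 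Then I would lift the argument of Theorem~\ref{thm:optl-vs-potl} by fixing an $\omega$-word whose prefix contains the tree fragment of Figure~\ref{fig:exc-tree} with an arbitrarily deep $\Theight$, padded on the right by an innocuous $\omega$-tail compatible with $M_\lcall$ (say, an infinite $\mathbf{call}$--$\mathbf{ret}$ chain not containing $\mathrm{p}_A$); every step of the original argument bounds the nesting depth of until/since operators in terms of $\Theight$, and the $\omega$-tail does not help an OPTL formula reach or distinguish positions inside the handler block, so the same diagonal construction against any candidate OPTL formula goes through.
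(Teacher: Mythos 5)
Your overall decomposition mirrors the paper's, which justifies this corollary in one line by re-using the three ingredients of Corollary~\ref{cor:optl-in-potl} with Theorem~\ref{thm:potl-completeness-omega} in place of the finite-word completeness theorem; your treatment of OPTL $\subseteq$ POTL (FO translation of OPTL semantics on $\omega$-words, then FOL $\subseteq$ POTL) is exactly the paper's route. However, your argument for NWTL $\subseteq$ OPTL has a genuine logical gap: you propose to translate NWTL modalities into FOL on OP $\omega$-words and then push them into OPTL ``through (b)'', but (b) only gives OPTL $\subseteq$ POTL; there is no FOL-to-OPTL direction to invoke, and there cannot be one, since Theorem~\ref{thm:optl-vs-potl} shows OPTL is not FO-complete (the FO-definable $\leven{d} \mathrm{p}_A$ is not OPTL-expressible). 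FO-definability of NWTL therefore only yields NWTL $\subseteq$ POTL. The inclusion NWTL $\subset$ OPTL is obtained in the paper by citing the direct syntactic translation of \cite{ChiariMP20a}, not by a detour through first-order logic.

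On the strictness OPTL $\subsetneq$ POTL over $\omega$-words, you are right to flag that Theorem~\ref{thm:optl-vs-potl} is stated for finite words and that Lemma~\ref{lemma:pumping} relies on the finite-word OPA construction; the paper glosses over this lifting. Your plan (an $\omega$-version of the pumping lemma via the $\omega$OPBA construction of \cite{ChiariMP20a}, plus padding the counterexample of Figure~\ref{fig:exc-tree} with an innocuous infinite tail) is plausible but remains a sketch: the ``classical $\omega$-pumping lemma'' step and the claim that the tail cannot help any OPTL formula reach or exclude positions inside the handler block would both need to be carried out explicitly. This part of your proposal is more detailed than what the paper provides, but it is not yet a proof.
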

Moreover,
\begin{cor}%
\label{cor:3var-omega}
Every FO formula with at most one free variable is equivalent
to one using at most three distinct variables on OP $\omega$-words.
\end{cor}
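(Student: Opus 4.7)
The plan is to compose Theorem~\ref{thm:potl-completeness-omega} with the FO translation of POTL given in Section~\ref{sec:fo-semantics}. Given any FO formula $\varphi(x)$ with one free variable on OP $\omega$-words, Theorem~\ref{thm:potl-completeness-omega} furnishes an equivalent POTL formula $\psi$. Applying the translation $\nu_\psi(x)$ then yields an FO formula, equivalent to $\varphi(x)$, that uses only the three distinct variables $x$, $y$, $z$.

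The core observation is that the translation $\nu$ of Section~\ref{sec:fo-semantics}---the same one already used to establish Corollary~\ref{cor:3var} on finite words---is engineered to respect a three-variable budget: every clause introduces quantifiers only over $y$ and $z$, and whenever recursion into a subformula would clash, the subformula is wrapped in a pattern like $\exists x (x = y \land \nu_\varphi(x))$ to rebind $x$. This is visible in all the next/back, chain next/back, DS/US until/since and hierarchical until/since clauses, as well as in the auxiliary formulas $\lsucc$, $\gamma$, and $\delta$. I would verify that this translation transports verbatim from the finite case to the $\omega$-case: the vocabulary ($<$, $\chain$, monadic $AP$ predicates) is the same, and by Section~\ref{sec:optl-omega} the POTL semantics on $\omega$-words is given by the same clauses as in the finite case, so no clause of $\nu$ relies on finiteness of the domain.

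For a FO sentence $\bar{\varphi}$ (zero free variables) I would use that in an OP $\omega$-word the initial position is the unique one labelled $\#$, so $\sigma_{\{\#\}}(x)$ pins it down with a single variable. Set $\bar{\varphi}'(x) := \sigma_{\{\#\}}(x) \land \bar{\varphi}$; this is a formula with one free variable, and $\bar{\varphi}$ is equivalent to $\exists x\, \bar{\varphi}'(x)$. Apply the preceding argument to $\bar{\varphi}'(x)$ to get an equivalent three-variable $\tilde{\varphi}(x)$, and then $\exists x\, \tilde{\varphi}(x)$ is a three-variable sentence equivalent to $\bar{\varphi}$.

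I do not anticipate any real obstacle: the statement is a straightforward corollary of Theorem~\ref{thm:potl-completeness-omega} and the already-performed three-variable bookkeeping of Section~\ref{sec:fo-semantics}. The only novelty relative to Corollary~\ref{cor:3var} is invoking Theorem~\ref{thm:potl-completeness-omega} in place of Theorem~\ref{thm:potl-completeness}, plus the one-line reduction from sentences to one-free-variable formulas sketched above.
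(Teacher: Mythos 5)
Your proposal is correct and takes essentially the same route as the paper: Corollary~\ref{cor:3var-omega} is obtained by composing Theorem~\ref{thm:potl-completeness-omega} with the three-variable FO translation $\nu$ of Section~\ref{sec:fo-semantics}, exactly as Corollary~\ref{cor:3var} follows from Theorem~\ref{thm:potl-completeness} in the finite case. Your additional observations---that $\nu$ carries over verbatim to $\omega$-words since the signature and the semantic clauses are unchanged, and that sentences reduce to one-free-variable formulas via the unique initial $\#$ position---are sound bookkeeping that the paper leaves implicit.
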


\section{Decidability, Satisfiability, and Model Checking}%
\label{sec:mc}

The decidability of POTL is a consequence of its being equivalent to the FO fragment
of the MSO characterization of OPLs~\cite{LonatiEtAl2015}.
More practical algorithms for satisfiability and model checking can be obtained
by building OPAs equivalent to POTL formulas.
Such construction procedure, which is quite involved, is detailed in~\cite{ChiariMP21}.
Thus,
\begin{thmC}[\cite{ChiariMP21}]
Given a POTL formula $\varphi$, we can build an OPA (or an $\omega$OPBA)
$\mathcal{A}_\varphi$ of size $2^{O(|\varphi|)}$ accepting the language defined by $\varphi$.
\end{thmC}

Emptiness of OPA and $\omega$OPBA is decidable in polynomial time by adapting techniques
originally developed for Pushdown Systems and Recursive State Machines,
such as \emph{saturation}~\cite{AlurBE18}, or graph-theoretic algorithms~\cite{AlurCEM05}.
Hence, satisfiability of a formula $\varphi$ can be tested in time exponential in $|\varphi|$
by building $\mathcal{A}_\varphi$ and checking its emptiness.
By Corollaries~\ref{cor:optl-in-potl} and~\ref{cor:optl-in-potl-omega},
we can extend the EXPTIME-hardness result for NWTL~\cite{lmcs/AlurABEIL08} to POTL, obtaining
\begin{cor}
POTL satisfiability is EXPTIME-complete.
\end{cor}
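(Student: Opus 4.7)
The plan is to establish EXPTIME-completeness by proving the upper and lower bounds separately, both of which are essentially immediate given the results already assembled in the paper.

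For the upper bound, I would follow the standard automata-theoretic recipe: given a POTL formula $\varphi$, invoke the construction cited from \cite{ChiariMP21} to obtain an OPA (or $\omega$OPBA, in the infinite-word case) $\mathcal{A}_\varphi$ of size $2^{O(|\varphi|)}$ such that $L(\mathcal{A}_\varphi)$ is exactly the set of models of $\varphi$. Then $\varphi$ is satisfiable iff $L(\mathcal{A}_\varphi) \neq \emptyset$. Since emptiness of OPAs and $\omega$OPBAs can be decided in time polynomial in the size of the automaton (using the saturation-based or graph-theoretic techniques adapted from Pushdown Systems and Recursive State Machines, as already noted in the text), the overall procedure runs in time $2^{O(|\varphi|)}$, placing POTL satisfiability in EXPTIME.

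For the lower bound, I would appeal to the containment $\text{NWTL} \subset \text{POTL}$ established in Corollaries~\ref{cor:optl-in-potl} and \ref{cor:optl-in-potl-omega}. The translation from NWTL into POTL is linear (or at worst polynomial) in the size of the input formula, as it merely rewrites each NWTL operator by its POTL definition; in particular, the reduction preserves satisfiability and can be computed in polynomial time. Combined with the EXPTIME-hardness of NWTL satisfiability established in \cite{lmcs/AlurABEIL08}, this yields EXPTIME-hardness of POTL satisfiability.

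I do not foresee any real obstacle here: both directions reduce to results already proved or cited elsewhere in the paper. The only point that deserves a sentence of care is making sure the NWTL-to-POTL translation underlying Corollary~\ref{cor:optl-in-potl} is indeed a polynomial-time reduction preserving satisfiability (and not merely an expressiveness inclusion); this is clear by inspection of the translation, but should be stated explicitly so that EXPTIME-hardness transfers cleanly. Putting the two bounds together gives the claimed EXPTIME-completeness, and the same argument works uniformly for finite and $\omega$-words by using $\omega$OPBAs and Corollary~\ref{cor:optl-in-potl-omega} in the infinite case.
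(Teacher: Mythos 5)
Your proposal matches the paper's argument exactly: the upper bound via the exponential-size automaton construction from \cite{ChiariMP21} plus polynomial-time emptiness checking, and the lower bound by transferring EXPTIME-hardness from NWTL through the containment NWTL $\subset$ POTL. The extra remark about verifying that the NWTL-to-POTL translation is a genuine polynomial-time satisfiability-preserving reduction is a sensible point of care that the paper leaves implicit.
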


Since there are practical algorithms for computing the intersection of both OPA and $\omega$OPBA~\cite{LonatiEtAl2015}, model checking of OPA (or $\omega$OPBA) models against a formula $\varphi$
can be done by building $\mathcal{A}_{\neg \varphi}$, and checking emptiness of their intersection.
Thus,
\begin{cor}
POTL model checking against OPA and $\omega$OPBA is EXPTIME-complete.
\end{cor}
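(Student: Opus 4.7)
The plan is to prove both directions of the EXPTIME-completeness claim using the machinery already assembled. For the upper bound, given a model $\mathcal{A}_M$ (an OPA or $\omega$OPBA over the same OP alphabet as $\varphi$) and a POTL formula $\varphi$, the standard automata-theoretic model-checking recipe applies: first apply the preceding theorem to the formula $\neg \varphi$ to obtain an automaton $\mathcal{A}_{\neg \varphi}$ of size $2^{O(|\varphi|)}$ recognising exactly the words violating $\varphi$; then form the intersection automaton $\mathcal{A}_M \cap \mathcal{A}_{\neg \varphi}$ using the intersection construction for OPA (resp.\ $\omega$OPBA) from \cite{LonatiEtAl2015}, whose size is polynomial in the product of the component sizes; finally, check emptiness of this intersection using the polynomial-time emptiness procedures already invoked for the satisfiability corollary (adapted from saturation \cite{AlurBE18} or the graph-based methods of \cite{AlurCEM05}). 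The overall running time is $|\mathcal{A}_M|^{O(1)} \cdot 2^{O(|\varphi|)}$, giving EXPTIME membership, and in fact polynomial dependence on the model size.

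For the lower bound I would invoke Corollaries~\ref{cor:optl-in-potl} and~\ref{cor:optl-in-potl-omega}, which state that NWTL is subsumed by POTL on both finite and $\omega$-words. Since NWTL model checking is already EXPTIME-hard \cite{lmcs/AlurABEIL08}, and any NWTL formula can be interpreted as a POTL formula of linear size over a suitable OP alphabet (where the OPM encodes the matching relation of the underlying nested words, as discussed in Section~\ref{sec:structured-langs}), the hardness transfers verbatim. Alternatively, one may reduce directly from POTL satisfiability, which was shown EXPTIME-complete above: checking satisfiability of $\varphi$ reduces to model-checking the trivial max-automaton $\mathcal{A}(\Sigma, M)$ against $\neg \varphi$, since $\varphi$ is satisfiable exactly when $\mathcal{A}(\Sigma, M)$ does \emph{not} satisfy $\neg \varphi$.

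Combining the two bounds yields EXPTIME-completeness in both the finite and infinite settings. No step here is really the main obstacle, since the heavy lifting—the translation of POTL into a singly-exponential OPA/$\omega$OPBA, the closure of OP(B)A under intersection, and polynomial-time emptiness—has already been done in \cite{ChiariMP21,LonatiEtAl2015} and in the preceding corollary. The only point that deserves a sentence of care in the writeup is the reduction for the lower bound: one should note that the OPM used by the NWTL hardness instance is induced by the call/return partition of the visibly-pushdown alphabet, which is a legitimate OP alphabet, so that the NWTL-to-POTL embedding preserves both formula size and model size linearly and hence the hardness.
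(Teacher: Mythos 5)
Your proposal is correct and follows essentially the same route as the paper: build $\mathcal{A}_{\neg\varphi}$ of size $2^{O(|\varphi|)}$ via the preceding theorem, intersect with the model using the OPA/$\omega$OPBA intersection constructions of \cite{LonatiEtAl2015}, and test emptiness in polynomial time, with hardness inherited from NWTL through the containment corollaries. The extra detail you give on the lower bound (the reduction from satisfiability via the max-automaton) is a valid and slightly more explicit version of what the paper leaves implicit.
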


An experimental evaluation of the POTL model-checking procedure is reported in~\cite{ChiariMP21}.
Just to give an idea of its practicality, we report that formula
\[
  \llglob \big((\lcall \land \mathrm{p}_B \land
    \lcallsince(\top, \mathrm{p}_A))
    \implies \lthrnext(\top) \big)
\]
from Section~\ref{sec:motivating-examples} has been successfully checked
against the OPA of Figure~\ref{fig:example-prog} in just 867 ms,
with a RAM occupancy of 70 MiB, on a laptop with a 2.2 GHz Intel processor and 15 GiB of RAM,
running Ubuntu GNU/Linux 20.04.

\section{Conclusions}%
\label{sec:conclusion}

We introduced the temporal logic POTL, and proved its equivalence to FOL
on both OP finite and $\omega$-words.
Thus, thanks to an independent result~\cite{MPC20-arXiv},
the languages defined through POTL formulas coincide with the class of aperiodic OPLs.
We also proved that POTL is strictly more expressive than temporal logics
with explicit context-free-aware modalities in the literature,
including OPTL, which is also based on OPLs.
Such proofs are technically quite involved, which is unsurprising,
given the difficulties encountered in analogous problems,
even if based on the simpler framework of Nested Words~\cite{lmcs/AlurABEIL08}
(recall that the relationship between CaRet and NWTL remains unknown).
The same hardships, however, do not afflict the algorithmic complexity
of satisfiability and model checking, which are not higher than those of nested-words logics.
Thus, we argue that the strong gain in expressive power w.r.t.\ previous
approaches to model checking CFLs brought by POTL is worth the technicalities needed to
achieve the present ---and future--- results.

On the other hand, POTL shows promising results in its applications:~\cite{ChiariMP21} reports on a complete model-checker for POTL
and on the first encouraging experiments on a benchmark of practical interest.

In our view, POTL is the theoretical foundation on top of which to build a complete,
practical and user-friendly environment to specify and verify properties
of many pushdown-based systems.

While logics such as CaRet and NWTL can be viewed as the extension of LTL to Nested Words,
POTL can be seen as the extension of LTL to OPLs and OP words.
An interesting path for future investigations is the extension of branching-time logics
such as CTL to OPLs.
Something similar has been done for Nested Words in~\cite{DBLP:journals/toplas/AlurCM11},
where a $\mu$-calculus of Nested Trees is introduced.
Nested-words logics have also been augmented in other directions:~\cite{BozzelliS14} introduces a temporal logic capturing the whole class of VPLs,
while timed extensions of CaRet are given in~\cite{BozzelliMP18}.
The same extensions could be attempted for POTL too.

\bibliographystyle{alphaurl}
\bibliography{biblio}

\newcommand{\etalchar}[1]{$^{#1}$}
\begin{thebibliography}{BCM{\etalchar{+}}15}

\bibitem[AAB{\etalchar{+}}08]{lmcs/AlurABEIL08}
Rajeev Alur, Marcelo Arenas, Pablo Barcel{\'o}, Kousha Etessami, Neil Immerman,
  and Leonid Libkin.
\newblock First-order and temporal logics for nested words.
\newblock {\em LMCS}, 4(4), 2008.
\newblock \href {https://doi.org/10.2168/LMCS-4(4:11)2008}
  {\path{doi:10.2168/LMCS-4(4:11)2008}}.

\bibitem[ABE{\etalchar{+}}05]{AlurBEGRY05}
Rajeev Alur, Michael Benedikt, Kousha Etessami, Patrice Godefroid, Thomas Reps,
  and Mihalis Yannakakis.
\newblock Analysis of recursive state machines.
\newblock {\em ACM Trans. Program. Lang. Syst.}, 27(4):786--818, 2005.
\newblock \href {https://doi.org/10.1145/1075382.1075387}
  {\path{doi:10.1145/1075382.1075387}}.

\bibitem[ABE18]{AlurBE18}
Rajeev Alur, Ahmed Bouajjani, and Javier Esparza.
\newblock Model checking procedural programs.
\newblock In {\em Handbook of Model Checking}, pages 541--572. Springer, 2018.
\newblock \href {https://doi.org/10.1007/978-3-319-10575-8_17}
  {\path{doi:10.1007/978-3-319-10575-8_17}}.

\bibitem[Abr00]{Abrahams00}
David Abrahams.
\newblock {Exception-Safety in Generic Components}.
\newblock In {\em Generic Programming}, pages 69--79. Springer, 2000.
\newblock \href {https://doi.org/10.1007/3-540-39953-4_6}
  {\path{doi:10.1007/3-540-39953-4_6}}.

\bibitem[ACEM05]{AlurCEM05}
Rajeev Alur, Swarat Chaudhuri, Kousha Etessami, and Parthasarathy Madhusudan.
\newblock On-the-fly reachability and cycle detection for recursive state
  machines.
\newblock In {\em {TACAS} 2005}, volume 3440 of {\em LNCS}, pages 61--76.
  Springer, 2005.
\newblock \href {https://doi.org/10.1007/978-3-540-31980-1_5}
  {\path{doi:10.1007/978-3-540-31980-1_5}}.

\bibitem[ACM11]{DBLP:journals/toplas/AlurCM11}
Rajeev Alur, Swarat Chaudhuri, and Parthasarathy Madhusudan.
\newblock Software model checking using languages of nested trees.
\newblock {\em {ACM} Trans. Program. Lang. Syst.}, 33(5):15:1--15:45, 2011.
\newblock \href {https://doi.org/10.1145/2039346.2039347}
  {\path{doi:10.1145/2039346.2039347}}.

\bibitem[AEM04]{AlurEM04}
Rajeev Alur, Kousha Etessami, and Parthasarathy Madhusudan.
\newblock A temporal logic of nested calls and returns.
\newblock In {\em TACAS 2004}, pages 467--481. Springer, 2004.
\newblock \href {https://doi.org/10.1007/978-3-540-24730-2_35}
  {\path{doi:10.1007/978-3-540-24730-2_35}}.

\bibitem[AF16]{AlurF16}
Rajeev Alur and Dana Fisman.
\newblock Colored nested words.
\newblock In {\em {LATA} 2016}, volume 9618 of {\em {LNCS}}, pages 143--155.
  Springer, 2016.
\newblock \href {https://doi.org/10.1007/978-3-319-30000-9_11}
  {\path{doi:10.1007/978-3-319-30000-9_11}}.

\bibitem[AM04]{AluMad04}
Rajeev Alur and Parthasarathy Madhusudan.
\newblock Visibly pushdown languages.
\newblock In {\em ACM STOC}, pages 202--211, 2004.
\newblock \href {https://doi.org/10.1145/1007352.1007390}
  {\path{doi:10.1145/1007352.1007390}}.

\bibitem[AM09]{jacm/AlurM09}
Rajeev Alur and Parthasarathy Madhusudan.
\newblock Adding nesting structure to words.
\newblock {\em JACM}, 56(3), 2009.
\newblock \href {https://doi.org/10.1145/1516512.1516518}
  {\path{doi:10.1145/1516512.1516518}}.

\bibitem[BCM{\etalchar{+}}15]{BarenghiEtAl2015}
Alessandro Barenghi, Stefano {Crespi Reghizzi}, Dino Mandrioli, Federica
  Panella, and Matteo Pradella.
\newblock Parallel parsing made practical.
\newblock {\em Sci. Comput. Program.}, 112:195--226, 2015.
\newblock \href {https://doi.org/10.1016/j.scico.2015.09.002}
  {\path{doi:10.1016/j.scico.2015.09.002}}.

\bibitem[BEM97]{BouajjaniEM97}
Ahmed Bouajjani, Javier Esparza, and Oded Maler.
\newblock Reachability analysis of pushdown automata: Application to
  model-checking.
\newblock In {\em CONCUR '97}, pages 135--150. Springer, 1997.
\newblock \href {https://doi.org/10.1007/3-540-63141-0_10}
  {\path{doi:10.1007/3-540-63141-0_10}}.

\bibitem[BH96]{BouajjaniH96}
Ahmed Bouajjani and Peter Habermehl.
\newblock Constrained properties, semilinear systems, and {Petri} nets.
\newblock In {\em {CONCUR}'96}, volume 1119 of {\em LNCS}, pages 481--497.
  Springer, 1996.
\newblock \href {https://doi.org/10.1007/3-540-61604-7_71}
  {\path{doi:10.1007/3-540-61604-7_71}}.

\bibitem[BL16]{BenediktL16}
Michael Benedikt and Clemens Ley.
\newblock Limiting until in ordered tree query languages.
\newblock {\em {ACM} Trans. Comput. Log.}, 17(2):14:1--14:34, 2016.
\newblock \href {https://doi.org/10.1145/2856104} {\path{doi:10.1145/2856104}}.

\bibitem[BMP18]{BozzelliMP18}
Laura Bozzelli, Aniello Murano, and Adriano Peron.
\newblock Timed context-free temporal logics.
\newblock In {\em GandALF 2018}, volume 277 of {\em EPTCS}, pages 235--249.
  Open Publishing Association, 2018.
\newblock \href {https://doi.org/10.4204/EPTCS.277.17}
  {\path{doi:10.4204/EPTCS.277.17}}.

\bibitem[BR00]{BallR00}
Thomas Ball and Sriram~K. Rajamani.
\newblock Bebop: A symbolic model checker for boolean programs.
\newblock In {\em SPIN Model Checking and Software Verification}, pages
  113--130. Springer, 2000.
\newblock \href {https://doi.org/10.1007/10722468_7}
  {\path{doi:10.1007/10722468_7}}.

\bibitem[BS14]{BozzelliS14}
Laura Bozzelli and C{\'e}sar S{\'a}nchez.
\newblock Visibly linear temporal logic.
\newblock In {\em Automated Reasoning}, pages 418--433. Springer, 2014.
\newblock \href {https://doi.org/10.1007/978-3-319-08587-6_33}
  {\path{doi:10.1007/978-3-319-08587-6_33}}.

\bibitem[CM12]{CrespiMandrioli12}
Stefano {Crespi Reghizzi} and Dino Mandrioli.
\newblock {Operator Precedence and the Visibly Pushdown Property}.
\newblock {\em JCSS}, 78(6):1837--1867, 2012.
\newblock \href {https://doi.org/10.1016/j.jcss.2011.12.006}
  {\path{doi:10.1016/j.jcss.2011.12.006}}.

\bibitem[CMM78]{Crespi-ReghizziMM1978}
Stefano {Crespi Reghizzi}, Dino Mandrioli, and Daniel~F. Martin.
\newblock {Algebraic Properties of Operator Precedence Languages}.
\newblock {\em Information and Control}, 37(2):115--133, May 1978.
\newblock \href {https://doi.org/10.1016/S0019-9958(78)90474-6}
  {\path{doi:10.1016/S0019-9958(78)90474-6}}.

\bibitem[CMP20]{ChiariMP20a}
Michele Chiari, Dino Mandrioli, and Matteo Pradella.
\newblock Operator precedence temporal logic and model checking.
\newblock {\em Theor. Comput. Sci.}, 848:47--81, 2020.
\newblock \href {https://doi.org/10.1016/j.tcs.2020.08.034}
  {\path{doi:10.1016/j.tcs.2020.08.034}}.

\bibitem[CMP21]{ChiariMP21}
Michele Chiari, Dino Mandrioli, and Matteo Pradella.
\newblock Model-checking structured context-free languages.
\newblock In {\em {CAV} '21}, volume 12760 of {\em LNCS}, pages 387--–410.
  Springer, 2021.
\newblock \href {https://doi.org/10.1007/978-3-030-81688-9_18}
  {\path{doi:10.1007/978-3-030-81688-9_18}}.

\bibitem[EHRS00]{EsparzaHRS00}
Javier Esparza, David Hansel, Peter Rossmanith, and Stefan Schwoon.
\newblock Efficient algorithms for model checking pushdown systems.
\newblock In {\em CAV 2000}, volume 1855 of {\em LNCS}, pages 232--247.
  Springer, 2000.
\newblock \href {https://doi.org/10.1007/10722167_20}
  {\path{doi:10.1007/10722167_20}}.

\bibitem[EKS03]{EsparzaKS03}
Javier Esparza, Anton{\'i}n Ku{\v{c}}era, and Stefan Schwoon.
\newblock {Model checking LTL with regular valuations for pushdown systems}.
\newblock {\em Information and Computation}, 186(2):355--376, 2003.
\newblock \href {https://doi.org/10.1016/S0890-5401(03)00139-1}
  {\path{doi:10.1016/S0890-5401(03)00139-1}}.

\bibitem[Eme90]{Emerson90}
E.~A. Emerson.
\newblock Temporal and modal logic.
\newblock In {\em Handbook of Theoretical Computer Science, Volume {B:} Formal
  Models and Sematics {(B)}}, pages 995--1072. Elsevier, 1990.
\newblock \href {https://doi.org/10.1016/B978-0-444-88074-1.50021-4}
  {\path{doi:10.1016/B978-0-444-88074-1.50021-4}}.

\bibitem[Flo63]{Floyd1963}
Robert~W. Floyd.
\newblock {Syntactic Analysis and Operator Precedence}.
\newblock {\em JACM}, 10(3):316--333, 1963.
\newblock \href {https://doi.org/10.1145/321172.321179}
  {\path{doi:10.1145/321172.321179}}.

\bibitem[Gab81]{Gabbay81}
Dov~M. Gabbay.
\newblock Expressive functional completeness in tense logic.
\newblock In {\em Aspects of Philosophical Logic}, pages 91--117. Springer,
  1981.
\newblock \href {https://doi.org/10.1007/978-94-009-8384-7_4}
  {\path{doi:10.1007/978-94-009-8384-7_4}}.

\bibitem[GHR94]{GabbayHR94}
Dov~M. Gabbay, Ian Hodkinson, and Mark Reynolds.
\newblock {\em Temporal Logic: Mathematical Foundations and Computational
  Aspects}.
\newblock Oxford University Press, New York, NY, USA, 1994.

\bibitem[GJ08]{GruneJacobs:08}
Dick Grune and Ceriel~J. Jacobs.
\newblock {\em Parsing techniques: a practical guide}.
\newblock Springer, New York, 2008.
\newblock \href {https://doi.org/10.1007/978-0-387-68954-8}
  {\path{doi:10.1007/978-0-387-68954-8}}.

\bibitem[GKL{\etalchar{+}}07]{GradelKLMSVVW07}
Erich Gr{\"a}del, Phokion~G. Kolaitis, Leonid Libkin, Maarten Marx, Joel
  Spencer, Moshe~Y. Vardi, Yde Venema, and Scott Weinstein.
\newblock {\em Finite Model Theory and its applications}.
\newblock Springer, 2007.
\newblock \href {https://doi.org/10.1007/3-540-68804-8}
  {\path{doi:10.1007/3-540-68804-8}}.

\bibitem[Har78]{Harrison78}
Michael~A. Harrison.
\newblock {\em {Introduction to Formal Language Theory}}.
\newblock Addison Wesley, 1978.

\bibitem[Heu91]{DBLP:journals/ita/Heuter91}
Uschi Heuter.
\newblock First-order properties of trees, star-free expressions, and
  aperiodicity.
\newblock {\em {ITA}}, 25:125--145, 1991.
\newblock \href {https://doi.org/10.1051/ita/1991250201251}
  {\path{doi:10.1051/ita/1991250201251}}.

\bibitem[Hin53]{Hintikka53}
Jaakko Hintikka.
\newblock Distributive normal forms in the calculus of predicates.
\newblock {\em Acta Philosophica Fennica}, 6:71, 1953.

\bibitem[HKT02]{HarelKT02}
David Harel, Dexter Kozen, and Jerzy Tiuryn.
\newblock {\em Dynamic Logic}, pages 99--217.
\newblock Springer, 2002.
\newblock \href {https://doi.org/10.1007/978-94-017-0456-4_2}
  {\path{doi:10.1007/978-94-017-0456-4_2}}.

\bibitem[HT87]{HaferT87}
Thilo Hafer and Wolfgang Thomas.
\newblock Computation tree logic {CTL}* and path quantifiers in the monadic
  theory of the binary tree.
\newblock In {\em ICALP'87}, volume 267 of {\em LNCS}, pages 269--279, Berlin,
  Heidelberg, 1987. Springer.
\newblock \href {https://doi.org/10.1007/3-540-18088-5_22}
  {\path{doi:10.1007/3-540-18088-5_22}}.

\bibitem[Imm12]{Immerman12}
Neil Immerman.
\newblock {\em Descriptive complexity}.
\newblock Springer, 2012.

\bibitem[JLT99]{JensenLT99}
Thomas {Jensen}, Daniel {Le Metayer}, and Tommy {Thorn}.
\newblock Verification of control flow based security properties.
\newblock In {\em Proc. '99 IEEE Symp. on Security and Privacy}, pages 89--103,
  1999.
\newblock \href {https://doi.org/10.1109/SECPRI.1999.766902}
  {\path{doi:10.1109/SECPRI.1999.766902}}.

\bibitem[JPR18]{JhalaPR18}
Ranjit Jhala, Andreas Podelski, and Andrey Rybalchenko.
\newblock Predicate abstraction for program verification.
\newblock In {\em Handbook of Model Checking}, pages 447--491. Springer, 2018.
\newblock \href {https://doi.org/10.1007/978-3-319-10575-8_15}
  {\path{doi:10.1007/978-3-319-10575-8_15}}.

\bibitem[Kam68]{Kamp68}
Hans Kamp.
\newblock {\em Tense logic and the theory of linear order}.
\newblock PhD thesis, University of California, Los Angeles, 1968.

\bibitem[KPV02]{KupfermanPV02b}
Orna Kupferman, Nir Piterman, and Moshe~Y. Vardi.
\newblock {Pushdown Specifications}.
\newblock In {\em LPAR 2002}, volume 2514 of {\em LNCS}, pages 262--277.
  Springer, 2002.
\newblock \href {https://doi.org/10.1007/3-540-36078-6_18}
  {\path{doi:10.1007/3-540-36078-6_18}}.

\bibitem[Lib09]{Libkin09}
Leonid Libkin.
\newblock The finite model theory toolbox of a database theoretician.
\newblock In {\em PODS '09}, pages 65--76, New York, NY, USA, 2009. ACM.
\newblock \href {https://doi.org/10.1145/1559795.1559807}
  {\path{doi:10.1145/1559795.1559807}}.

\bibitem[LMPP15]{LonatiEtAl2015}
Violetta Lonati, Dino Mandrioli, Federica Panella, and Matteo Pradella.
\newblock Operator precedence languages: Their automata-theoretic and logic
  characterization.
\newblock {\em {SIAM} J. Comput.}, 44(4):1026--1088, 2015.
\newblock \href {https://doi.org/10.1137/140978818}
  {\path{doi:10.1137/140978818}}.

\bibitem[LS10]{LibkinS10}
Leonid Libkin and Cristina Sirangelo.
\newblock Reasoning about {XML} with temporal logics and automata.
\newblock {\em J. Appl. Log.}, 8(2):210--232, 2010.
\newblock \href {https://doi.org/10.1016/j.jal.2009.09.005}
  {\path{doi:10.1016/j.jal.2009.09.005}}.

\bibitem[LST94]{Lautemann94}
Clemens Lautemann, Thomas Schwentick, and Denis Th{\'{e}}rien.
\newblock Logics for context-free languages.
\newblock In {\em {CSL} '94}, pages 205--216, 1994.
\newblock \href {https://doi.org/10.1007/BFb0022257}
  {\path{doi:10.1007/BFb0022257}}.

\bibitem[Mar04]{Marx2004}
Maarten Marx.
\newblock {Conditional XPath, the first order complete XPath dialect}.
\newblock In {\em PODS '04}, page~13, New York, USA, 2004. ACM Press.
\newblock \href {https://doi.org/10.1145/1055558.1055562}
  {\path{doi:10.1145/1055558.1055562}}.

\bibitem[Mar05]{Marx2005}
Maarten Marx.
\newblock {Conditional XPath}.
\newblock {\em {ACM} Trans. Database Syst.}, 30(4):929--959, 2005.
\newblock \href {https://doi.org/10.1145/1114244.1114247}
  {\path{doi:10.1145/1114244.1114247}}.

\bibitem[McN67]{McNaughton67}
Robert McNaughton.
\newblock {Parenthesis Grammars}.
\newblock {\em JACM}, 14(3):490--500, 1967.
\newblock \href {https://doi.org/10.1145/321406.321411}
  {\path{doi:10.1145/321406.321411}}.

\bibitem[Meh80]{DBLP:conf/icalp/Mehlhorn80}
Kurt Mehlhorn.
\newblock Pebbling mountain ranges and its application of {DCFL}-recognition.
\newblock In {\em ICALP '80}, volume~85 of {\em LNCS}, pages 422--435, 1980.
\newblock \href {https://doi.org/10.1007/3-540-10003-2_89}
  {\path{doi:10.1007/3-540-10003-2_89}}.

\bibitem[MP71]{McNaughtonPapert71}
Robert McNaughton and Seymour Papert.
\newblock {\em {Counter-free Automata}}.
\newblock {MIT} {P}ress, {C}ambridge, {USA}, 1971.

\bibitem[MP18]{MP18}
Dino Mandrioli and Matteo Pradella.
\newblock Generalizing input-driven languages: Theoretical and practical
  benefits.
\newblock {\em Computer Science Review}, 27:61--87, 2018.
\newblock \href {https://doi.org/10.1016/j.cosrev.2017.12.001}
  {\path{doi:10.1016/j.cosrev.2017.12.001}}.

\bibitem[MPC20a]{MPC20-arXiv}
Dino Mandrioli, Matteo Pradella, and Stefano {Crespi Reghizzi}.
\newblock Aperiodicity, star-freeness, and first-order definability of
  structured context-free languages.
\newblock {\em CoRR}, abs/2006.01236, 2020.
\newblock \href {http://arxiv.org/abs/2006.01236} {\path{arXiv:2006.01236}}.

\bibitem[MPC20b]{MPC20}
Dino Mandrioli, Matteo Pradella, and Stefano {Crespi Reghizzi}.
\newblock Star-freeness, first-order definability and aperiodicity of
  structured context-free languages.
\newblock In {\em {ICTAC} 2020}, volume 12545, pages 161--180, Macau, China,
  November 2020. Springer.
\newblock \href {https://doi.org/10.1007/978-3-030-64276-1_9}
  {\path{doi:10.1007/978-3-030-64276-1_9}}.

\bibitem[MR99]{MollerR99}
Faron Moller and Alexander~Moshe Rabinovich.
\newblock On the expressive power of {CTL}*.
\newblock In {\em LICS '99}, pages 360--368, Trento, Italy, 1999. {IEEE}
  Computer Society.
\newblock \href {https://doi.org/10.1109/LICS.1999.782631}
  {\path{doi:10.1109/LICS.1999.782631}}.

\bibitem[Pnu77]{Pnueli77}
Amir Pnueli.
\newblock The temporal logic of programs.
\newblock In {\em FOCS '77}, pages 46--57, Providence, Rhode Island, USA, 1977.
  {IEEE} Computer Society.
\newblock \href {https://doi.org/10.1109/SFCS.1977.32}
  {\path{doi:10.1109/SFCS.1977.32}}.

\bibitem[Tha67]{Tha67}
James~W. Thatcher.
\newblock Characterizing derivation trees of context-free grammars through a
  generalization of finite automata theory.
\newblock {\em Journ. of Comp. and Syst.Sc.}, 1:317--322, 1967.
\newblock \href {https://doi.org/10.1016/S0022-0000(67)80022-9}
  {\path{doi:10.1016/S0022-0000(67)80022-9}}.

\bibitem[Tho84]{DBLP:conf/caap/Thomas84}
Wolfgang Thomas.
\newblock Logical aspects in the study of tree languages.
\newblock In {\em CAAP'84, 9th Colloquium on Trees in Algebra and Programming,
  Bordeaux, France, March 5-7, 1984, Proceedings}, pages 31--50. Cambridge
  University Press, 1984.

\end{thebibliography}

\clearpage
\appendix

\section{Omitted Proofs: Properties of the \texorpdfstring{$\chain$}{Chain} Relation}%
\label{sec:chain-prop-proofs}

In the following lemma, we prove the properties of the chain relation.

\begin{lem}%
\label{lemma:chain-prop}
Given an OP word $w$ and positions $i, j$ in it, the following properties hold.
\begin{enumerate}
\item\label{item:chain-prop-1-proof}
  If $\chain(i,j)$ and $\chain(h,k)$, for any $h,k$ in $w$,
  then we have $i < h < j \implies k \leq j$
  and $i < k < j \implies i \leq h$.
\item If $\chain(i,j)$, then $i \lessdot i+1$ and $j-1 \gtrdot j$.
\item\label{item:chain-prop-3-proof}
  Consider all positions (if any) $i_1 < i_2 < \dots < i_n$ s.t.\ $\chain(i_p, j)$
  for all $1 \leq p \leq n$.
  We have $i_1 \lessdot j$ or $i_1 \doteq j$ and, if $n > 1$,
  $i_q \gtrdot j$ for all $2 \leq q \leq n$.
\item
  Consider all positions (if any) $j_1 < j_2 < \dots < j_n$ s.t.\ $\chain(i, j_p)$
  for all $1 \leq p \leq n$.
  We have $i \gtrdot j_n$ or $i \doteq j_n$ and, if $n > 1$,
  $i \lessdot j_q$ for all $1 \leq q \leq n-1$.
\end{enumerate}
\end{lem}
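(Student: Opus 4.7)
The plan is to prove all four properties by structural induction on the recursive definition of chains (Definition~\ref{def:chain}). Properties 1 and 2 will be established by direct induction; properties 3 and 4 will then be derived from them using an extremality argument.

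Property 2 is essentially a reading-off of the definition. For a simple chain $\ochain{c_0}{c_1 c_2 \cdots c_\ell}{c_{\ell+1}}$, the requirement $c_0 \lessdot c_1$ is the desired $i \lessdot i{+}1$, and $c_\ell \gtrdot c_{\ell+1}$ gives $j{-}1 \gtrdot j$. For a composed chain $\ochain{c_0}{s_0 c_1 \cdots c_\ell s_\ell}{c_{\ell+1}}$, the position $i{+}1$ (resp.\ $j{-}1$) is either the adjacent structural position $c_1$ (resp.\ $c_\ell$) when $s_0$ (resp.\ $s_\ell$) is empty, or it lies inside the inner chain $\ochain{c_0}{s_0}{c_1}$ (resp.\ $\ochain{c_\ell}{s_\ell}{c_{\ell+1}}$), in which case the inductive hypothesis applied to that inner chain supplies the desired border relation.

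For property 1 (non-crossing), I would prove by induction on the outer chain $\ochain{i}{x}{j}$ that every chain $\ochain{h}{y}{k}$ with $i < h < j$ satisfies $k \leq j$. The decisive case split is whether $h$ coincides with a structural position $c_p$ of the outer chain or lies strictly inside an inner body $s_p$. In the first case, $\ochain{h}{y}{k}$ is forced to be a contiguous union of the outer chain's inner pieces from $c_p$ to some $c_{q+1}$ with $q \leq \ell$, so $k = c_{q+1} \leq c_{\ell+1} = j$. In the second case, the inductive hypothesis applied to $\ochain{c_p}{s_p}{c_{p+1}}$ yields $k \leq c_{p+1} \leq j$. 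The symmetric statement about $i \leq h$ under $i < k < j$ follows analogously.

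Properties 3 and 4 are then derived from 1 and 2 together with an extremality argument. By non-crossing, the chains $\ochain{i_1}{\cdot}{j}, \ldots, \ochain{i_n}{\cdot}{j}$ are strictly nested. For each $q \geq 2$, the chain $\ochain{i_q}{\cdot}{j}$ shares its right context $j$ with its enclosing chain $\ochain{i_{q-1}}{s_0 c_1 \cdots c_\ell s_\ell}{j}$, and therefore must coincide with the latter's rightmost inner chain $\ochain{c_\ell}{s_\ell}{c_{\ell+1}}$; the underlying simple-chain condition $c_\ell \gtrdot c_{\ell+1}$ thus forces $i_q \gtrdot j$. For $q=1$, if $i_1 \gtrdot j$ held then $\ochain{i_1}{x}{j}$ would again have to appear as the rightmost inner chain of some strictly larger enclosing chain (since only rightmost inner chains exhibit $\gtrdot$ with their enclosing right context), producing a chain with left context $a < i_1$ and right context $j$ and contradicting the minimality of $i_1$; hence $i_1 \lessdot j$ or $i_1 \doteq j$. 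Property 4 is entirely symmetric, using the dual simple-chain condition $c_0 \lessdot c_1$ and a symmetric extremality argument for $j_n$.

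The main obstacle will be setting up property 1 cleanly, since it requires tracking how a crossing chain's endpoints fall within the structural decomposition of the enclosing chain, and matching that case analysis to the recursive structure of composed chains. Once this is secured, properties 3 and 4 follow combinatorially from the simple-chain PR pattern plus extremality, and property 2 is immediate.
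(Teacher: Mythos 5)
Your overall strategy---structural induction on the recursive definition of chains for properties 1 and 2, then extremality for 3 and 4---is sound and close in spirit to the paper's, but the two proofs diverge on the key non-crossing property. The paper argues by contradiction: it locates the hypothetical crossing position $h$ inside the innermost chain of $\ochain{i}{\cdot}{j}$ whose body contains it, and observes that the two chain decompositions would then assign two distinct precedence relations to the same pair of characters ($c_h \doteq c_p$ or $c_h \gtrdot c_j$ from the enclosing chain, versus $c_h \lessdot c_p$ or $c_h \lessdot c_j$ from $\ochain{h}{\cdot}{k}$), contradicting the fact that $M$ is a function. Your positive induction reaches the same conclusion, but your first case rests on an inaccurate intermediate claim: a chain whose left context is a structural position $c_p$ of the outer chain is \emph{not} forced to end at one of the outer chain's structural positions $c_{q+1}$. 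In the paper's running example, $\chain(1,7)$ starts at the structural position $\lcall$ (position 1) of the outermost chain $\ochain{\#}{\cdots}{\#}$, yet position 7 is a structural position only of a more deeply nested chain, not of the outermost one. The conclusion $k \leq j$ still holds, but establishing it requires a further recursive descent into the bodies $s_q$ (essentially folding this case into your second one), or else the paper's OPM-conflict argument. For properties 3 and 4, your identification of $\ochain{i_q}{\cdot}{j}$ for $q \geq 2$ with the rightmost inner chain of $\ochain{i_{q-1}}{\cdot}{j}$, and the extremality argument for $i_1$, match the paper's reasoning in substance (the paper phrases the $q \geq 2$ step as two separate contradictions, one for $i_q \lessdot j$ via crossing and one for $i_q \doteq j$ via the forced $\gtrdot$ in the underlying simple chain). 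Property 2 you actually treat more carefully than the paper, which dismisses it as trivial without noting the induction needed when $s_0 \neq \varepsilon$.
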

\begin{proof}
In the following, we denote by $c_p$ the character labeling word position $p$,
and by writing $\ochain{c_{-1}}{x_0 c_0 x_1 \dots x_n c_n x_{n+1}}{c_{n+1}}$ we imply
$c_{-1}$ and $c_{n+1}$ are the context of a simple or composed chain,
in which either $x_p = \varepsilon$, or $\ochain{c_{p-1}}{x_{p}}{c_p}$ is a chain, for each $p$.
\begin{enumerate}
\item Suppose, by contradiction, that $\chain(i,j)$, $\chain(h,k)$, and $i < h < j$, but $k > j$.
  Consider the case in which $\chain(i,j)$ is the innermost chain whose body contains $h$,
  so it is of the form $\ochain{c_i}{x_0 c_0 \dots c_h x_p c_p \dots c_n x_{n+1}}{c_j}$
  or $\ochain{c_i}{x_0 c_0 \dots c_h x_{n+1}}{c_j}$.
  By the definition of chain, we have either $c_h \doteq c_p$ or $c_h \gtrdot c_j$, respectively.

  Since $\chain(h,k)$, this chain must be of the form
  $\ochain{c_h}{x_p c_p \dots }{c_k}$ or $\ochain{c_h}{x_{n+1} c_j \dots }{c_k}$,
  implying $c_h \lessdot c_p$ or $c_h \lessdot c_j$, respectively.
  This means there is a conflict in the OPM, contradicting the hypothesis that $w$ is an OP word.

  In case $\chain(i,j)$ is not the innermost chain whose body contains $h$,
  we can reach the same contradiction by inductively considering the chain between $i$ and $j$
  containing $h$ in its body.
  Moreover, it is possible to reach a symmetric contradiction with the hypothesis
  $\chain(i,j)$, $\chain(h,k)$, and $i < k < j$, but $i > h$.

\item Trivially follows from the definition of chain.

\item
We prove that only $i_1$ can be s.t.\ $i_1 \lessdot j$ or $i_1 \doteq j$.
Suppose, by contradiction, that for some $r > 1$ we have $i_r \lessdot j$ or $i_r \doteq j$.

If $i_r \lessdot j$, by the definition of chain, $j$ must be part of the body
of another composed chain whose left context is $i_r$.
So, $w$ contains a structure of the form $\ochain{c_{i_r}}{x_0 c_j \dots}{c_k}$
where $|x_0| \geq 1$, $\ochain{c_{i_r}}{x_0}{c_j}$, and $k > j$ is s.t.\ $\chain(i_r, k)$.
This contradicts the hypothesis that $\chain(i_1, j)$,
because such a chain would cross $\chain(i_r,k)$,
contradicting property~\eqref{item:chain-prop-1-proof}.

If $i_r \doteq j$,
then $w$ contains a structure $\ochain{c_{i_{r-1}}}{\dots c_{i_r} x_{i_r}}{c_j}$,
with $|x_{i_r}| \geq 1$ and $\ochain{c_{i_r}}{x_{i_r}}{c_j}$.
By the definition of chain, we have $i_r \gtrdot j$, which contradicts the hypothesis.

Thus, the only remaining alternative for $r > 1$ is $i_r \gtrdot j$.

Similarly, if we had $i_1 \gtrdot j$, the definition of chain would lead to the existence
of a position $h < i_1$ s.t.\ $\chain(h, j)$, which contradicts the hypothesis that
$i_1$ is the leftmost of such positions.
$i_1 \lessdot j$ and $i_1 \doteq j$ do not lead to such contradictions.

\item The proof is symmetric to the previous one.
\qedhere
\end{enumerate}
\end{proof}

\section{Omitted Proofs: Expansion Laws}%
\label{sec:expansion-proofs}

We prove the following expansion laws for POTL:\@
\begin{align}
  \lguntil{t}{\chi}{\varphi}{\psi} &\equiv
    \psi \lor \Big(\varphi \land \big(\lnext^t (\lguntil{t}{\chi}{\varphi}{\psi})
      \lor \lcnext{t} (\lguntil{t}{\chi}{\varphi}{\psi})\big)\Big)%
  \label{eq:expansion-opsuntil} \\
  \lgsince{t}{\chi}{\varphi}{\psi} &\equiv
    \psi \lor \Big(\varphi \land \big(\lback^t (\lgsince{t}{\chi}{\varphi}{\psi})
      \lor \lcback{t} (\lgsince{t}{\chi}{\varphi}{\psi})\big)\Big)%
  \label{eq:expansion-opssince} \\
  \lhuuntil{\varphi}{\psi} &\equiv
    (\psi \land \lcdback \top \land \neg \lcuback \top) \lor
     \big(\varphi \land \lhunext (\lhuuntil{\varphi}{\psi})\big) \\
  \lhusince{\varphi}{\psi} &\equiv
    (\psi \land \lcdback \top \land \neg \lcuback \top) \lor
     \big(\varphi \land \lhuback (\lhusince{\varphi}{\psi})\big) \\
  \lhduntil{\varphi}{\psi} &\equiv
    (\psi \land \lcunext \top \land \neg \lcdnext \top) \lor
     \big(\varphi \land \lhdnext (\lhduntil{\varphi}{\psi})\big) \\
  \lhdsince{\varphi}{\psi} &\equiv
    (\psi \land \lcunext \top \land \neg \lcdnext \top) \lor
     \big(\varphi \land \lhdback (\lhdsince{\varphi}{\psi})\big)
\end{align}

\begin{lem}%
\label{lemma:expansion-law-su}
  Given a word $w$ on an OP alphabet $(\powset{AP}, M_{AP})$,
  two POTL formulas $\varphi$ and $\psi$,
  for any position $i$ in $w$ the following equivalence holds:
  \[
    \lcduntil{\varphi}{\psi} \equiv
    \psi \lor \Big(\varphi \land \big(\ldnext (\lcduntil{\varphi}{\psi})
      \lor \lcdnext (\lcduntil{\varphi}{\psi})\big)\Big).
  \]
\end{lem}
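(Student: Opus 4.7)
The plan is to prove the two implications separately, in each case unfolding the semantics of $\lcduntil{\varphi}{\psi}$ in terms of the downward summary path (DSP) that witnesses it. The crucial structural fact I would use throughout is that, once the source $i$ and the target $j$ are fixed, the DSP from $i$ to $j$ is uniquely determined by its recursive rule; hence, given any DSP $i_1 < i_2 < \dots < i_n$, every suffix $i_p < \dots < i_n$ is itself the DSP from $i_p$ to $i_n$.

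For the forward direction, I would assume $\lcduntil{\varphi}{\psi}$ holds at $i$, witnessed by a DSP $i = i_1 < \dots < i_n = j$ with $\varphi$ at each $i_k$ for $k<n$ and $\psi$ at $i_n$. If $n=1$ then $\psi$ holds at $i$ and the $\psi$-disjunct of the right-hand side is satisfied. If $n > 1$ then $\varphi$ holds at $i$, and by the suffix property above, $\lcduntil{\varphi}{\psi}$ holds at $i_2$. I would then split on which clause of the DSP definition produced $i_2$: if $i_2 = i+1$ with $i \lessdot (i+1)$ or $i \doteq (i+1)$, then $\ldnext(\lcduntil{\varphi}{\psi})$ holds at $i$; if instead $i_2 = \max\{h \leq j \mid \chain(i,h) \land (i \lessdot h \lor i \doteq h)\}$, then $\lcdnext(\lcduntil{\varphi}{\psi})$ holds at $i$.

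For the backward direction, I would argue by case analysis on the three disjuncts of the right-hand side. The case $\psi$ at $i$ gives the trivial DSP of length one. In the case $\varphi \land \ldnext(\lcduntil{\varphi}{\psi})$ at $i$, I would take the DSP $i+1 = i'_1 < \dots < i'_m = j$ witnessing $\lcduntil{\varphi}{\psi}$ at $i+1$ and attempt to prepend $i$. If the DSP rule at $i$ with target $j$ does not trigger its max-chain clause, the concatenation is directly the DSP from $i$ to $j$. Otherwise the DSP from $i$ must jump to some $k = \max\{h \leq j \mid \chain(i,h) \land (i \lessdot h \lor i \doteq h)\} > i+1$; I would then show that the inner DSP from $i+1$ to $j$ must visit $k$, so that its suffix from $k$ onwards combines with $i, k$ to form a DSP from $i$ to $j$ that still witnesses $\lcduntil{\varphi}{\psi}$. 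The case $\varphi \land \lcdnext(\lcduntil{\varphi}{\psi})$ is analogous, using the chain endpoint $j$ supplied by $\lcdnext$ as a candidate and its maximum $M \geq j$ as the actual jump target of the DSP from $i$.

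The main obstacle is precisely this rejoining step, i.e.\ showing that the inner DSP must pass through the dominating chain endpoint $k$ (respectively $M$). I plan to resolve it by invoking the non-crossing property of the chain relation (property~\ref{item:never-cross}): since $\chain(i,k)$ exists, any chain originating strictly inside the body $(i,k)$ has its right context at most $k$, so neither an increment step nor a chain-jump from a position in $(i,k)$ can land strictly beyond $k$. Consequently the first position $\geq k$ visited by the inner DSP must equal $k$, which is exactly what the rejoining argument needs.
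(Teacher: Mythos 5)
Your proposal is correct and follows essentially the same route as the paper's proof: both directions rest on the fact that every suffix of a DSP is the DSP between its endpoints, a case split on which clause of the DSP definition produces the first step, and the non-crossing property of $\chain$ to justify prepending $i$ (via the rejoining-at-$k$ argument) in the backward direction. The only, harmless, divergence is in the backward $\ldnext$ case, where the paper instead observes that no chain $\chain(i,h)$ with $h \le j$ can exist there (so prepending $i$ is immediate), whereas you reuse the same rejoining argument as in the $\lcdnext$ case; both are sound.
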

\begin{proof}
  \textbf{[Only if]}
  Suppose $\lcduntil{\varphi}{\psi}$ holds in $i$.
  If $\psi$ holds in $i$, the equivalence is trivially verified.
  Otherwise, $\lcduntil{\varphi}{\psi}$ is verified by a DSP
  $i = i_0 < i_1 < \dots < i_n = j$ with $n \geq 1$,
  s.t.\ $(w,i_p) \models \varphi$ for $0 \leq p < n$ and $(w,i_n) \models \psi$.
  Note that, by the definition of DSP, any suffix of that path is also a DSP ending in $j$.
  Consider position $i_1$: $\varphi$ holds in it, and it can be either
  \begin{itemize}
  \item $i_1 = i+1$.
    Then either $i \lessdot (i+1)$ or $i \doteq (i+1)$,
    and path $i_1 < i_2 < \dots < i_n = j$ is the DSP between $i_1$ and $j$,
    and $\varphi$ holds in all $i_p$ with $1 \leq p < n$, and $\psi$ in $j_n$.
    So, $\lcduntil{\varphi}{\psi}$ holds in $i_1$,
    and $\ldnext (\lcduntil{\varphi}{\psi})$ holds in $i$.
  \item $i_1 > i+1$.
    Then, $\chain(i,i_1)$, and $i \lessdot i_1$ or $i \doteq i_1$.
    Since $i_1 < i_2 < \dots < i_n = j$ is the DSP from $i_1$ to $j$,
    $\lcduntil{\varphi}{\psi}$ holds in $i_1$,
    and so does $\lcdnext (\lcduntil{\varphi}{\psi})$ in $i$.
  \end{itemize}

\noindent
  \textbf{[If]}
  Suppose the right-hand side of the equivalence holds in $i$.
  The case $(w,i) \models \psi$ is trivial, so suppose $\psi$ does not hold in $i$.
  Then $\varphi$ holds in $i$, and either:
  \begin{itemize}
  \item $\ldnext (\lcduntil{\varphi}{\psi})$ holds in $i$.
    Then, we have $i \lessdot (i+1)$ or $i \doteq (i+1)$,
    and there is a DSP $i+1 = i_1 < i_2 < \dots < i_n = j$,
    with $\varphi$ holding in all $i_p$ with $1 \leq p < n$, and $\psi$ in $i_n$.
    \begin{itemize}
    \item If $i \doteq (i+1)$, it is not the left context of any chain,
      and $i = i_0 < i_1 < i_2 < \dots < i_n$ is a DSP satisfying $\lcduntil{\varphi}{\psi}$ in $i$.
    \item Otherwise, let $k = \min\{h \mid \chain(i,h)\}$:
      we have $k > j$, because a DSP cannot cross right chain contexts.
      So, adding $i$ to the DSP generates another DSP,
      because there is no position $h$ s.t.\ $\chain(i,h)$ with $h \leq j$,
      and the successor of $i$ in the path can only be $i_1 = i+1$.
    \end{itemize}
  \item $\lcdnext (\lcduntil{\varphi}{\psi})$ holds in $i$.
    Then, there exists a position $k$ s.t.\ $\chain(i,k)$
    and $i \mathrel\lessdot k$ or $i \doteq k$
    and $\lcduntil{\varphi}{\psi}$ holds in $k$,
    because of a DSP $k = i_1 < i_2 < \dots < i_n = j$.
    If $k = \max\{h \mid h \leq j \land \chain(i,h) \land (i \lessdot h \lor i \doteq h)\}$,
    then $i = i_0 < i_1 < i_2 < \dots < i_n$ is a DSP by definition,
    and since $\varphi$ holds in $i$, $\lcduntil{\varphi}{\psi}$ is satisfied in it.
    Otherwise, let
    $k' = \max\{h \mid h \leq j \land \chain(i,h) \land (i \lessdot h \lor i \doteq h)\}$.
    Since $i_1 > i$ and chains cannot cross,
    there exists a value $q$, $1 < q \leq n$, s.t.\ $i_q = k'$.
    Thus $i_q < i_{q+1} < \dots < i_n = j$ is a DSP,
    so $\lcduntil{\varphi}{\psi}$ holds in $i_q$ too.
    The path $i < i_q < \dots < i_n$ is a DSP, and $\lcduntil{\varphi}{\psi}$ holds in $i$.
\qedhere
  \end{itemize}
\end{proof}

The proofs for the summary since and upward summary until operators are symmetric.

\begin{lem}%
\label{lemma:expansion-law-hu}
  Given a word $w$ on an OP alphabet $(\powset{AP}, M_{AP})$,
  and two POTL formulas $\varphi$ and $\psi$,
  for any position $i$ in $w$ the following equivalence holds:
  \[
    \lhuuntil{\varphi}{\psi} \equiv
    (\psi \land \lcdback \top \land \neg \lcuback \top) \lor
     \big(\varphi \land \lhunext (\lhuuntil{\varphi}{\psi})\big).
  \]
\end{lem}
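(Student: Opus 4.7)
The plan is to mirror the structure of the expansion-law proof for the downward summary until (Lemma~\ref{lemma:expansion-law-su}), replacing downward summary paths with upward hierarchical paths and exploiting the structural properties of the $\chain$ relation given by Lemma~\ref{lemma:chain-prop}.

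For the \emph{only-if} direction, I would unpack the hypothesis $(w, i) \models \lhuuntil{\varphi}{\psi}$ into a witnessing UHP $i = i_1 < i_2 < \dots < i_n$ with a common left context $h < i$ satisfying $\chain(h, i_p)$ and $h \lessdot i_p$ for each $p$, $\psi$ at $i_n$, and $\varphi$ at $i_p$ for $p < n$, and then split on $n$. For $n = 1$ the UHP is the singleton $\{i\}$ with $\psi$ at $i$; here properties~3 and~4 of Lemma~\ref{lemma:chain-prop} are used to derive $\lcdback \top$ (witnessed by $h$ itself) and $\neg \lcuback \top$ (ruling out the inner $\gtrdot$-contexts that property~3 would force to coexist with an additional left context). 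For $n \geq 2$, $\varphi$ holds at $i$ and the tail $i_2 < \dots < i_n$ is itself a UHP under the same $h$, so $\lhuuntil{\varphi}{\psi}$ holds at $i_2$; the tightness clause of the UHP definition (no intervening $k$ with $\chain(h, k)$) ensures that $i_2$ is exactly the position reached by $\lhunext$ at $i$, giving $\lhunext(\lhuuntil{\varphi}{\psi})$.

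For the \emph{if} direction, I would split on the right-hand disjunct. The base disjunct combines $\psi$ at $i$ with the structural condition, from which I would extract a left context $h < i$ with $\chain(h, i)$ and $h \lessdot i$, so that $\{i\}$ is itself a singleton UHP witnessing $\lhuuntil{\varphi}{\psi}$ at $i$. The inductive disjunct provides $\varphi$ at $i$ together with positions $h < i < j$ such that $\chain(h, i)$, $\chain(h, j)$, $h \lessdot i$ and $h \lessdot j$, no intervening $k$ with $\chain(h, k)$, and $\lhuuntil{\varphi}{\psi}$ at $j$. Unpacking the latter into a UHP starting at $j$ with some left context $h'$, I would invoke the non-crossing property of $\chain$ (Lemma~\ref{lemma:chain-prop}, item~1) together with property~3 to identify $h = h'$, and then prepend $i$ to obtain a valid UHP from $i$ witnessing $\lhuuntil{\varphi}{\psi}$ at $i$.

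The main obstacle will be the base case: justifying that $\lcdback \top \land \neg \lcuback \top$ correctly encodes the structural requirement for $\{i\}$ to be a singleton UHP. This requires a careful case analysis on the left contexts of $i$ organized by their PR with $i$, using properties~3 and~4 of the $\chain$ relation to show that the POTL-level condition pinpoints exactly the situations in which some $h < i$ satisfies $\chain(h, i)$ and $h \lessdot i$.
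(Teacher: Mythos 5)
Your proposal follows the paper's proof essentially step for step: both directions are split on whether the witnessing UHP is a singleton, the inductive cases proceed by taking the tail of the path (only-if) or prepending $i$ to the path obtained from $\lhunext(\lhuuntil{\varphi}{\psi})$ (if), exactly as in the paper. Your explicit identification $h = h'$ of the two left contexts in the if-direction is a point of added rigor that the paper silently skips; note that property~3 of Lemma~\ref{lemma:chain-prop} alone suffices there (a position has at most one left context in the $\lessdot$ relation with it), so the non-crossing property is not actually needed.

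The base case you single out as ``the main obstacle'' is indeed where both your plan and the paper's own argument are thinnest, and the biconditional you intend to prove there --- that $\lcdback \top \land \neg \lcuback \top$ holds at $i$ exactly when some $h < i$ satisfies $\chain(h,i)$ and $h \lessdot i$ --- is only provable in one direction. The direction needed for the if-part is fine: $\lcdback \top$ yields $h$ with $\chain(h,i)$ and $h \lessdot i$ or $h \doteq i$, and $\neg \lcuback \top$ excludes $h \doteq i$, so $\{i\}$ is a singleton UHP. The converse, needed for the only-if part, does not follow from properties~3--4: from $\chain(h,i)$ with $h \lessdot i$ you do get $\lcdback \top$, but $\neg \lcuback \top$ additionally requires that $i$ have \emph{no other} left contexts, and property~3 allows an outermost $\lessdot$-context to coexist with inner $\gtrdot$-contexts. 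For instance, with an OPM satisfying $a \lessdot b$, $b \lessdot d$, $d \gtrdot c$, $b \gtrdot c$, $a \lessdot c$, the word $\#\,a\,b\,d\,c\,\#$ yields both $\chain$ between $a$ and $c$ (with $a \lessdot c$) and between $b$ and $c$ (with $b \gtrdot c$), so at the position of $c$ both $\lcdback\top$ and $\lcuback\top$ hold even though $\{c\}$ is a legitimate singleton UHP. (This cannot happen with $M_\lcall$, where no symbol yields precedence to $\lthrow$ or $\lret$.) The paper's proof simply asserts ``so does $\lcdback \top$, but $\lcuback \top$ does not'' at this point, so your planned case analysis on the left contexts of $i$ will run into exactly the same wall; closing it requires either a restriction on the OPM or a different encoding of the base case, not merely a more careful use of Lemma~\ref{lemma:chain-prop}.
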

\begin{proof}
  \textbf{[Only if]}
  Suppose $\lhuuntil{\varphi}{\psi}$ holds in $i$.
  Then, there exists a path $i = i_0 < i_1 < \dots < i_n$, $n \geq 0$,
  and a position $h < i$ s.t.\ $\chain(h,i_p)$ and $h \lessdot i_p$ for each $0 \leq p \leq n$,
  $\varphi$ holds in all $i_q$ for $0 \leq q < n$, and $\psi$ holds in $i_n$.
  If $n = 0$, $\psi$ holds in $i = i_0$, and so does $\lcdback \top$, but $\lcuback \top$ does not.
  Otherwise, the path $i_1 < \dots < i_n$ is also a UHP,
  so $\lhuuntil{\varphi}{\psi}$ is true in $i_1$.
  Therefore, $\varphi$ holds in $i$, and so does $\lhunext (\lhuuntil{\varphi}{\psi})$.

  \textbf{[If]}
  If $\lcdback \top$ holds in $i$ but $\lcuback \top$ does not,
  then there exists a position $h < i$ s.t.\ $\chain(h,i)$ and $h \lessdot i$.
  If $\psi$ also holds in $i$, then $\lhuuntil{\varphi}{\psi}$ is trivially satisfied
  in $i$ by the path made of only $i$ itself.
  Otherwise, if $\lhunext (\lhuuntil{\varphi}{\psi})$ holds in $i$,
  then there exist a position $h < i$ s.t.\ $\chain(h,i)$ and $h \lessdot i$,
  and a position $i_1$ which is the minimum one s.t.\ $i_1 > i$, $\chain(h,i_1)$ and $h \lessdot i_1$.
  In $i_1$, $\lhuuntil{\varphi}{\psi}$ holds,
  so it is the first position of a UHP $i_1 < i_2 < \dots < i_n$.
  Since $\varphi$ also holds in $i$, the path $i = i_0 < i_1 < \dots < i_n$ is also a UHP,
  satisfying $\lhuuntil{\varphi}{\psi}$ in $i$.
\end{proof}
The proofs for the other hierarchical operators are analogous.

\end{document}